\DeclareMathAlphabet\scr{U}{scr}{m}{n}
\SetMathAlphabet\scr{bold}{U}{scr}{b}{n}
\DeclareFontFamily{U}{scr}{\skewchar\font'177}
\DeclareFontShape{U}{scr}{m}{n}{<-6>rsfs5<6-8>rsfs7<8->rsfs10}{}
\DeclareFontShape{U}{scr}{b}{n}{<-6>rsfs5<6-8>rsfs7<8->rsfs10}{}
\def\softl{l\kern-0.35ex\raise0.1ex\hbox{'}\kern-0.15ex}
\newtheorem{assumption}[theorem]{Assumption}
\newtheorem{remark}[theorem]{Remark}
\newtheorem{example}[theorem]{Example}
\newcommand{\simple}{\ensuremath{\mathcal H}}
\newcommand{\bounded}{\ensuremath{\mathcal{H}^{b}}}
\newcommand{\boundedU}{\ensuremath{\mathcal{H}^{\hat{U}}}}
\newcommand{\boundedW}{\ensuremath{\mathcal{H}^{W}}}
\newcommand{\adm}{\ensuremath{\overline{\mathcal H}}}
\newcommand{\super}{\ensuremath{\mathcal H}^{\mathrm{s}}}
\newcommand{\measures}{\ensuremath{\mathcal{M} \cap P_V}}
\newcommand{\emeasures}{\ensuremath{\mathcal{M}^e \cap P_V}}
\newcommand{\optf}{\ensuremath{ \widehat{f}}}
\newcommand{\optH}{\ensuremath{ \widehat{H}}}
\newcommand{\optQ}{\ensuremath{ \widehat{Q}}}
\newcommand{\opty}{\ensuremath{ \widehat{y}}}
\newcommand{\Uhat}{{\widehat{U}}}
\newcommand{\SM}{\scr{S}^{M^{\Psi}}}
\newcommand{\xbar}{\ensuremath{\overline{x}}}
\newcommand{\I}{\mathcal{I}}
\newcommand{\SsigPsi}{\scr{S}\!\!_{\sigma}\!^\Psi}
\newcommand{\SIPsi}{\scr{S}_{\!\I}^\Psi}
\newcommand{\SsigMPsi}{\scr{S}\!\!_{\sigma}\!^{M^{\Psi}}}
\newcommand{\SIMPsi}{\scr{S}_{\!\I}^{M^\Psi}}
\newcommand{\SsigU}{\scr{S}\!\!_{\sigma}\!^{\Uhat}}
\newcommand{\SsigMU}{\scr{S}\!\!_{\sigma}\!^{M^{\Uhat}}}
\newcommand{\SIU}{\scr{S}_{\!\I}^{\Uhat}}
\newcommand{\SIMU}{\scr{S}_{\!\I}^{M^{\Uhat}}}
\newcommand{\Orlicz}{\ensuremath{ L^{\Uhat}}}
\newcommand{\nada}[1]{}
\begin{document}
\title{Admissible strategies
in  semimartingale portfolio selection\thanks{To appear in SIAM J. Control Optim.}}
\author{ Sara Biagini\thanks{University of Pisa ({\tt sara.biagini@ec.unipi.it}). Part of this research was conducted 
while visiting Collegio Carlo Alberto in Moncalieri, Turin, Italy in Spring 2009. Warm hospitality and financial support 
of the Collegio are gratefully acknowledged.}
 \and Ale\v{s} \v{C}ern\'{y}\thanks{Cass Business School, City University London ({\tt Ales.Cerny.1@city.ac.uk}).}} 
\maketitle
\begin{center}
\emph{\small Dedicated to Walter Schachermayer on the occasion of his 60th birthday.}\vspace*{0.5cm}\\
\end{center}
\begin{abstract}
The choice of admissible trading strategies in mathematical
modelling of financial markets is a delicate issue, going back to
Harrison and Kreps \cite{hk79}. In the context of optimal
portfolio selection with expected utility preferences this question has been the focus of considerable
attention over the last twenty years.

We propose a novel notion of admissibility that has many pleasant
features -- admissibility is characterized purely under the
objective measure $P$;  each admissible strategy can be approximated
by simple strategies using finite number of trading dates;
the wealth of any admissible strategy is a
supermartingale under all pricing measures; local boundedness of
the price process is not required; neither strict monotonicity,
strict concavity nor differentiability of the utility function
are necessary;  the definition encompasses both the classical
mean-variance preferences and the monotone expected utility.

For utility functions finite on $\mathbb{R}$, our class represents a minimal set containing
simple strategies which also contains the optimizer, under
conditions that are milder than the celebrated
\emph{reasonable asymptotic elasticity} condition on the utility
function.
\end{abstract}

\begin{keywords}
utility maximization, non locally bounded semimartingale, incomplete market,
$\sigma$-localization and $\I$-localization,  $\sigma
$-martingale measure, Orlicz space, convex duality
\end{keywords}

\begin{AMS}primary 60G48, 60G44, 49N15, 91B16; secondary 46E30, 46N30\end{AMS}

{\footnotesize\textbf{\!\!JEL subject classifications.} G11, G12, G13}

\pagestyle{myheadings}
\markboth{S. BIAGINI AND A. \v{C}ERN\'{Y}}{ADMISSIBILITY IN SEMIMARTINGALE PORTFOLIO SELECTION} 
 
\section{Introduction}

A central concept of financial theory is the notion of
a self-financing investment strategy $H$,  whose discounted wealth  is expressed
mathematically by the stochastic integral
$$ x+H\cdot S_t:=x+\int_{(0,t]} H_sdS_s,$$ where $S$ is a semimartingale process on a stochastic basis $(\Omega, (\mathcal{F}_t)_{0\leq t \leq T}, P)$,
representing discounted prices of $d$ traded assets, and $x$ is the
initial wealth.

Stochastic integration theory formulates minimal requirements for the integral above to exist,
see Protter \cite{Pr05}. The class of predictable processes $H$ for
which the integral exists is denoted by $L(S;P)$ or simply $L(S)$.
However,  the \emph{whole} of  $L(S)$ is
\emph{not appropriate} for financial applications. Specifically,
Harrison and Kreps \cite{hk79} noted that when  all processes in $L(S)$ are allowed as trading strategies, arbitrage
opportunities arise even in the standard Black-Scholes model. This
is not a problem of the model $S$ -- the reason is that the theory of
stochastic integration   operates with a set of integrands
far too rich for such applications.   The solution proposed by the subsequent no-arbitrage literature, see \cite{Sch94,ds98},
is to restrict attention to a subset $\bounded\subseteq L(S)$ of strategies whose wealth is bounded uniformly from below by a constant.

Now consider a concave non decreasing utility function $U$ and an agent who wishes to maximize the expected utility of her terminal wealth, $E[U(x+H\cdot S_T)]$.
In this context,  $\mathcal{A}\subseteq L(S)$ will be a good set
of trading strategies if the utility maximization over $H\in\mathcal{A}$ is well posed and if $\mathcal{A}$ contains the
optimizer,
$$U(+\infty)>\sup_{H\in\mathcal{A}} E[U(x+H\cdot S_T)] = \max_{H\in\mathcal{A}} E[U(x+H\cdot S_T)].$$

Historically, the search for a good definition of admissibility has proved to be a difficult task
and it has evolved in two streams.
For utility functions finite on a half-line, for example a logarithmic utility, there is a natural
 definition:  admissible strategies are again those
 in $\bounded$, see \cite{ks99,CSW01,ks03}. Remarkably, this theoretical framework is valid for \emph{any} arbitrage-free $S$.

 For utility functions finite on the whole $\mathbb{R}$, the situation is more complicated.
 The definition of admissibility via $\bounded$ works only to a certain
 extent.  Here $S$ has to be locally bounded (or $\sigma$-bounded) to ensure that $\bounded$ is sufficiently rich for a duality framework to work,  cf. \cite{Sch01}.
 Moreover, the class $\bounded$ will typically fail to contain the optimizer -- this happens, for example,
in the classical Black-Scholes model under exponential utility.

 A possible choice in this situation is  to consider all strategies whose wealth is a martingale under all
 suitably defined pricing measures (see \S \ref{sect: simple}).
 This approach works well for exponential utility, cf. \cite{dgrsss02,kabstr02}. However, the
 seminal work of Schachermayer \cite{Sch03} shows that, for general utilities, the martingale class is too narrow to catch the optimizer. The optimal strategy only exists
 among strategies whose wealth is a \emph{supermartingale} under all pricing
 measures.  For this reason,   the \emph{supermartingale class} is  now considered
the  best  notion of admissibility.

It is evident from our discussion that admissibility is currently
defined in a primal way for utility functions
 finite on $\mathbb{R}_+$ but for utilities finite
 on $\mathbb{R}$ \emph{the definition is dual, via pricing measures}.
 A connection between the two approaches is foreshadowed in Schachermayer \cite{Sch01}
 who defines  a set of \emph{admissible terminal wealths}  as those positions  whose utility can be approximated in $L^1(P)$ by strategies with wealth bounded from below. Under suitable technical assumptions, the optimal wealth exists and there is a trading strategy in the supermartingale class which leads to the optimal wealth, see also Owen \cite{O02} and Bouchard et al. \cite{btz04}.

All of the papers above dealing with utility finite on $\mathbb{R}$ use \emph{locally bounded price
processes}.  Biagini and Frittelli \cite{bf05} employ a wider class of well-behaved price processes compatible with the utility $U$. In \cite{bf07} they show that for this class of price processes there is always an optimizer in Schachermayer's set of supermartingale strategies.  In a subsequent paper \cite{bf08},
they propose a unified treatment for utility functions finite on a
half-line as well as those finite on the whole $\mathbb{R}$, for an even wider class of semimartingales $S$.
As we show in \S \ref{sect: simple} their hypotheses on $S$  amount to our Assumption \ref{maximalS}.
In contrast to the present paper, \cite{bf08} use admissible strategies $\boundedW$
whose wealth is controlled from below by (a multiple of) an
exogenously given, fixed random variable $W>0$. When $W$ is constant, one recovers the usual set $\bounded$ of
strategies with wealth bounded uniformly from below. Here, too, the optimal strategy
may fail to be in $\boundedW$, there is no approximation result for the optimizer,
and when $S$ is not particularly well behaved the optimizer may in principle depend on the choice of the loss control $W$.

The philosophy of the present paper is to make the definition of admissibility general enough to provide
a ``unified  treatment'' of utility functions
in the spirit of \cite{bf08}, while keeping the definition as natural and intuitive as possible
by \emph{not} resorting to duality.  We use a bottom-up approach whereby we first define a class of well-behaved simple trading strategies $\simple$ which can be interpreted as buy-and-hold strategies over finitely many dates (see Definition \ref{simple strategies} for details). In the locally bounded case $\simple$ corresponds to buy-and-hold strategies whose wealth is uniformly bounded in absolute value. We then define admissible strategies $\adm$ as suitable limits of strategies in $\simple$.

\begin{definition}\label{adm}
  $H\in L(S)$ is an \emph{admissible integrand} if  $ U (H\cdot S_T) \in L^1(P)$ and if
  there exists an approximating sequence $(H^n)_n$ in
 $\simple $ such that:
\begin{romannum}
    \item   $H^n\cdot S_t \rightarrow  H\cdot S_t $ in probability
    for all $t \in [0,T]$;
    \item  $ U( H^n\cdot S_T )  \rightarrow  U( H\cdot S_T ) $ in
    $L^1(P)$.
\end{romannum}
The set of all admissible integrands is denoted  by $\adm$.
\end{definition}

 The two requirements above are natural assumptions
if considered \emph{separately}.
Item (i) is in the spirit of  the construction of the stochastic integral itself,
 while item (ii) ensures that utility of an admissible strategy can be approximated   by
 the utility from simple strategies.
  Definition \ref{adm} combines these  \emph{two desirable approximation features} together.

The key point of the present paper is that
we do not ask for approximation of terminal utility \emph{only}, as is done in \cite{Sch01, O02, btz04}, but
we also require an approximation of the wealth process  at intermediate times,
 as in \v{C}ern\'{y} and Kallsen \cite[Definition 2.2]{ck07}. What is more, our definition does not rely on regularity properties of $U$, such as strict concavity, strict monotonicity or differentiability.

Our results then follow rather smoothly:  $\adm$  is a subset of the supermartingale class (Proposition \ref{admsuper}) and the optimizer belongs to  $\adm$ under very mild conditions, as shown in the main  Theorem \ref{main theorem}. Therefore, as a byproduct, we also obtain an
extremely compact proof of the supermartingale property of the
optimal solution.

The paper is organized as follows. In \S 2.1-\S 2.3 there are
basic definitions from convex analysis, theory of Orlicz spaces
and stochastic integration. Section 2.4 contains a new result on
$\sigma$-lo\-ca\-li\-za\-tion. In \S 3.1 and \S 3.2 we discuss
conditions imposed on the price process $S$ and the corresponding
definitions of simple strategies. In \S 3.3 we prove the
martingale property of simple strategies. In \S 3.4 we define
the admissible strategies and prove their supermartingale
property.
 In \S 4.1 and \S 4.2 we  discuss the customary conditions
 of \emph{reasonable asymptotic elasticity}
and other related conditions used in the literature and we
contrast them with a weaker Inada
condition at $+\infty$ employed in this paper. The main result
(Theorem \ref{main theorem}) is stated and proved in \S 4.3.  Section 5 provides more details on the main assumptions and
on the advantages of our framework compared to the existing literature. Section 6 contains technical lemmata.

\section{Mathematical preliminaries}
\subsection{Utility functions}\label{SecU}

A utility function $U$ is a proper, concave, non-decreasing, upper
semicontinuous function. Its effective domain is the non-empty set
\begin{equation}\label{effective domain}
\mathrm{dom}\,U:=\{ x\mid U(x)>-\infty\}.
\end{equation}
The infimum  of the effective domain of $U$ is denoted by
\begin{equation}\label{x_under}
\underline{x}:=\inf(\mathrm{dom}\,U).
\end{equation}

Let $U(+\infty):=\lim_{x\to +\infty}U(x)$ and define
\begin{equation}\label{xbar}
\xbar:=\inf\{x\mid U(x)=U(+\infty)\}.
\end{equation}
In the economic literature $\xbar$ is known as the \emph{satiation point} or
\emph{bliss point}. For strictly increasing utility functions $\xbar=+\infty$, while
for truncated utility functions, which feature for example in shortfall risk minimization,
$\xbar<+\infty$ represents a point where further increase in wealth does not produce
additional enjoyment in terms of utility. In economics this is interpreted as the
point of maximum satisfaction, or bliss.

By construction $\underline{x}\leq\xbar$ and the equality arises only when $U$
is constant on its entire effective domain in which case the utility maximization problem is trivial
since ``doing nothing'' is always optimal. Therefore, modulo a translation, the following assumption entails no loss of generality.
\begin{assumption}
 {$\underline{x}<0<\xbar$} and $U(0)=0$.
\end{assumption}

The convex conjugate of $U$ is defined by
$$ V(y) := \sup_{x\in\mathbb{R}} \{U(x) - xy \}.$$
  Our assumptions on $U$ imply that  $V$ is a proper, convex, lower semi-continuous function, equal to
$+\infty$ on $(-\infty, 0)$, and it verifies $V(0)=U(+\infty)$.
  For example, with exponential utility one obtains the following conjugate pair of functions $U,V$:
\begin{equation}\label{exponentialU}
U(x)=1-e^{- x}; \quad\quad V(y) = y\ln y - y + 1.
\end{equation}
In the sequel we will often exploit the following form of the Fenchel inequality, obtained as a
simple consequence of the definition of $V$:
\begin{equation}\label{fenchel}
U(x)\leq xy + V(y).
\end{equation}

\subsection{Young functions, Orlicz spaces and the Orlicz space induced by $U$}\label{Orlicz}
We recall basic facts on Young functions and induced Orlicz spaces. The  interested reader  is referred to the monographs by Rao and Ren \cite{RR} and Krasnose{\softl}skii and Rutickii \cite{KraRut61}  for proofs.

A Young function $\Psi: \mathbb{R}\rightarrow [0, +\infty]$ is an even, convex  and lower semicontinuous function with the properties:
$$ \text{(i)}\,\,\Psi(0)=0;\quad \text{(ii)}\,\,\Psi(+\infty)=+\infty;\quad \text{(iii)}\,\,\Psi<+\infty \text{ on an open neighborhood of } 0. $$
Note that $\Psi$ may jump to $+\infty$ outside a bounded neighborhood of $0$, but when $\Psi$ is finite valued,  it is also continuous by
convexity. In either case, $\Psi$ is  nondecreasing over $\mathbb{R}_+$ and \emph{countably convex} (see Lemma \ref{CC}).

The Orlicz space $L^\Psi $  induced  by $\Psi$ on
$(\Omega, \mathcal{F}_T, {P})$ is defined as
\begin{equation*}
{{L}^{\Psi }}=\{X \in L^{0}\mid  E[\Psi (c
X)]<+\infty \, \text{for some } c>0\}.
\end{equation*}
It is a Banach space when endowed with the Luxemburg (gauge) norm
\begin{equation*}
N_{\Psi }(X)=\inf \left\{ k >0\mid E\left[ \Psi \left( \frac{X}{k}\right) \right] \leq 1\right\}.
\end{equation*}
Orlicz spaces are generalizations of $L^p$ spaces whereby $\Psi(x) = |x|^p, p\geq1$ yields
$L^\Psi\equiv L^p,$ while $\Psi(x)=I_{\{|x|\leq1\}}$ induces the space $L^\infty$ with the supremum norm.
Intuitively, the faster $\Psi$ increases to $+\infty$ the smaller the space $L^\Psi$
and the stronger its topology. It is also clear that two \emph{distinct} choices of the Young
function may give rise to \emph{isomorphic} Orlicz spaces, the Luxemburg norms being equivalent. These statements
are made precise by the following definition and theorem.
\begin{definition}[Krasnose\softl skii and Rutickii]
Let $\Psi_1$ and $\Psi_2$ be two Young functions. We write $\Psi_1\succeq\Psi_2$, if there are constants $\lambda>0$ and $x_0$ such that for $x\geq x_0$,
$$ \Psi_1(\lambda x)\geq \Psi_2(x).$$ We say that $\Psi_1$ and $\Psi_2$ are equivalent if $\Psi_1\succeq\Psi_2$ and $\Psi_1\preceq\Psi_2$.
\end{definition}
\begin{theorem}[Krasnose\softl skii and Rutickii]
The following statements are equivalent:
\begin{romannum}
\item $\Psi_1\succeq\Psi_2$;
\item $L^{\Psi_1}\hookrightarrow  L^{\Psi_2}$;
\item there is $\lambda>0$ such that
$$ N_{\Psi_2}(X)\leq \lambda N_{\Psi_1}(X) \text{ for all } X\in L^{\Psi_1}.$$
\end{romannum}
\end{theorem}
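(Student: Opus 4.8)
This is the classical Krasnose{\softl}skii--Rutickii embedding theorem; complete proofs are in \cite{RR,KraRut61}, and the plan is to run the cycle (i)$\Rightarrow$(iii)$\Rightarrow$(ii)$\Rightarrow$(i). Before starting I would make two harmless reductions. If either $\Psi_j$ fails to be finite valued the claim is routine (then $L^{\Psi_j}=L^\infty$ up to renorming, and (i)--(iii) hold or fail together according to whether the other Young function is likewise non--finite valued), so assume $\Psi_1,\Psi_2$ finite, hence continuous; replacing $\Psi_j(x)$ by the equivalent Young function $\Psi_j(x)+\delta_j|x|$ -- which alters neither $L^{\Psi_j}$, nor the Luxemburg norm up to equivalence, nor the relation $\succeq$ -- I may also assume each $\Psi_j$ strictly increasing, so that it is a homeomorphism of $[0,\infty)$ with a genuine inverse $\Psi_j^{-1}$ satisfying $\Psi_j(\Psi_j^{-1}(s))=\Psi_j^{-1}(\Psi_j(s))=s$. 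I will use repeatedly the convexity bounds $\Psi(\theta x)\le\theta\Psi(x)$ for $\theta\in[0,1]$ and $\Psi(\theta x)\ge\theta\Psi(x)$ for $\theta\ge1$, and the Chebyshev-type estimate $P(|X|>\varepsilon)\le 1/\Psi(\varepsilon/N_\Psi(X))$ coming from $E[\Psi(X/N_\Psi(X))]\le1$; the latter also shows that Luxemburg-norm convergence implies convergence in probability.

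For (i)$\Rightarrow$(iii) I would fix $\lambda,x_0$ with $\Psi_1(\lambda x)\ge\Psi_2(x)$ for $x\ge x_0$, and for $X\ne0$ put $r:=N_{\Psi_1}(X)$ (so $E[\Psi_1(X/r)]\le1$). Taking $C\ge\lambda$ and $Y:=X/(Cr)$, I split $E[\Psi_2(Y)]$ over $\{|Y|\le\varepsilon\}$, $\{\varepsilon<|Y|\le x_0\}$, $\{|Y|>x_0\}$: the first contribution is $\le\Psi_2(\varepsilon)$, the second is $\le\Psi_2(x_0)P(|Y|>\varepsilon)\le\Psi_2(x_0)/\Psi_1(C\varepsilon)$ by Chebyshev, and on the third $\Psi_2(Y)\le\Psi_1(\lambda Y)=\Psi_1\big(\tfrac\lambda C\tfrac Xr\big)\le\tfrac\lambda C\Psi_1(X/r)$, whence
\[
E\big[\Psi_2\big(\tfrac{X}{Cr}\big)\big]\le\Psi_2(\varepsilon)+\frac{\Psi_2(x_0)}{\Psi_1(C\varepsilon)}+\frac{\lambda}{C}.
\]
Choosing $\varepsilon$ with $\Psi_2(\varepsilon)\le\tfrac13$ and then $C$ so large that the last two terms are each $\le\tfrac13$ makes the right side $\le1$, i.e. $N_{\Psi_2}(X)\le C\,N_{\Psi_1}(X)$ with $C$ independent of $X$. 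The implication (iii)$\Rightarrow$(ii) is then immediate, since (iii) already contains the inclusion $L^{\Psi_1}\subseteq L^{\Psi_2}$ together with boundedness of the inclusion map.

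For (ii)$\Rightarrow$(i), suppose $\Psi_1\not\succeq\Psi_2$. Then there are $s_n\uparrow\infty$ with $\Psi_1^{-1}(s_n)\ge n\,\Psi_2^{-1}(s_n)$: if not, some $n_0,M$ would give $\Psi_1^{-1}(s)<n_0\Psi_2^{-1}(s)$ for all $s\ge M$, and writing $u=\Psi_2^{-1}(s)$ (so $s=\Psi_2(u)$) this forces $\Psi_1(n_0u)>\Psi_2(u)$ for all $u\ge\Psi_2^{-1}(M)$, i.e. $\Psi_1\succeq\Psi_2$. Now take disjoint $A_n\subseteq\Omega$ with $P(A_n)=c_0/(s_n n^2)$, where $c_0$ is chosen so that $\sum_nP(A_n)\le1$ -- this uses that $(\Omega,\mathcal F_T,P)$ carries sets of arbitrarily small positive measure, e.g. is non-atomic; otherwise all Orlicz spaces may coincide and (i)$\Leftrightarrow$(ii) can fail -- and set $X:=\sum_n\Psi_1^{-1}(s_n)\mathbf 1_{A_n}$. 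With $v_n:=\Psi_1^{-1}(s_n)$ one has $\Psi_1(v_n)=s_n$, and by convexity $\Psi_2(v_n)\ge\Psi_2(n\Psi_2^{-1}(s_n))\ge n\Psi_2(\Psi_2^{-1}(s_n))=ns_n$, and more generally $\Psi_2(cv_n)\ge\min(c,1)\,ns_n$ for every $c>0$ once $n\ge1/\min(c,1)$. Hence $E[\Psi_1(X)]=c_0\sum_n n^{-2}<\infty$, so $X\in L^{\Psi_1}$, whereas $E[\Psi_2(cX)]\ge\min(c,1)\,c_0\sum_{n\text{ large}}n^{-1}=+\infty$ for every $c>0$, so $X\notin L^{\Psi_2}$ -- contradicting $L^{\Psi_1}\subseteq L^{\Psi_2}$.

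The easy parts are (iii)$\Rightarrow$(ii) and, once the Chebyshev bound is available, the estimate in (i)$\Rightarrow$(iii). The crux is (ii)$\Rightarrow$(i): the delicate points are extracting the quantitative inverse-function inequality $\Psi_1^{-1}(s_n)\ge n\Psi_2^{-1}(s_n)$ from the mere failure of the \emph{scale-invariant} comparison $\succeq$ (this is exactly where the reduction to strictly increasing $\Psi_j$ earns its keep), and calibrating the weights $P(A_n)$ so that a \emph{single} test function lies in $L^{\Psi_1}$ yet escapes $L^{\Psi_2}$ \emph{even after arbitrary shrinking} -- the point being that $X\notin L^{\Psi_2}$ demands $E[\Psi_2(cX)]=+\infty$ for all $c>0$, not merely for $c=1$. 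A slicker alternative for (ii)$\Rightarrow$(i) is to upgrade the set inclusion $L^{\Psi_1}\subseteq L^{\Psi_2}$ to a bounded one via the closed graph theorem (the graph is closed because norm convergence in either space forces convergence in probability), thereby reaching (iii), and then to deduce (i) by testing (iii) on indicators $u\mathbf 1_A$, for which $N_\Psi(u\mathbf 1_A)=u/\Psi^{-1}(1/P(A))$.
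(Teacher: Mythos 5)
The paper offers no proof of this statement: it is quoted as a classical result, with the reader referred to the monographs \cite{KraRut61,RR}, so there is no ``paper approach'' to compare against. Your argument is the standard textbook proof and it is correct: the reduction to finite-valued, strictly increasing Young functions via $\Psi_j(x)+\delta_j|x|$ with $\delta_j$ small is legitimate (it yields an equivalent Young function, hence the same space, equivalent norm and the same relation $\succeq$); the three-way splitting closes (i)$\Rightarrow$(iii) with a constant independent of $X$; and in (ii)$\Rightarrow$(i) the test function $X=\sum_n\Psi_1^{-1}(s_n)I_{A_n}$ with $P(A_n)=c_0/(s_n n^2)$ correctly lies in $L^{\Psi_1}$ while $E[\Psi_2(cX)]=+\infty$ for \emph{every} $c>0$, which is exactly what exclusion from $L^{\Psi_2}$ requires. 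The one caveat you flag is genuine and worth keeping: the implication (ii)$\Rightarrow$(i) needs the underlying probability space to carry sets of arbitrarily small positive measure (e.g.\ non-atomicity), an assumption left implicit in the paper's formulation, since on a space with finitely many atoms all Orlicz spaces coincide and (ii) cannot force (i).
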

\noindent Consequently, any Orlicz space $L^\Psi$ satisfies the embeddings
\begin{equation*}
{\ L^{\infty }\hookrightarrow {L}^{\Psi }\hookrightarrow L^{1}},
\end{equation*}
and two Orlicz spaces are isomorphic if and only if their Young functions are equivalent.

The \emph{Morse subspace} of $L^\Psi$, also  called the ``Orlicz heart'', is given by
$$M^\Psi = \{X\in L^0 \mid E[\Psi(c X )]<\infty \text{ for all } c>0 \}. $$
The inclusion of $M^\Psi$ in $L^{\Psi }$ may be strict and in particular $M^\Psi=\{0\}$ when $L^{\Psi}=L^{\infty}$.  On the other hand,
$M^p = L^p $ for any  $1\leq p< +\infty$.  More generally, when  $\Psi$ is finite on $\mathbb{R}$ then
\begin{equation}\label{inclOrlicz}
   L^{\infty}\hookrightarrow M^\Psi \hookrightarrow L^{\Psi}.
\end{equation}
We end these considerations with a classic  example of strict inclusion of $M^\Psi$ in $L^\Psi$.
\begin{example}\label{cosh}
Let $\Psi (x)= (\cosh x -1)$.  Simple calculations show that $L^{\Psi}$ is the space of random variables $X$ with \emph{some} absolute exponential moment finite, $E[e^{c |X|}]<+\infty $ for some $c>0$.  $M^{\Psi }$ is the proper subspace of those $X$ with \emph{all} absolute exponential moments finite. Therefore, as soon as $\Omega$ is infinite, $ M^{\Psi}\subsetneqq L^{\Psi}$.
\end{example}

From \S 3 onwards, the Young function will be
$$\Uhat(x):=-U(-|x|),$$
meaning that the Orlicz space in consideration is generated by the
lower tail of the utility function.  Then,
\begin{equation}\label{uhat-u-orlicz}
 X\in L^{\Uhat} \text{ iff } E[ U(-c |X|)]>-\infty
\text{ for some } c>0.
\end{equation}
For   utility functions with
lower tail which is asymptotically a power, say $p>1$, $L^\Uhat$ is isomorphic to  $ L^p $ and $L^\Uhat\equiv M^\Uhat$.
When $U$ is exponential,  say  $U(x) = 1-e^{-\gamma x}$, with $\gamma>0$,   $\Uhat(x) =
e^{\gamma|x|}-1 $    and  the induced space is isomorphic to that of Example \ref{cosh}, so that  $L^\Uhat\supsetneq
M^\Uhat $ in the relevant case $|\Omega|=+\infty$.   \\
 \indent For utility functions
with half-line as their effective domain, such as $U(x) =\ln (1+x)$,  $L^\Uhat$ is isomorphic to $L^{\infty}$ and $M^\Uhat=\{0\}$.

\subsection{Semimartingale norms}
There are two standard norms in stochastic
calculus. Let $S$ be an $\mathbb{R}^d$-valued  semimartingale  on
the filtered space $(\Omega, (\mathcal{F}_t)_{0\leq t\leq T}, P)$
and let $S^*_t =\sum_{i=1}^d \sup_{0\leq s\leq t } |S_s^i| $ be
the corresponding maximal process.
 For $p\in [1,\infty]$ let $$\|S\|_{\scr{S}^p}:=\|S^*_T\|_{L^p},$$ and
  denote the class of semimartingales  with finite $\scr{S}^p$-norm also by $\scr{S}^p$.
This definition is due to Meyer \cite{m78}. We extend the definition slightly to allow for an arbitrary Orlicz space
$L^\Psi(P)$ or its Morse subspace $M^\Psi(P)$,
\begin{align}
\scr{S}^\Psi &:= \{\text{semimartingale }S\mid S^*_T\in L^\Psi \},\label{scrS}\\
\SM &:= \{\text{semimartingale }S\mid S^*_T\in M^\Psi \}.\label{scrStil}
\end{align}

\begin{remark}\label{stopping}
Note for future use that $ \scr{S}^\Psi $ and $\SM$ are stable under
stopping, that is if $S$ belongs to $\scr{S}^\Psi$ or $\SM$ and if $\tau$ is a
stopping time, then the stopped process $S^\tau:= (S_{\tau\wedge t})_t$ is in
$\scr{S}^\Psi$ or $\SM$, respectively.
\end{remark}

Following Protter \cite{Pr05}, for any special
semimartingale $S$ with canonical decomposition into local
martingale part $M$ and predictable finite  variation part  $A$,
$S = S_{0}+M+A$, we define the following semimartingale norm,
$$\|S\|_{\scr{H}^{p}}=\|S_{0}\|_{L^{p}}+\|[M,M]_{T}^{1/2}\|_{L^{p}}+\|\mathrm{var}(A)_{T}\|_{L^{p}},$$
where $\mathrm{var}(A)$ denotes the absolute variation of process $A$.
 The class of processes with finite $\scr{H}^p$-norm is denoted by $\scr{H}^p$.
  As usual we let
$$ \scr{M}^p := \scr{H}^p\cap \scr{M}, $$ where $\scr{M}$ is the set of uniformly integrable $P$-martingales.

\subsection{Localization and beyond: $\sigma$-localization and $\I$-localization}
 Recall that  for a given
semimartingale $S$ on $(\Omega, (\mathcal{F}_t)_{0\leq t\leq T},
P)$,  $L(S)$ indicates the class of predictable and
$\mathbb{R}^d$-valued, $S$-integrable processes $H$ under
$P$, while $H\cdot S$ denotes the resulting scalar-valued integral process. In contrast,  when $\varphi$ is
a \emph{scalar} predictable process belonging to $\cap_{i=1}^d L(S^i)$ we follow \cite[\S IV.9]{Pr05} and \cite[Definition 8.3.2]{ds06} 
in writing $\varphi \cdot S$ for the \emph{vector-valued} process
$(\varphi \cdot S^1, \ldots, \varphi\cdot
S^d)$. \\
\indent  Now, let $\scr{C}$ be some fixed class of
semimartingales. The following methods of extending $\scr{C}$ appear in the literature:
\begin{enumerate}
\item[(i)]     $S\in\scr{C}_{\mathrm{loc}}$, i.e. $S$ is \emph{locally}
in $\scr{C}$, if there is a sequence of stopping times $\tau_n$
increasing to $+\infty$ (called \emph{localizing sequence}) such that each of the stopped processes
$S^{\tau_n}= I_{[0,\tau_n]}\cdot S$ is in $\scr{C}$.
 \item[(ii)]
  $S\in\scr{C}_{\sigma}$, i.e.  $S$ is $\sigma$-\emph{locally} in $\scr{C}$, if there is a sequence
of predictable sets $D_n$ increasing to $\Omega\times \mathbb{R}_+$ such that for every $n$  the vector-valued process  $I_{D_n}\cdot S$ is
in $\scr{C}$.
\item[(iii)]
  $S\in\scr{C}_{\I}$, i.e.  $S$ is $\I$-\emph{locally} in $\scr{C}$, if there is some scalar process
$\varphi\in \cap_{i=1}^d L(S^i)$, $\varphi>0$ such that $\varphi\cdot S$ is in $\scr{C}$.
\end{enumerate}

The first two items are standard (cf. \cite[I.1.33]{js03},
\cite{ka04}) while the third item  is an ad hoc definition. By
construction,
for an arbitrary semimartingale class
$\scr{C}$ one has $\scr{C}_{\sigma}\supseteq
\scr{C}_{\mathrm{loc}}\supseteq \scr{C}$. However it is not
\emph{a priori} clear what inclusions hold for $\scr{C}_{\I}$, apart
from the obvious $\scr{C}_{\I}\supseteq \scr{C}$. \'Emery
\cite[Proposition 2]{e80} has shown that when $\scr{C} =
\scr{M}^p$ or $\scr{H}^p $, the following equalities hold
\begin{equation}\label{emery}
 \scr{M}^p_\sigma = \scr{M}^p_\I,\quad\quad\scr{H}^p_\sigma = \scr{H}^p_\I, \quad\quad \text{for }p\in[1,+\infty).
\end{equation}
To complicate matters, some authors use $\sigma$-localization to mean $\I$-localization, see \cite{ds98,Pr05,ksi06}.
In this paper we deliberately make a \emph{clear distinction} between the two localization procedures.

The name $\I$-localization ($\I$ standing for integral) is
probably a misnomer, since no localization procedure is involved.
But we have chosen it because in \'Emery's  result $\I$-localization
coincides with $\sigma$-localization.  In general, however,
$\scr{C}_{\I}\neq \scr{C}_{\sigma}$. Intuition suggests that
the two localizations coincide  whenever the primary class
$\scr{C}$ is defined via some sort of integrability property, as
in the case above: martingale property and its generalizations,
boundedness or more generally Orlicz integrability conditions on
the maximal process. The next result in this direction  appears to
be new.
\begin{proposition}\label{sigmaI}
For any Orlicz space $L^\Psi$, its Morse subspace $M^\Psi$ and the
corresponding  semimartingale normed spaces $\scr{S}^\Psi,
\SM$, the following  identities hold:
$\SsigPsi=\SIPsi$ and
$\SsigMPsi=\SIMPsi$.
\end{proposition}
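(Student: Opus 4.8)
The plan is to prove the two inclusions $\SsigPsi\subseteq\SIPsi$ and $\SIPsi\subseteq\SsigPsi$ separately, the statements for the Morse spaces being obtained by the \emph{same} arguments, using in addition that $M^\Psi$ is a norm-closed, solid subspace of $L^\Psi$ which contains $L^\infty$. The inclusion $\SsigPsi\subseteq\SIPsi$ is soft; $\SIPsi\subseteq\SsigPsi$ is where the content lies.

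\emph{Soft inclusion.} Let $(D_n)$ be predictable sets increasing to $\Omega\times\mathbb{R}_+$ with $I_{D_n}\cdot S\in\scr{S}^\Psi$ for every $n$. Choose $b_n>0$ with $\sum_n b_n<\infty$ and $b_n\,N_\Psi\big((I_{D_n}\cdot S)^*_T\big)\le 2^{-n}$, and put $\varphi:=\sum_n b_n I_{D_n}$. Then $\varphi$ is predictable and bounded, hence $\varphi\in\cap_i L(S^i)$, and $\varphi>0$ because the $D_n$ exhaust $\Omega\times\mathbb{R}_+$. By the dominated convergence theorem for stochastic integrals $\varphi\cdot S=\sum_n b_n(I_{D_n}\cdot S)$, so $(\varphi\cdot S)^*_T\le\sum_n b_n (I_{D_n}\cdot S)^*_T$ a.s.; monotonicity and countable subadditivity of the Luxemburg norm give $N_\Psi\big((\varphi\cdot S)^*_T\big)\le\sum_n 2^{-n}=1$, i.e.\ $\varphi\cdot S\in\scr{S}^\Psi$ and $S\in\SIPsi$. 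For the Morse version the same $\varphi$ works: each summand lies in $M^\Psi$, the series converges in $N_\Psi$-norm, and $M^\Psi$ being norm-closed and solid yields $(\varphi\cdot S)^*_T\in M^\Psi$.

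\emph{Hard inclusion.} Let $\varphi>0$ with $Y:=\varphi\cdot S\in\scr{S}^\Psi$. The tempting choice $D_n=\{\varphi\ge 1/n\}$ fails: although $I_{D_n}\cdot S=(I_{D_n}/\varphi)\cdot Y$ is an integral of $Y$ by a predictable integrand bounded by $n$, the class $\scr{S}^\Psi$ is \emph{not} stable under integration by bounded predictable processes, the finite-variation part being liable to inflate (take $Y$ a predictable finite-variation process with $Y^*_T$ bounded but $\mathrm{var}(Y)_T\notin L^\Psi$ and integrate against $\mathrm{sgn}(dY)$). The remedy is to let the localizing sets also pre-stop $Y$ where the variation of its finite-variation part accumulates. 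Since $Y\in\scr{S}^\Psi\subseteq\scr{S}^1$, $Y$ is special, $Y=M+A$ with $A$ predictable of finite variation; set $C:=\mathrm{var}(A)$, a predictable increasing process with $C_T<\infty$ a.s., and $D_n:=\{\varphi\ge 1/n\}\cap\{C_-<n\}$. This is predictable (since $C_-$ is left-continuous) and increases to $\Omega\times\mathbb{R}_+$. On $\{C_-<n\}$ the process $Y$ is, up to at most one jump of size $\le 2Y^*_T$, a pre-stopped version of itself, so $W_n:=I_{\{C_-<n\}}\cdot Y$ satisfies $(W_n)^*_T\le 3Y^*_T\in L^\Psi$ and decomposes as $W_n=M_n'+A_n'$ with $\mathrm{var}(A_n')_T\le n+|\Delta A_{\rho_n}|$, where $\rho_n$ is the début of $\{C\ge n\}$ and $|\Delta A_{\rho_n}|\le E[2Y^*_T\mid\mathcal F_{\rho_n-}]\in L^\Psi$ (conditional Jensen); consequently $(M_n')^*_T\le(W_n)^*_T+\mathrm{var}(A_n')_T\in L^\Psi$. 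Now $I_{D_n}\cdot S=(I_{D_n}/\varphi)\cdot M_n'+(I_{D_n}/\varphi)\cdot A_n'$ with integrand bounded by $n$: the second term has total variation $\le n\,\mathrm{var}(A_n')_T\in L^\Psi$, hence maximal process in $L^\Psi$; the first term is a local martingale, obtained from $M_n'\in\scr{S}^\Psi$ by a bounded-integrand integral, with jumps dominated by $2n\,(M_n')^*_T\in L^\Psi$ and quadratic variation $\le n^2[M_n']_T$, and a Burkholder--Davis--Gundy estimate — applied, if needed, after one further $\sigma$-localization of its continuous and purely discontinuous parts by predictable ``$\{\,\cdot_-<n\,\}$''-sets controlling their quadratic variations, which is harmless because $I_{D_n}\cdot S$ need only be controlled on the localizing sets (cf.\ the fact, underlying \eqref{emery}, that every local martingale is $\sigma$-locally in $\scr H^1$) — places its maximal process in $L^\Psi$. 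Hence $(I_{D_n}\cdot S)^*_T\in L^\Psi$, so $S\in\SsigPsi$; the identical estimates with $L^\Psi$ replaced throughout by $M^\Psi$ give $\SIMPsi\subseteq\SsigMPsi$.

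\emph{Main obstacle.} The difficulty is entirely in the hard inclusion, and it is to keep the maximal process in $L^\Psi$ (not merely in $L^1$). One cannot follow the integrand $1/\varphi$ alone, because bounded-integrand integration may inflate $\scr{S}^\Psi$; the localizing sets must therefore simultaneously thin out where $\varphi$ is small and pre-stop $Y$ at the level sets of $\mathrm{var}(A)$, the left-closed right-open stochastic interval being exactly what pins the maximal process below a multiple of $Y^*_T$. After this, the only way the pathology could return is through the quadratic variation of the martingale part, since bounded-integrand integration does not recreate a finite-variation part inside a local martingale, and that residual point is removed by one more $\sigma$-localization together with a BDG inequality.
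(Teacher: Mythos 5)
Your soft inclusion is fine and is essentially the paper's argument (the paper does the bookkeeping with $E[\Psi(\cdot)]$ and countable convexity rather than with the Luxemburg norm, but the construction $\varphi=\sum_n c_nd_nI_{D_n}$ and the use of the dominated convergence theorem for stochastic integrals are the same). The hard inclusion, however, contains a genuine gap, and it sits exactly where you place the weight of the argument: the passage from control of $[\,\cdot\,]^{1/2}$ to control of the maximal process of the local martingale part. A Burkholder--Davis--Gundy estimate of the form $E[\Psi(cN^*_T)]\lesssim E[\Psi(c'[N]_T^{1/2})]$ is available only for \emph{moderate} Young functions (a $\Delta_2$-type growth condition), and it is precisely the non-moderate case that matters here: for exponential utility $\Uhat(x)=e^{\gamma|x|}-1$, and then the implication ``$[N]_T^{1/2}\in L^{\Uhat}\Rightarrow N^*_T\in L^{\Uhat}$'' is false even for continuous local martingales (a time-changed Brownian motion whose clock $A$ satisfies $E[e^{c\sqrt A}]<\infty$ but $E[e^{\varepsilon A}]=\infty$ for all $\varepsilon>0$ has $[N]_T^{1/2}\in L^{\Uhat}$ while $N^*_T\approx\sqrt A\,|Z|$ fails every exponential moment). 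Your jump bound $|\Delta N|\le 2n(M_n')^*_T\in L^\Psi$ does not repair this, and the parenthetical ``one further $\sigma$-localization \dots which is harmless'' is not a proof: it silently changes the localizing sets (so one must re-assemble a single increasing sequence of predictable sets, a composition/diagonalization step in the spirit of Kallsen's lemma), and even on the refined sets the missing $\Psi$-BDG inequality is still what you would need.

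The paper avoids all of this, and in doing so shows that your opening diagnosis -- that the ``tempting choice'' $D_n=\{\varphi\ge 1/n\}$ (or $\{1/n<\varphi<n\}$) fails -- is a misdiagnosis. It fails only if one insists on $I_{D_n}\cdot S\in\scr{S}^\Psi$ outright. The paper instead proves $I_{D_n}\cdot S\in\scr{S}^\Psi_{\mathrm{loc}}$: stopping at $\tau^n_k=\inf\{t\mid (I_{D_n}\cdot S)^*_t>k\}$, the stopped maximal process is at most $2k$ plus the single overshoot jump, and that jump equals $(I_{D_n}/\varphi)\,|\Delta(\varphi\cdot S)|\le 2n(\varphi\cdot S)^*_T\in L^\Psi$; then, since $\scr{S}^\Psi$ is stable under stopping, Kallsen's identity $(\scr{S}^\Psi_{\mathrm{loc}})_\sigma=\SsigPsi$ concludes. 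No canonical decomposition, no control of $\mathrm{var}(A)$, and no BDG inequality are needed -- which is essential, because the proposition is claimed for \emph{every} Orlicz space. If you want to salvage your route, replace the BDG step by this stop-at-level-crossing argument (applied directly to $I_{D_n}\cdot S$), at which point the detour through $Y=M+A$, the sets $\{C_-<n\}$ and the conditional-Jensen bound on $\Delta A_{\rho_n}$ become unnecessary.
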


{\em Proof}. We prove the statement only for $\scr{S}^\Psi$, since the proof for
$\SM$ is analogous.
\begin{enumerate}
  \item[(i)]  Inclusion $ \SsigPsi\subseteq \SIPsi$. Fix $S\in \SsigPsi$. Then, there are predictable sets $D_n$ increasing to $\Omega\times \mathbb{R}_+$ such that
$(I_{D_n}\cdot S)^*_T\in L^\Psi$, for all $n\geq 1$. Thus there exist constants
$c_n >0$ such that $0\leq E[\Psi(c_n(I_{D_n}\cdot S)_T^*)]< +\infty$.   Since $\Psi$ is nondecreasing over $\mathbb{R}_+$,   $c_n$ can be assumed $(0,1]$-valued.
 Let
$$b_n:= E[\Psi(c_n(I_{D_n}\cdot S)_T^*)],  \quad\quad
d_n :=  h\, {2^{-n}} {(1+b_{n})^{-1}},$$
  where $
h:=1/(\sum_{n\geq 1} 2^{-n}(1+b_{n})^{-1})
$ is a normalizing constant,
and  define the following strictly positive, finite valued process $$  \varphi := \sum_{n\geq 1} c_n d_n I_{D_n}.   $$
Since  $0\leq \varphi_m:=\sum_{n=1}^m c_n d_n I_{D_n}\uparrow \varphi \leq \sum_{n\geq 1} d_n =1 $,    the Dominated Convergence Theorem for stochastic integrals (\cite[Theorem 32]{Pr05}) applies. Therefore,   $\varphi \in L(S)$ and $ (\varphi_m\cdot S  -\varphi\cdot S)^*_T  $ tends to $0$ in probability. Passing to a subsequence if necessary, we can assume the convergence holds $P$-a.s.  Now,
$$  (\varphi\cdot S)^*_T \leq (\varphi\cdot S - \varphi_m\cdot S)^*_T + (\varphi_m \cdot S)^*_T\leq   (\varphi\cdot S - \varphi_m\cdot S)^*_T + \sum_{n=1}^m c_n d_n (I_{D_n}\cdot S)^*_T  $$
and taking the limit on $m$,  $(\varphi\cdot S)^*_{T}\leq \sum_{n\geq 1} c_n d_n(I_{D_n}\cdot S)^*_{T} $. Monotonicity of $\Psi$ then ensures
$$
E[\Psi((\varphi\cdot S)^*_{T})]\leq E[\Psi(\sum_{n\geq 1}
c_n d_n(I_{D_n}\cdot S)^*_{T})]. $$
Countable convexity  of $\Psi$ (Lemma \ref{CC}) implies the  latter term  is majorized by $
\sum_{n\geq 1} d_n E[\Psi( c_n(I_{D_n}\cdot
S)^*_{T})]$ and thus
\begin{align*}
E[\Psi((\varphi\cdot S)^*_{T})]&\leq \sum_{n\geq 1} d_n E[\Psi( c_n(I_{D_n}\cdot
S)^*_{T})]\\
&= \sum_{n\geq 1}  d_n b_n = h \sum_{n\geq 1}  {2^{-n}} \frac{b_n}{ 1+b_{n} }\leq h \leq 2(1+b_1),
\end{align*}
i.e.  $S\in\SIPsi$.

  \item[(ii)]  Inclusion  $ \SIPsi\subseteq \SsigPsi$.  The line of the proof is:     
  (a) fix $S \in \SIPsi$ and show   $S\in(\scr{S}^\Psi_{\mathrm{loc}})_{\sigma}$; (b) then, as
$\scr{S}^\Psi$ is stable under stopping (see Remark \ref{stopping}), a result by Kallsen (\cite[Lemma 2.1]{ka04})
ensures $(\scr{S}^\Psi_{\mathrm{loc}})_{\sigma}= \SsigPsi$, whence the conclusion follows. \\
     We only need to prove (a), so let us fix $S \in \SIPsi$ and pick $\varphi>0$ such that
$\varphi \cdot S\in\scr{S}^\Psi$.  By construction $D_n := \{\frac{1}{n}<
\varphi< n\} $ is a sequence of predictable sets
increasing to $\Omega\times \mathbb{R}_+$. We now show $
I_{D_n}\cdot S\in\scr{S}^\Psi_\mathrm{loc}$ for all $n$. To this end,
let  $\tau^n_k =\inf\{ t\mid (I_{D_n}\cdot S)_t^*>k\} $. Then
\begin{align*}
(I_{D_n}\cdot S^{\tau^n_k})_T^*\, &\leq (I_{D_n}\cdot
S^{\tau^n_k})_{T-}^* + |  ( I_{D_n}\cdot S^{\tau^n_k})_T| \\
&\leq k +   | ( I_{D_n}\cdot S^{\tau^n_k})_{T-}| + |\Delta  ( I_{D_n}\cdot
 S^{\tau^n_k})_T|\leq 2k + |\Delta  ( I_{D_n}\cdot
 S^{\tau^n_k})_T|,
\end{align*}
and  the last jump term verifies
$$|\Delta  ( I_{D_n}\cdot S^{\tau^n_k})_T| =   |\Delta   (  \left(I_{D_n}/\varphi\right) \cdot ( \varphi \cdot
 S^{\tau^n_k}))_T| = \left(\frac{I_{D_n}}{\varphi}\right)_T |\Delta ( \varphi \cdot S^{\tau^n_k})_T| \leq n2(\varphi\cdot S)_T^*,
$$
so that
$$
  (I_{D_n}\cdot S^{\tau^n_k})_T^* \leq 2k   + 2n  (\varphi\cdot S)_T^*  \in
  L^\Psi.$$
  Therefore, for any fixed $n$,  $(I_{D_n}\cdot S^{\tau^n_k})_T^* $ is also in $L^{\Psi}$ for all $k$,
  whence  $ I_{D_n}\cdot S \in \scr{S}^\Psi_{\mathrm{loc}}$.
 This precisely  means  $S\in(\scr{S}^\Psi_{\mathrm{loc}})_{\sigma}$, which completes the proof.\qquad\endproof
\end{enumerate}

\section{The strategies}
\subsection{Conditions on $S$ and simple strategies}\label{sect: simple}
 Let  $S$ be  a
$d$-dimensional semimartingale which models the discounted
evolution of $d$ underlyings. As hinted in the introduction, to
accommodate popular models for $S$, including  exponential L\'evy processes,
we do not assume that $S$ is locally bounded. However, to make
sure that there is a sufficient number of well-behaved simple
strategies we impose the following condition on $S$:
\begin{assumption}\label{maximalS}
$S\in\SsigU$.
\end{assumption}

\noindent The class $\SsigU$ introduced here appears to be the most comprehensive class of price processes to have been systematically studied in the context of utility maximization to date. Most papers in the literature assume $S$ locally bounded, in our notation  $S\in\scr{S}_\mathrm{loc}^\infty$. Sigma-bounded semimartingales, that is processes in $\scr{S}\!\!_\sigma\!^\infty$, appear in Kramkov and S\^{i}rbu \cite{ksi06}.
 For  $p\in (1,+\infty)$ it can be shown, cf. \cite[Lemma A.2]{ck07}, that the class of semimartingales which are \emph{locally in} $L^p$ coincides with $\scr{S}_{\mathrm{loc}}^p$.
These processes feature in Delbaen and Schachermayer \cite{ds96}.   Biagini and Frittelli \cite{bf05} require existence of a suitable and compatible loss control for process $S$ which in our notation corresponds to $S\in\SIMU$. In \cite{bf08} this requirement is weakened to
$S\in \SIU$ which by Proposition \ref{sigmaI} is equivalent to Assumption \ref{maximalS}.

 As has already been pointed out in \cite{bf08}, the $\sigma$-localization in Assumption \ref{maximalS} provides a
substantial amount of flexibility since there are many interesting cases with $S\notin \scr{S}_{\mathrm{loc}}^\infty$ which fit in this setup.
However,  the cost of  considering  price processes of increasing generality is reflected
in progressively less attractive interpretations of simple trading strategies:
\begin{definition}\label{simple strategies}
Define $\varphi \in \cap_{i=1}^d L(S^i;P)$, $\varphi>0$,  and a   sequence of stopping times $(\tau_n)_n$ as follows:
\begin{enumerate}
\item[{\rm(i)}] For $S\in  \scr{S}^\Uhat$ let $\varphi\equiv 1$, $\tau_n\equiv T$ for all $n$;
\item[{\rm(ii)}] For $S\in  \scr{S}^\Uhat_\mathrm{loc}\setminus \scr{S}^\Uhat$ let $\varphi\equiv 1$ and let $(\tau_n)_n$ be  a localizing sequence for $S$ from the definition of $\scr{S}^\Uhat_\mathrm{loc}$;
\item[{\rm(iii)}] For $S\in\SsigU\setminus \scr{S}^{\Uhat}_{\mathrm{loc}}$  let $\tau_n\equiv T$ and let $\varphi$ be a fixed  $\mathcal{I}$-localizing integrand for $S$ such that $\varphi\,\cdot S\in \scr{S}^\Uhat$,  which is possible by virtue of Proposition \ref{sigmaI}.
\end{enumerate}
We say $H$ is a \emph{simple integrand} if it is of the form
$ H = \sum_{k=1}^N  H_k I_{ ]T_{k-1},T_k]}\,\varphi$
where  $ T_1\leq \cdots \leq  T_N$ is a finite sequence of  stopping times with $T_N$  dominated by $\tau_n$ for some $n$,  and  each $H_k$  is an $\mathbb{R}^d$-valued random variable,  $\mathcal{F}_{T_{k-1}}$-measurable and {bounded}.
The vector space of all simple integrands is denoted by $\simple$.
\end{definition}

As can be seen from the definition, when  $S\in\scr{S}^\Uhat$ no localization is needed. Every simple integrand is \emph{simple} also in the sense of integration theory and   it  represents a buy-and-hold strategy on $S$ over finitely many trading dates.  Vice versa, every buy-and-hold strategy implemented over a finite set of dates is simple. One may thus wonder which models  fall in this category. Some common examples are:
\begin{itemize}
  \item[(a)] discrete time models satisfying $|S_t|\in L^{\Uhat}$ for $t=1,2,\ldots,T$;
  \item[(b)]   L\'evy processes, when (i) the utility $U$ is exponential and the L\'{e}vy measure $\nu$ satisfies
  $$\int e^{\lambda|x|} I_{\{ |x|>1\}} d\nu(x)<+\infty {\text{ for some } \lambda>0};$$
  or (ii)  the utility  $U(x)$ behaves asymptotically like $-|x|^p,p>1$ when $x\rightarrow -\infty$ and the L\'{e}vy measure $\nu$ satisfies
  $$\int |x|^p I_{\{ |x|>1\}} d\nu(x)<+\infty.$$
  Such conditions on $\nu$ are equivalent to integrability conditions on the maximal functional $S^*$, i.e.  $ S\in\scr{S}^\Uhat$, which in turn are equivalent to $\Uhat$-integrability of $S_t$ at \emph{some } $t>0$.   This follows from general results on $g$-moments of L\'{e}vy processes, when $g$ is a submultiplicative function  (see \cite[Theorems 25.3 and 25.18]{Sat99}). Explicit examples of utility maximization in this case can be found in Biagini and Frittelli \cite[\S 3.2]{bf05}, \cite[Example 35]{bf08}. Here, $U$ is exponential utility and  $S$ is a compound Poisson process with Gaussian or doubly exponentially distributed jumps;
  \item[(c)]   exponential L\'evy processes belong to $\scr{S}^{\Uhat}$ whenever $\Uhat$ behaves asymptotically like a power function with exponent $p\in(1,+\infty)$ and the L\'evy measure of $\ln S$, $\nu$, satisfies
  $$\int e^{px} I_{\{ x>1\}} d\nu(x)<+\infty.$$
  This is derived similarly as in (b) once $\ln S$ has been decomposed into a sum of two independent L\'evy processes, one of which represents large jumps of $\ln S$.
\end{itemize}

 For $S\in  \scr{S}^\Uhat_\mathrm{loc}\setminus \scr{S}^\Uhat$ it is still true that all simple strategies are of the buy-and-hold type but one can no longer pick the trading dates arbitrarily.   From a practical point of view most commonly used price processes fall into this category. For example, in the
Black-Scholes model the risky asset is represented by a geometric Brownian motion which does not belong to  $\scr{S}^\Uhat$ when
$U$ stands for the exponential utility. On the other hand $S$ is
continuous and therefore locally bounded which means
$S\in\scr{S}^\infty_\mathrm{loc}\subseteq
\scr{S}^\Uhat_\mathrm{loc}\subseteq \SsigU$ for any
utility function satisfying our  assumptions, including the
exponential. The same line of reasoning applies to diffusions
and more generally to all semimartingales with bounded jumps which
therefore automatically belong to $\SsigU$ for \emph{any} utility function $U$. In the special case $\scr{S}^\Uhat_\mathrm{loc} = \scr{S}^p_\mathrm{loc}$ our definition of simple strategies mirrors the definition in Delbaen and Schachermayer \cite{ds96}.

Finally, the price paid for
 allowing $S\in\SsigU\setminus \scr{S}^{\Uhat}_{\mathrm{loc}}$ is that simple strategies can no longer
be interpreted as buy-and-hold with respect to the original price process $S$ but only with respect to the better-behaved process
$S':=\varphi\cdot S$.
This case is interesting mainly theoretically since the $\I$-localizing strategy $\varphi$ has already appeared in the literature on utility maximization. It plays an important role in the work of Biagini \cite{b04} where  the maximal process $(\varphi\cdot S)^*$ is taken as a dynamic loss control for the strategies in the utility maximization problem. Within setups of increasing generality in Biagini and Frittelli \cite{bf05,bf08} $\varphi$ gives rise to so-called
\emph{suitable} and \emph{(weakly) compatible} loss control variables $W:= (\varphi\cdot S)^*_T$.
\subsection{$\sigma$-martingale measures}
To motivate the definition of simple strategies mathematically we now define dual asset pricing measures.
\begin{definition}\label{sigma-mart}
$Q\ll P$ is a $\sigma$-martingale measure  for $S$ iff $S$ is a
$\sigma$-martingale under $Q$.  The set of all
$\sigma$-martingales measures for $S$ is denoted by  $
\mathcal{M}$ and the subset of equivalent measures by $
\mathcal{M}^e$.
\end{definition}

The concept of $\sigma$-martingale measure was introduced to
Mathematical Finance  by Delbaen and Schachermayer \cite{ds98}.
When $S$ is (locally) bounded, it can be shown that $\mathcal{M}$
coincides with  the absolutely continuous (local) martingale
measures for $S$ (see e.g. Protter \cite[Theorem 91]{Pr05}).
Therefore,
 $\sigma$-martingales are a natural generalization of local
martingales in the case when $S$ is not locally bounded and the
elements of $\mathcal{M}$ which are equivalent to $P$ can be used
as  arbitrage-free pricing measures  for the derivative securities whose payoff depends on $S$.
The recent book \cite{ds06} contains  an extensive
treatment of the financial applications of this mathematical
concept.

When $S\in\SsigU\setminus \scr{S}^{\Uhat} $, one may wonder to what extent the utility maximization problem depends on the particular choice of $\varphi$ (or of the localizing sequence $(\tau_n)_n$).  Thanks to \'Emery's equality \eqref{emery}
the set of absolutely continuous $\sigma$-martingale measures for
$S$ is the same as the set of $\sigma$-martingale measures for
$S'=\varphi\cdot S$. Specifically, $Q\ll P$ is a $\sigma$-martingale
measure for $S$ by \eqref{emery} if and only if  there exists a
$Q$-positive, predictable process $\psi_Q \in \cap_{i=1}^d
L(S^i;Q)$ such that $ \psi_Q \cdot S$ is a $Q$-martingale. And
this happens if and only if  $ \psi'_Q\cdot (\varphi \cdot S) $ is
a $Q$-martingale, where $\psi'_Q = \frac{\psi_Q}{\varphi}$.

Since the sets of $\sigma$-martingale measures for $S$ and
$S'$ are the same,  the \emph{dual} problem to the utility maximization also remains the same.  Under suitable conditions (see the statement of the main Theorem \ref{main theorem}), we thus end up with the same optimizer, regardless of
a specific choice of the $\I$-localizing strategy $\varphi$.

\subsection{Generalized relative entropy and properties of simple integrals}
\begin{definition}\label{P-V}
A probability $Q$ has finite  generalized  relative entropy with respect to $P$, notation: $Q\in
P_V$, if there is $y_Q>0$ such that
\begin{equation}\label{fin-entropy}
v_Q(y_Q):=E\bigg[ V\bigg( y_Q\frac{dQ}{dP} \bigg) \bigg]<\infty.
\end{equation}
\end{definition}
For exponential utility $U(x)= 1-e^{-x}$ we have seen in \eqref{exponentialU} that $V(y)= y\ln y -y+1$, and in this case a probability $Q$ verifies \eqref{fin-entropy} if and only if its probability density has finite Kullback-Leibler \cite{KuLei51} divergence:
$$ H(Q\| P):=E\left[\frac{dQ}{dP}\ln \frac{dQ}{dP}\right]<+\infty.$$
The Kullback-Leibler divergence is also known in Information Theory as \emph{relative entropy} of $Q$ with respect to $P$.
Intuitively speaking, $H(Q\|P)$ is a non-symmetric measure of the distance between probabilities $Q$ and $P$. In Financial Economics it measures the extra amount of wealth an agent with exponential utility perceives to have if she invests optimally in a complete market with pricing measure $Q$, as opposed to investing all her wealth in the risk-free asset.

In the 1960-ies, Csisz\'{a}r treated a wide class of statistical distances replacing the weighting  function $y\ln y$ by a convex function $V$ verifying $V(1)=0$. In his terminology, $Q$ has finite $V$-divergence with respect to $P$ if
\begin{equation}\label{Vdiv}
    E\left[V\left(\frac{dQ}{dP}\right)\right]<+\infty.
\end{equation}
The interested reader can also consult Liese and Vajda \cite{lv87}.

In Mathematical Finance applications the function  $V$ is typically the convex conjugate of a utility function, see Kramkov and Schachermayer \cite{ks99}, Bellini and Frittelli
\cite{bef}, Goll and R\"{u}schendorf \cite{gr01} and basically all
the contemporary literature on utility maximization. Here,  a $Q\in P_V$ is said to have \emph{finite generalized relative entropy}. Our  definition pushes the generalization one step further, since we do not require $y_Q=1$ in \eqref{fin-entropy}.

The proof of the following simple Lemma is omitted.
\begin{lemma}\label{P_V convexity}
Consider $Q_i\ll P$, $i=1,2$, such that $v_{Q_i}(y_i)<+\infty$ for some $y_i>0$. Then for $0\leq\lambda\leq 1$
$$v_{\lambda Q_1+(1-\lambda)Q_2}\left( \frac{1}{\lambda/y_1+(1-\lambda)/y_2}\right)<\infty.$$
\end{lemma}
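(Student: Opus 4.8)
The plan is a one-line convexity estimate; the only real work is identifying the correct convex-combination weights. Write $Z_i:=dQ_i/dP$ for $i=1,2$, so that, since $Q_i\ll P$, the mixture $Q:=\lambda Q_1+(1-\lambda)Q_2$ satisfies $dQ/dP=\lambda Z_1+(1-\lambda)Z_2$. Set $a:=\lambda/y_1$, $b:=(1-\lambda)/y_2$ and $y:=1/(a+b)$, which is exactly the argument appearing in the statement; note $y>0$ because $y_1,y_2>0$. The cases $\lambda\in\{0,1\}$ reduce to the hypothesis itself, so assume $\lambda\in(0,1)$, whence $a,b>0$.

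The key observation is the identity
$$ y\,\frac{dQ}{dP} \;=\; (\lambda y)\,Z_1 + \big((1-\lambda)y\big)\,Z_2 \;=\; \mu_1\,(y_1 Z_1) + \mu_2\,(y_2 Z_2), $$
where $\mu_1:=a/(a+b)$ and $\mu_2:=b/(a+b)$: indeed $\mu_1 y_1=\lambda/(a+b)=\lambda y$ and $\mu_2 y_2=(1-\lambda)/(a+b)=(1-\lambda)y$, while $\mu_1+\mu_2=1$ with $\mu_1,\mu_2\ge 0$. Thus $y\,dQ/dP$ is a genuine convex combination of $y_1 Z_1$ and $y_2 Z_2$, so convexity of $V$ on $\mathbb{R}$ gives, $P$-a.s.,
$$ V\!\left(y\,\frac{dQ}{dP}\right) \;\le\; \mu_1\,V(y_1 Z_1) + \mu_2\,V(y_2 Z_2). $$
Since $V\ge 0$ (because $V(z)\ge U(0)-0\cdot z=0$ under the normalization $U(0)=0$), all three expectations below are well defined in $[0,+\infty]$, and taking expectations yields
$$ v_Q(y) \;\le\; \mu_1\,v_{Q_1}(y_1) + \mu_2\,v_{Q_2}(y_2) \;<\;+\infty, $$
which is precisely the assertion with $y_Q=y=1/(\lambda/y_1+(1-\lambda)/y_2)$.

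There is no genuine obstacle here: the statement is essentially a reformulation of the convexity of $V$ along the correct reparametrization of the dual variable. The only points needing (minimal) care are the algebraic bookkeeping verifying that $\mu_i y_i$ equals the right multiple of $y$, the harmless degenerate cases $\lambda\in\{0,1\}$, and the remark that $V\ge 0$ rules out a $-\infty$ contribution so that the inequality survives the passage to expectations.
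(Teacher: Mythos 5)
Your proof is correct: the identity $y\,dQ/dP=\mu_1(y_1 Z_1)+\mu_2(y_2 Z_2)$ with $\mu_i$ proportional to $\lambda/y_1$ and $(1-\lambda)/y_2$ is exactly the right reparametrization, and the convexity of $V$ together with $V\geq 0$ (from $U(0)=0$) justifies passing to expectations. The paper omits the proof precisely because this one-line convexity argument is the intended one, so your write-up matches the paper's (implicit) approach.
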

\begin{corollary}
$P_V$ is convex.\\
\end{corollary}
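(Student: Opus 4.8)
The plan is to derive the corollary directly from Lemma \ref{P_V convexity}. Fix two probability measures $Q_1, Q_2 \in P_V$ and a scalar $\lambda \in [0,1]$; the goal is to show that the convex combination $Q_\lambda := \lambda Q_1 + (1-\lambda) Q_2$ again lies in $P_V$. The boundary cases $\lambda = 0$ and $\lambda = 1$ are trivial (then $Q_\lambda$ equals $Q_2$ or $Q_1$), so assume $0 < \lambda < 1$. By definition of $P_V$, there are constants $y_1, y_2 > 0$ with $v_{Q_i}(y_i) < \infty$ for $i=1,2$.

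Next I would simply invoke Lemma \ref{P_V convexity} with these $y_1, y_2$: it yields $v_{Q_\lambda}(\bar y) < \infty$ for the specific choice $\bar y := \left( \lambda/y_1 + (1-\lambda)/y_2 \right)^{-1}$, which is a well-defined strictly positive real number since $y_1, y_2 > 0$ and $0 < \lambda < 1$. Thus there exists a strictly positive $y$ (namely $\bar y$) with $E[V(y\, dQ_\lambda/dP)] < \infty$, which is exactly the condition \eqref{fin-entropy} defining membership in $P_V$. Hence $Q_\lambda \in P_V$, and since $\lambda$ was arbitrary, $P_V$ is convex.

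There is essentially no obstacle here — the corollary is an immediate restatement of the lemma once one observes that the definition of $P_V$ (Definition \ref{P-V}) quantifies existentially over the scaling constant $y_Q$, so it suffices to exhibit one admissible value, and the lemma provides precisely that. The only point requiring the tiniest bit of care is the reduction to $0 < \lambda < 1$, and checking that $\bar y$ is finite and positive in that range; both are immediate. Accordingly I would present the proof in two or three lines.

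\begin{proof}
Let $Q_1, Q_2 \in P_V$ and $\lambda \in [0,1]$; we must show $Q_\lambda := \lambda Q_1 + (1-\lambda)Q_2 \in P_V$. If $\lambda \in \{0,1\}$ this is clear. Otherwise, by Definition \ref{P-V} there exist $y_1, y_2 > 0$ with $v_{Q_i}(y_i) < \infty$ for $i = 1,2$. Applying Lemma \ref{P_V convexity}, the number $\bar y := \left(\lambda/y_1 + (1-\lambda)/y_2\right)^{-1}$ is strictly positive and satisfies $v_{Q_\lambda}(\bar y) < \infty$. Hence $Q_\lambda$ meets the condition \eqref{fin-entropy} with $y_{Q_\lambda} = \bar y$, so $Q_\lambda \in P_V$.
\end{proof}
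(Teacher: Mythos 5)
Your proof is correct and follows exactly the route the paper intends: the corollary is stated as an immediate consequence of Lemma \ref{P_V convexity}, since the definition of $P_V$ only requires the existence of \emph{some} scaling $y_Q>0$, and the lemma supplies the admissible value $\bar y=\left(\lambda/y_1+(1-\lambda)/y_2\right)^{-1}>0$ for the convex combination. No gap; your handling of the boundary cases $\lambda\in\{0,1\}$ is a harmless extra precaution.
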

Simple integrals have good mathematical properties
with respect to $\sigma$-martingale measures with   finite   generalized  relative
entropy.
\begin{lemma}\label{simplesuper}
The wealth process $X = H \cdot S$ of every
 $H\in \simple$ is a uniformly integrable martingale under all $Q\in \measures$.
\end{lemma}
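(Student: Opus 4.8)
The plan is to derive the statement from two facts: (i) the maximal wealth $X^*_T$ is $Q$-integrable for every $Q\in\measures$, which places $X$ in class (D) under $Q$; and (ii) $X$ is a $\sigma$-martingale under $Q$, because $S$ is. Since a $\sigma$-martingale of class (D) is automatically a uniformly integrable martingale, these two facts finish the proof.

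First I would unwind the structure of a simple integrand. Writing $S':=\varphi\cdot S$ for the $\I$-localizing process fixed in Definition \ref{simple strategies}, associativity of the stochastic integral gives, for $H=\sum_{k=1}^N H_k I_{]T_{k-1},T_k]}\varphi\in\simple$, the wealth $X_t=\sum_{k=1}^N H_k\cdot\bigl(S'_{T_k\wedge t}-S'_{T_{k-1}\wedge t}\bigr)$. Since $T_N$ is dominated by some $\tau_n$, going through the three cases of Definition \ref{simple strategies} and using stability under stopping (Remark \ref{stopping}) --- in case (iii) $\varphi$ is supplied by Proposition \ref{sigmaI} --- gives $(S')^{T_N}\in\scr{S}^{\Uhat}$, i.e. $(S')^*_{T_N}\in L^{\Uhat}(P)$. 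As the $H_k$ are bounded and $N$ is finite, $X^*_T\le c\,(S')^*_{T_N}$ for some constant $c$, so $X^*_T\in L^{\Uhat}(P)$: there is $a>0$ with $E_P[\Uhat(aX^*_T)]<\infty$.

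The crux is to pass this $P$-integrability to $Q$. Fix $Q\in\measures$, set $Z:=dQ/dP$, and use $Q\in P_V$ (Definition \ref{P-V}) to pick $y_Q>0$ with $E_P[V(y_QZ)]<\infty$. Since $\Uhat(x)=-U(-|x|)$ and both $\Uhat$ and $V$ are nonnegative, the Fenchel inequality \eqref{fenchel} applied pointwise at $x=-aX^*_T$ with slope $y=y_QZ$ gives $a\,y_Q\,X^*_T\,Z\le\Uhat(aX^*_T)+V(y_QZ)$, whence
\begin{equation*}
E_Q[X^*_T]=E_P[Z\,X^*_T]\le\frac{1}{a\,y_Q}\Bigl(E_P[\Uhat(aX^*_T)]+E_P[V(y_QZ)]\Bigr)<\infty .
\end{equation*}
Thus $X^*_T\in L^1(Q)$, so $X$ is of class (D) under $Q$.

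To finish, since $Q\ll P$ one has $H\in L(S;Q)$ with the same stochastic integral, and $S$ is a $\sigma$-martingale under $Q$ by definition of $\mathcal M$; as $\sigma$-martingales are stable under stochastic integration, $X=H\cdot S$ is a $\sigma$-martingale under $Q$. A $\sigma$-martingale of class (D) is a uniformly integrable martingale (being of class (D) it is a special semimartingale $X=X_0+M+A$; the $\sigma$-martingale property forces the predictable finite-variation part $A$ to vanish, so $X-X_0$ is a local martingale of class (D), hence a uniformly integrable martingale), which is the assertion. I expect the only genuinely delicate step to be the passage from $X^*_T\in L^{\Uhat}(P)$ to $X^*_T\in L^1(Q)$: this is exactly where Assumption \ref{maximalS} (through $(S')^{T_N}\in\scr{S}^{\Uhat}$) and the finite generalized relative entropy $Q\in P_V$ enter, via the Fenchel inequality. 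The other ingredients --- invariance of stochastic integrals under $Q\ll P$, stability of $\sigma$-martingales under integration, and ``$\sigma$-martingale of class (D) $\Rightarrow$ uniformly integrable martingale'' --- are classical.
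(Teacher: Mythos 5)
Your proposal is correct and follows essentially the same route as the paper: bound $X^*_T$ by a constant multiple of the maximal process of the well-behaved (stopped or $\I$-localized) process to get $\Uhat$-integrability under $P$, transfer this to $X^*_T\in L^1(Q)$ via the Fenchel inequality and the finite generalized relative entropy of $Q$, and conclude from the $Q$-$\sigma$-martingale property with integrable maximal process that $X$ is a uniformly integrable $Q$-martingale. The only cosmetic difference is that you spell out the last implication (the paper simply cites Protter, Chapter IV-9), and your justification of it is sound.
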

\begin{proof}
 (i) $S\in\SsigU\setminus \scr{S}^{\Uhat}_{\mathrm{loc}}$.
  Since $H\in \simple$,  the maximal functional $X^*$ verifies
$ X^*_T \leq c  (\varphi \cdot
S)^*_T $ for some constant $c>0$ and some  $\I$-localizing
integrand  $\varphi$ which exists by  Proposition \ref{sigmaI}. By \eqref{uhat-u-orlicz} then  $E[
U(-\alpha (\varphi \cdot S)^*_T )] \in \mathbb{R}$ for some constant
$\alpha>0$ and,  as a consequence,
$$ 0\geq  E\bigg[U\bigg ( - \frac{\alpha}{c} X^*_T  \bigg ) \bigg]
>-\infty.
$$
For any fixed $Q\in \measures$, the  Fenchel inequality  $U(x) -xy
\leq V(y) $ applied with $ x = - \frac{\alpha}{c} X^*_T, y = y_Q
\frac{dQ}{dP} $ gives
$$  U\bigg( - \frac{\alpha}{c} X^*_T   \bigg)   +  \frac{\alpha}{c} X^*_T y_Q \frac{dQ}{dP} \leq V\bigg( y_Q\frac{dQ}{dP} \bigg),  $$
whence $  0\leq \frac{\alpha}{c} y_Q\,  X^*_T  \frac{dQ}{dP} \leq
V( y_Q\frac{dQ}{dP})  - U( - \frac{\alpha}{c} X^*_T  ), $ and
therefore $X^*_T$ is in $L^1(Q)$.  As $Q$ is a $\sigma$-martingale
probability for $S$,  $X$ is also a $Q$-$\sigma$-martingale. Since
its maximal process is integrable, $X$ is in fact a $Q$-uniformly
integrable martingale (see Protter \cite[Chapter IV-9]{Pr05}).\\
\indent (ii) $S\in \scr{S}^{\Uhat}_{\mathrm{loc}}$. Proceed
as in (i), replacing $\varphi$ with $I_{[0,\tau_n]}$.\qquad
\end{proof}

In financial terms,  the message of the above Lemma  is that each
$Q\in\measures$ represents a pricing rule that assigns a correct
price to every simple self-financing strategy.

\subsection{Admissible integrands and integrals}
As anticipated in the introduction, simple integrands are unlikely to contain the solution of the utility maximization problem.  The appropriate class of admissible integrands is an extension given in terms of suitable limits of strategies in $\simple$. We recall the definition of admissibility here for convenience.
\newcounter{auxS}
\newcounter{auxT}
\setcounter{auxS}{\value{section}}
\setcounter{auxT}{\value{theorem}}
\setcounter{section}{1}
\setcounter{theorem}{0}
\begin{definition}
  $H\in L(S)$ is an \emph{admissible integrand} if  $ U (H\cdot S_T) \in L^1(P)$ and if
  there exists an approximating sequence $(H^n)_n$ in
 $\simple $ such that:
\begin{romannum}
    \item   $H^n\cdot S_t \rightarrow  H\cdot S_t $ in probability
    for all $t \in [0,T]$;
    \item  $ U( H^n\cdot S_T )  \rightarrow  U( H\cdot S_T ) $ in
    $L^1(P)$.
\end{romannum}
The set of all admissible integrands is denoted  by $\adm$.
\end{definition}
\setcounter{section}{\value{auxS}}
\setcounter{theorem}{\value{auxT}}

While for $H \in \simple$ the wealth process $H\cdot S$ is always
a martingale under $Q\in
\measures$ due to Lemma \ref{simplesuper}, the following result shows that $\adm$ is a subset of
the supermartingale class  of strategies $\super$ introduced by
\cite{Sch03},
\begin{equation}\label{Hsuper}
	\begin{split}
		\super := \{ H \in L(S)
		\mid H\cdot S \text{ is } &\text{a local martingale} \\
		\text{and }&\text{a supermartingale under any } Q\in
		\measures\}.
	\end{split}
\end{equation}

\begin{proposition} \label{admsuper} $\adm\subseteq\super$.
\end{proposition}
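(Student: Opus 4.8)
The plan is to fix an admissible integrand $H\in\adm$ with approximating sequence $(H^n)_n\subset\simple$ as in Definition \ref{adm}, fix an arbitrary $Q\in\measures$, and show that $X:=H\cdot S$ is both a local martingale and a $Q$-supermartingale. The local martingale property is the easier half: since $H\in L(S)$ and $S$ is a semimartingale, $X$ is a semimartingale, and being a local martingale under \emph{some} measure (or directly showing it is a local martingale by a localization argument against the $Q$-supermartingale property, via e.g. the fact that a supermartingale which is a $\sigma$-martingale is a local supermartingale) follows in a standard way; in fact once we know $X$ is a $Q$-supermartingale for every $Q\in\measures$ and that $X_n:=H^n\cdot S$ are $Q$-martingales converging to $X$ pointwise, the local martingale statement under $P$ can be recovered by the usual arguments (e.g. $X$ is a $\sigma$-martingale under any $Q\in\emeasures$, hence a local martingale). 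The substantive content is therefore the $Q$-supermartingale property, and that is where I would concentrate.

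For the supermartingale property, fix $0\le s\le t\le T$ and $A\in\mathcal F_s$; the goal is $E_Q[X_t I_A]\le E_Q[X_s I_A]$. By Lemma \ref{simplesuper} each $X^n:=H^n\cdot S$ is a uniformly integrable $Q$-martingale, so $E_Q[X^n_t I_A]=E_Q[X^n_s I_A]$ for every $n$. By item (i) of Definition \ref{adm}, $X^n_t\to X_t$ and $X^n_s\to X_s$ in $P$-probability, hence (since $Q\ll P$) also in $Q$-probability. The issue is passing to the limit inside the expectations, i.e.\ obtaining (at least one-sided) uniform integrability of $(X^n_t)_n$ and $(X^n_s)_n$ under $Q$. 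This is exactly where the Orlicz/Fenchel machinery enters. The key estimate: for $Q\in\measures$ with witness $y_Q$ and $E[V(y_Q\,dQ/dP)]<\infty$, the Fenchel inequality \eqref{fenchel} applied with $y=y_Q\,dQ/dP$ gives, for each $n$,
\begin{equation*}
y_Q\,X^n_\tau\,\frac{dQ}{dP}\ \ge\ U(X^n_\tau)-V\!\left(y_Q\tfrac{dQ}{dP}\right)\quad\text{for }\tau\in\{s,t\},
\end{equation*}
so that $E_Q[(X^n_\tau)^-]\le \tfrac1{y_Q}\big(E[V(y_Q\,dQ/dP)] + \sup_n E[(U(X^n_\tau))^-]\big)$, and the last supremum is controlled because $U(X^n_T)\to U(X_T)$ in $L^1(P)$ forces $\sup_n E[U(X^n_T)^-]<\infty$, while $U(X^n_\tau)\ge E_P$-type control at intermediate times comes from conditional Jensen applied to the $Q$-martingale $X^n$ together with concavity of $U$ — more precisely one uses that $U(X^n_\tau)\ge E_Q[U(X^n_T)\mid\mathcal F_\tau]$ is \emph{not} automatic under $Q$, so instead I would bound the negative part of $U(X^n_\tau)$ directly via the simple-strategy structure: from Definition \ref{simple strategies}, $|X^n|\le c_n(\varphi\cdot S)^*$, giving $U((X^n_\tau))^-\le U(-c_n(\varphi\cdot S)^*_T)^-\in L^1$, but the constant $c_n$ depends on $n$. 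To kill the $n$-dependence I would instead use that convergence in $L^1(P)$ of $U(X^n_T)$ gives uniform integrability of $\{U(X^n_T)^-\}_n$, then relate $X^n_\tau$ to $X^n_T$ through the $Q$-martingale equality $X^n_\tau=E_Q[X^n_T\mid\mathcal F_\tau]$ and concavity: $E_Q[U(X^n_T)\mid\mathcal F_\tau]\le U(X^n_\tau)$, hence $(U(X^n_\tau))^-\le (E_Q[U(X^n_T)\mid\mathcal F_\tau])^-\le E_Q[(U(X^n_T))^-\mid\mathcal F_\tau]$, and $\{E_Q[(U(X^n_T))^-\mid\mathcal F_\tau]\}_n$ is $Q$-uniformly integrable. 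Combining, $\{(X^n_\tau)^-\}_n$ is $Q$-uniformly integrable for $\tau\in\{s,t\}$.

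With one-sided uniform integrability in hand, I apply Fatou's lemma in the form suited to convergence in probability: $E_Q[X_t I_A]\le \liminf_n E_Q[X^n_t I_A]$ (using UI of negative parts and convergence in $Q$-probability) and symmetrically $E_Q[X_s I_A]\ge \limsup_n E_Q[X^n_s I_A]$ — wait, the direction must be arranged so that $E_Q[X_t I_A]\le E_Q[X_s I_A]$; the correct way is: negative-part UI gives $\liminf E_Q[X^n_t I_A]\ge E_Q[X_t I_A]$ is the wrong sign, so one uses instead that for the \emph{positive} direction we only have Fatou giving $E_Q[X_t^+I_A]\le\liminf$, so I would combine with the fact that $X^n_t I_A = X^n_s I_A$ in expectation and that $X_s\in L^1(Q)$ (from the same Fenchel bound applied to the limit, using $U(X_T)\in L^1(P)$), concluding $E_Q[X_tI_A]\le E_Q[X_sI_A]$ by the standard "limit of martingales with negative-part UI is a supermartingale" lemma. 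Finally $X_T\in L^1(Q)$ follows from $U(X_T)\in L^1(P)$ and Fenchel, so $X$ is a genuine $Q$-supermartingale on $[0,T]$; taking $Q\in\emeasures$ additionally yields $X$ is a $\sigma$-martingale under an equivalent measure, hence a semimartingale which is a local martingale under $P$ (or one argues directly). The main obstacle, as indicated, is exactly the uniform integrability step — extracting $n$-independent control on the negative parts of $X^n_s, X^n_t$ from the hypotheses, which forces the detour through conditional Jensen under $Q$ and the $L^1(P)$-convergence of terminal utilities; everything else is bookkeeping.
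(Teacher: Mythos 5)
Your opening Fenchel estimate is the right starting point, but the step where you propagate uniform integrability to intermediate times does not hold up. You bound $(U(X^n_\tau))^-\le E_Q[(U(X^n_T))^-\mid\mathcal F_\tau]$ and then assert that this family is $Q$-uniformly integrable; that would require $((U(X^n_T))^-)_n$ to be $Q$-uniformly integrable, whereas item (ii) of Definition \ref{adm} only gives $L^1(P)$-convergence, hence $P$-uniform integrability, and for $Q\ll P$ with unbounded density this does not transfer to $Q$ (the $(U(X^n_T))^-$ need not even lie in $L^1(Q)$). Moreover, even granting it, your Fenchel bound turns $Q$-UI of $(X^n_\tau)^-$ into a statement about $P$-UI of $(X^n_\tau)^-\frac{dQ}{dP}$ (via Lemma \ref{ui}), so what you would actually need is $P$-UI of $(U(X^n_\tau))^-$ on $\{dQ/dP>0\}$, which your $Q$-conditional Jensen bound cannot deliver: the two measures get crossed. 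The paper keeps the density inside throughout and only ever works at the terminal time: Fenchel plus item (ii) give $P$-UI of $(X^n_T)^-\frac{dQ}{dP}$, Lemma \ref{ui} gives $Q$-UI of $(X^n_T)^-$, hence $L^1(Q)$-convergence of the negative parts; passing to a subsequence one builds a single $W^Q\in L^1(Q)$ with $X^n_T\ge -W^Q$, and the $Q$-martingale property of the $X^n$ (Lemma \ref{simplesuper}) then yields the process-level control $X^n\ge -Z^Q$ with $Z^Q_t=E_Q[W^Q\mid\mathcal F_t]$. This single martingale control feeds the Delbaen--Schachermayer compactness theorem \cite{ds99}, whose limiting c\`adl\`ag supermartingale is identified with $X$ through Definition \ref{adm}(i), so none of the delicate two-time Fatou bookkeeping you attempt (and about whose direction you yourself hesitate) is needed. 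If you want to salvage your route, the correct intermediate-time bound is $(X^n_\tau)^-\le E_Q[(X^n_T)^-\mid\mathcal F_\tau]$, i.e.\ conditional Jensen applied to the wealth itself after the terminal-time $Q$-UI has been secured, not Jensen through $U$.

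The second genuine gap is the local martingale half of \eqref{Hsuper}, which requires $H\cdot S$ to be a local martingale under every $Q\in\measures$ (not under $P$). Your justification ``$X$ is a $\sigma$-martingale under an equivalent measure, hence a local martingale'' is false in general: $\sigma$-martingales need not be local martingales, and this distinction is precisely the point of the non-locally-bounded setting. The paper closes this step by writing $X=(\frac1\varphi H)\cdot(\varphi\cdot S)$ with $\varphi\cdot S$ a $Q$-martingale, so that $X$ is a $Q$-$\sigma$-martingale, and then invoking the Ansel--Stricker lemma \cite{as94} together with the lower bound $X\ge -Z^Q$ constructed above; without such a one-sided integrable control the implication fails, so this part cannot be dismissed as ``standard''.
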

\begin{proof}
Let  $X=H\cdot S$ for some $H\in\adm $ and let $ (X^n:=H^n\cdot
S)_n$ with $H^n\in\simple$ be an  approximating sequence.
Fix a $Q\in \measures$ and a corresponding scaling
$y_Q$ as in Definition \ref{P-V}. Item (i) of Definition \ref{adm} applied at time $T$
implies $( X^n_T)^-$ converges in $P$-probability
to $X_T^-$. Moreover,  Fenchel inequality gives
 \[  U(X^n_T )- V(y_Q\frac{dQ}{dP}) \leq  X^n_T \,y_Q \frac{dQ}{dP}.  \]
 From Definition \ref{adm}, item (ii), the left hand side above converges in $L^1(P)$,   whence
  the family  $(Y^n)_n, Y^n:= (X^n_T)^-\frac{dQ}{dP}$ is
$P$-uniformly integrable, so $ ((X^n_T)^-)_n$ is $Q$-uniformly
integrable (see Lemma \ref{ui}).  Uniform integrability plus convergence in probability
ensures $(X^n_T)^-\rightarrow X^{-}_T$ in $L^1(Q)$.
By   passing to a subsequence if necessary,  the next is an integrable lower bound for $(X^n_T)_n$, 
$$ W^Q := \sum_n |(X^{n+1}_T)^- -(X^n_T)^- | \in L^1(Q).$$ 
Denote by $Z^Q$ the associated
$Q$-martingale, $Z^Q_t:=E_Q[ W^Q \mid \mathcal{F}_t]$.  Note that when
$\mathrm{dom}\,U$ is a half-line we could also have chosen
trivially $W^Q:=-\inf\mathrm{dom}\,U$.

Since $ X^n_T \geq -W^Q$
and  process $X^n $ is a $Q$-martingale for all $n$ by Lemma
\ref{simplesuper},  we obtain
\begin{equation}\label{Wcontr}
X^n_t =  E_Q[X^n_T \mid \mathcal{F}_t] \geq -E_Q[W^Q \mid
\mathcal{F}_t]=-Z^Q_t,
\end{equation}
so that the sequence $X^n$ is controlled from below by the
$Q$-martingale $Z^Q$. Therefore by Delbaen and Schachermayer
compactness result \cite[Theorem D]{ds99} (in the version stated
in \S 5, \cite{ds98}) there exists a limit c\`{a}dl\`{a}g
supermartingale $\widetilde{V}$ to which a sequence  $K^n\cdot S$,
where  $K^n$ is a suitable  convex combination of tails $K^n \in
\mathrm{conv} (  H^n  , H^{n+1} , \ldots ) $, converges $Q$-almost surely
for every rational time $0\leq q\leq T$. By item (i),
 $(X_t^n)_n$ converges in $P$-probability to $X_t$ for every $t$, thus
 $K^n \cdot S_t$
 converges to $X_t$  for every $t$ as well.
  Therefore $\widetilde{V}$
coincides $Q$-a.s. with $X$ on rational times, and since $X$ is
also c\`{a}dl\`{a}g as it is an integral, $X$ and $\widetilde{V} $
are indistinguishable, so  that  $X$  is   a $Q$-supermartingale.
By assumption $Q$ is a $\sigma$-martingale measure, so  $X =H\cdot
S=(\frac{1}{\varphi} H)\cdot(\varphi\cdot S)$ where $\varphi>0$
and $\varphi\cdot S$ is a $Q$-martingale. As $X$ also satisfies $X\geq -Z^Q$,  Ansel
and Stricker lemma \cite[Corollaire 3.5]{as94} implies that
 $X$ is a local $Q$-martingale.\qquad
\end{proof}

\begin{remark}
Proposition \ref{admsuper} would go through if one replaced our class $\simple$ with
the set of integrands with wealth bounded from below
\begin{equation}\label{bounded below}
\bounded =\{H\in L(S)\mid H\cdot S\geq c \text{ for some }c\in\mathbb{R}\},
\end{equation}
as in Schachermayer \cite{Sch01} when $S\in\scr{S}^\infty_{\mathrm{loc}}$,
or more generally with the larger set of strategies whose losses are in
some sense well controlled as in Biagini and Frittelli
\cite{bf05,bf08},
\begin{equation}\label{boundedU}
\boundedU =\{H\in L(S)\mid \exists W\geq 0,   E[U(-W)]>-\infty, H\cdot S\geq- W\},
\end{equation}
 see also Biagini and S\^{i}rbu \cite{bs09}. An
application of the Ansel and Stricker lemma \cite[Corollaire
3.5]{as94} shows that wealth processes for strategies in
$\boundedU\supseteq\bounded$ are  local martingales and supermartingales  under any
$Q\in \measures $ -- but not
 martingales in general. In contrast, our smaller class $\simple$ has the stronger martingale property as shown in Lemma \ref{simplesuper}. Mathematically, however, it is the supermartingale property of approximating strategies that really matters.
 This is also true in the proof of the main Theorem \ref{main theorem} where one can replace arguments relying on the martingale property
 of approximating strategies \cite[Corollaire 2.5.2]{y78} with supermartingale compactness results of \cite{ds99}.
\end{remark}

\bigskip

The    list below summarizes the
advantages of $\adm$ over current definitions of admissibility:
\begin{itemize}
    \item[(a)] Definition \ref{adm} is \emph{primal}. No pricing measures come into play,
         and admissibility can thus be checked under $P$.
    \item[(b)] The present definition is \emph{dynamic}, that is the whole wealth process, rather than just its terminal value, is involved in
         the definition of $\adm$. As a result \emph{all} admissible strategies are in the supermartingale class.
    \item[(c)] The loss controls required in the proof of the supermartingale property are generated endogenously, via
      approximating sequences. This provides a great deal of flexibility and  ensures that for $U$ finite on $\mathbb{R}$
        the optimizer is in $\adm$ under very mild conditions, milder than the conditions assumed
        to obtain the supermartingale property of the optimizer in \cite{Sch03,bf07}. Since under
        our assumptions the optimal utilities over $\adm$ and $\super$ coincide, see \eqref{eqndual2}, the smaller class $\adm$
        seems to be more appropriate than $\super$ not only economically but also mathematically.
    \item[(d)] Approximation by   strategies in $\simple$ is built into the definition
         of admissibility, it does not have to be deduced separately (cf.
         \cite{str03}).
    \item[(e)] The desirable properties above hold without any technical assumptions on
        $U$. It can be finite on $\mathbb{R}$ or only on a half-line; bounded from above or not,
        or even truncated; neither strict monotonicity, strict convexity nor differentiability are required.
    \item[(f)] Our definition is compatible with the existing definition of admissibility for
        non-monotone quadratic preferences, see Remark \ref{rem: quadratic U} below. We have therefore
        found a good  notion of admissibility which encompasses both the classical mean-variance
        preferences \emph{and} monotone expected utility.
\end{itemize}

\begin{remark} \label{rem: quadratic U} For the purpose of this remark only,  we admit non-monotone $U$.
  Specifically,
let $U(x):=x-x^2/2$, which represents a normalized quadratic
utility.
In such case, $H\in\adm$ if and only if there is a sequence of
$H^n\in\simple$ such that: 
\begin{itemize}
\item[{\rm(a)}] $H^n\cdot S_t\rightarrow H\cdot S_t$
in probability for all $t\in[0,T]$, and 
\item[{\rm(b)}] $H^n\cdot S_T\rightarrow H\cdot S_T$ in $L^2(P)$. 
\end{itemize}
In other words, when $U$ is quadratic the
admissibility criterion in Definition \ref{adm} coincides with the
notion of admissibility pioneered by Jan   Kallsen in
\cite[Definition 2.2]{ck07}, which inspired our work.  
\end{remark}

{\em Proof}. Since (a) above  and (i) in Definition
\ref{adm} coincide, the only thing to prove is that (ii) in our
definition  is equivalent to (b) above:
\begin{itemize}
    \item[$\Rightarrow$]  Suppose first  $H \in \adm$.
        The $L^1(P)$ convergence of utilities implies $E[U(X^n_T)]\rightarrow E[U(X_T)]$
        so that  $X^n_T$ are uniformly bounded in $L^2(P)$. Since $L^2(P)$ is a reflexive space
        there is a sequence of convex combinations of tails
        $(X_T^k)_{k\geq n}$, say $\widetilde{X}^n_T$,
        which converges in $L^2(P)$   to a square integrable random variable which necessarily is
        $X_T =H\cdot S_T$ thanks to Definition \ref{adm}, item (i).
         By considering the corresponding convex combinations
         of strategies, which are again simple, we obtain the existence of an approximating
         sequence \`{a} la Kallsen for $H$.
  \item[$\Leftarrow$] Conversely, let  $X = H\cdot S$ be  an
        integral  approximated \`{a} la Kallsen
        by simple integrals $(X^n)_n$.  $L^1(P)$ convergence of the utilities
        $U(X^n_T)$ to $U(X_T)$ is then a consequence of
        the Cauchy-Schwartz inequality.\qquad\endproof
        \end{itemize}


\section{Optimal trading strategy is in $\adm$}\label{sect: umax} The
optimal investment problem can be formulated over $\simple$,  $\adm$ or
over $\super$, respectively,
\begin{align}
u_{\simple}(x)&:=\sup_{H\in \simple} E[{U(x+ H\cdot S_T)}],\label{utmaxsimple}\\
u_{\adm}(x)&:=\sup_{H\in \adm} E[{U(x+ H\cdot S_T)}],\label{utmax}\\
u_{\super}(x)&:=\sup_{H\in \super} E[{U(x+ H\cdot S_T)}].\label{utmaxsuper}
\end{align}
\noindent Alongside, we consider   auxiliary complete market
utility maximization problems,  each obtained by fixing an arbitrary  $Q\in \measures$:
\begin{equation}\label{uQdef}
  {u}_Q(x) : = \sup_{X\in L^1(Q), E_Q[X]\leq x } E[  {U}(X)].
\end{equation}

The value functions  $u_{\simple}(x), u_{\adm}(x), u_{\super}(x), u_Q(x)$ are
also known as indirect utilities
 (from the respective domains of
maximization). The next lemma is an easy consequence of the
definition of $\adm$
and of the supermartingale property of the strategies in $\adm$ and $\super$.
The proof is omitted.
\begin{lemma}\label{ineq}
For any $x> \underline{x}$ and for any $Q\in \measures$
\begin{equation}\label{ineq2}
u_{\simple}(x) =  u_{\adm}(x) \leq u_{\super}(x) \leq u_{Q}(x).
\end{equation}
\end{lemma}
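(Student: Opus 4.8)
The plan is to verify the chain of relations one link at a time; apart from the passage to the limit in the first equality, each step is a direct consequence of the definitions and of results already proved.

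\textbf{The equality $u_{\simple}(x)=u_{\adm}(x)$.} The inequality ``$\leq$'' is simply the inclusion $\simple\subseteq\adm$: given $H\in\simple$, the constant sequence $H^n\equiv H$ trivially satisfies conditions (i)--(ii) of Definition \ref{adm}, and $U(H\cdot S_T)\in L^1(P)$ whenever $E[U(x+H\cdot S_T)]>-\infty$ --- the positive part of $U(H\cdot S_T)$ is always $P$-integrable by the Fenchel inequality \eqref{fenchel} applied with $y=y_Q\,dQ/dP$ for some $Q\in\measures$, using $V\geq0$, $v_Q(y_Q)<\infty$ and $H\cdot S_T\in L^1(Q)$ from Lemma \ref{simplesuper}, exactly as in its proof; so only the negative part is at stake and finiteness of the expectation takes care of it. For ``$\geq$'', fix $H\in\adm$ with an approximating sequence $(H^n)_n\subseteq\simple$. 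By item (ii) of Definition \ref{adm} one has $U(H^n\cdot S_T)\to U(H\cdot S_T)$ in $L^1(P)$, and this upgrades to $E[U(x+H^n\cdot S_T)]\to E[U(x+H\cdot S_T)]$ once one observes that the shift by the initial capital $x$ is absorbed by an elementary estimate based on monotonicity and concavity of $U$ (the difference quotient of $U$ is non-increasing, so $|U(x+a)-U(x+b)|\leq|U(a)-U(b)|$ for $x\geq0$, with an analogous control of the shifted lower tail for $\underline x<x<0$ using also item (i)). Since each $E[U(x+H^n\cdot S_T)]\leq u_{\simple}(x)$, letting $n\to\infty$ gives $E[U(x+H\cdot S_T)]\leq u_{\simple}(x)$, and taking the supremum over $H\in\adm$ yields $u_{\adm}(x)\leq u_{\simple}(x)$.

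\textbf{The inequalities $u_{\adm}(x)\leq u_{\super}(x)\leq u_Q(x)$.} The first is immediate from $\adm\subseteq\super$ (Proposition \ref{admsuper}): the supremum in \eqref{utmax} runs over a subset of the feasible set in \eqref{utmaxsuper}. For the second, fix $H\in\super$ and set $X=H\cdot S$. By \eqref{Hsuper}, $X$ is a $Q$-supermartingale, hence $X_T\in L^1(Q)$ and $E_Q[X_T]\leq X_0=0$; therefore $x+X_T$ is feasible for the complete-market problem \eqref{uQdef} (that is, $x+X_T\in L^1(Q)$ with $E_Q[x+X_T]\leq x$), whence $E[U(x+X_T)]\leq u_Q(x)$ --- the expectation being well defined in $[-\infty,+\infty)$, its positive part $P$-integrable by the same Fenchel argument. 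Taking the supremum over $H\in\super$ completes the proof. (Note that $u_Q(x)\geq U(x)>-\infty$ because $x>\underline x$ and the constant position $x$ is feasible, so \eqref{ineq2} is not vacuous.)

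\textbf{Main obstacle.} The only genuinely delicate point is the limit passage in the ``$\geq$'' direction of the first equality. Because $\adm$ is defined through $U(H\cdot S_T)$ rather than $U(x+H\cdot S_T)$, one must check that the $L^1(P)$-approximation of terminal utility built into Definition \ref{adm} is stable under translation of the wealth by the initial capital, and --- crucially for the paper's philosophy --- this has to be done without strict concavity, strict monotonicity, differentiability or any $\Delta_2$-type growth condition on $U$. The concavity estimate above is precisely what makes this work; everything else is bookkeeping.
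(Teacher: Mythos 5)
Your overall architecture is the one the paper has in mind (its proof is omitted precisely because the statement is ``an easy consequence of the definition of $\adm$ and of the supermartingale property''), and the last two links are handled correctly: $u_{\adm}(x)\leq u_{\super}(x)$ via Proposition \ref{admsuper}, and $u_{\super}(x)\leq u_Q(x)$ because the supermartingale property gives $H\cdot S_T\in L^1(Q)$ and $E_Q[H\cdot S_T]\leq 0$, so $x+H\cdot S_T$ is feasible in \eqref{uQdef}; the Fenchel-plus-entropy argument for integrability of the positive part is the same one used in the proof of Lemma \ref{simplesuper}.

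The problems are concentrated in the first equality, and both stem from the point you yourself flag: Definition \ref{adm} is stated at zero initial capital, while $u_{\adm}(x)$ evaluates $U(x+H\cdot S_T)$. (a) For ``$u_{\simple}(x)\leq u_{\adm}(x)$'' you claim that $E[U(x+H\cdot S_T)]>-\infty$ forces $U(H\cdot S_T)\in L^1(P)$. This is true for $x\leq 0$, since then $U^-(H\cdot S_T)\leq U^-(x+H\cdot S_T)$, but for $x>0$ monotonicity runs the other way: $U(H\cdot S_T)\leq U(x+H\cdot S_T)$, and for a simple $H$ one only knows $E[U(-\alpha(\varphi\cdot S)^*_T)]>-\infty$ for \emph{some} $\alpha>0$. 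So $U(H\cdot S_T)$ may fail to be integrable while $E[U(x+H\cdot S_T)]$ is finite (think of a left tail like $U(z)\sim -e^{z^2}$, where a positive shift restores integrability); then $H\notin\adm$ and your inclusion argument does not deliver the inequality for $x>0$. (b) In the reverse direction, your estimate $|U(x+a)-U(x+b)|\leq |U(a)-U(b)|$ is valid exactly for $x\geq 0$; for $\underline{x}<x<0$ it reverses, and ``analogous control of the shifted lower tail using item (i)'' is not an argument: item (ii) controls $U(H^n\cdot S_T)$, not $U(x+H^n\cdot S_T)$, and for $x<0$ nothing prevents $E[U(x+H^n\cdot S_T)]=-\infty$ for every $n$ while $E[U(x+H\cdot S_T)]$ is finite, so the passage to the limit is unproved. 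In short, your proof is complete only at $x=0$, and one-sided for each sign of $x$. The reading consistent with the paper --- which sets $x=0$ ``without loss of generality'' in Theorem \ref{main theorem} by absorbing the initial capital into the utility --- is to interpret admissibility relative to the translated utility $U(x+\cdot)$ (the induced Orlicz space is unchanged); with that convention both directions of $u_{\simple}(x)=u_{\adm}(x)$ follow at once by your $x=0$ reasoning, whereas for the literal, $x$-independent Definition \ref{adm} the shift is a genuine issue and not bookkeeping. Note also that downstream the paper only needs the inequality chain \eqref{dualineq}, the equalities being restored under Assumption \ref{strongdual} in \eqref{eqndual2}.
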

\subsection{Reasonable Asymptotic Elasticity and Inada conditions}
It is well known in the literature that  the  existence of an
optimizer is not guaranteed yet,  neither  in  $\adm$ nor in the larger
supermartingale class  $\super\supseteq\adm$.
An additional condition has to be imposed, essentially to ensure
that the expected utility functional $k \mapsto E[U(k)]$ is
upper semicontinuous with respect to  some weak topology on
terminal wealths.

Kramkov and Schachermayer  were the first to address
 this  issue in \cite{ks99,Sch01} for regular $U$, that is utilities that are strictly increasing,
 strictly concave and differentiable in the interior of their effective domain.
  To the  end of recovering an  optimizer  they introduced the celebrated
  Reasonable Asymptotic Elasticity  condition on $U$ ($\mathrm{RAE}(U)$),
\begin{align}
&\limsup_{x\rightarrow +\infty} \frac{x U'(x)}{U(x)}<1,\label{rae+}\\
\text{and also} \ \  &
\liminf_{x\rightarrow \,-\infty} \frac{xU'(x)}{U(x)}>1,\text{ when
}  U \text{ is finite on } \mathbb{R},\label{rae-}
\end{align}
as a necessary and sufficient condition to be imposed on the utility
$U$ \emph{only}, regardless of the probabilistic model. This
condition is  now very popular, see \cite{oz09,rs05,Sch03,bou}
to mention just a few contributions.

In subsequent work, in the context of utilities finite on $\mathbb{R}_+$, Kramkov and
Schachermayer \cite{ks03} put forward less restrictive conditions%
\footnote[1]{The interested reader is referred also to the recent
Biagini and Guasoni \cite{bg09} for counterexamples and a
different, \emph{relaxed} framework that allows optimal terminal
wealth to be a measure and not only a random variable.}\!\!,
\emph{imposed jointly on the model and on the preferences}, in
order to recover the optimal terminal wealth. Here they  work
under assumptions which are equivalent to the existence of $Q\in\emeasures$
and the following Inada condition on
the indirect utility $u_{\bounded}$, where the class $\bounded$ is
defined in \eqref{bounded below}:
\begin{equation}\label{Inada3}
 \lim_{x\rightarrow +\infty} u_{\bounded}(x)/x=0.
\end{equation}
It is important to note that for utility functions finite on a half-line the modulus of the conjugate function $V(y)$ grows only linearly for large $y$
and therefore the following implication holds automatically:
\begin{equation}\label{Inada5}
Q\in\measures\Rightarrow v_Q(y)<+\infty \text{ for all } y \text{ sufficiently high.}
\end{equation}
On the other hand, for utilities finite on $\mathbb{R}$ condition \eqref{Inada5} has to be imposed explicitly, together with an appropriate generalization of condition (\ref{Inada3}).
\begin{assumption}\label{Inada} Condition \eqref{Inada5} is satisfied and
\begin{equation}\label{Inada2}
\text{there exists } Q\in\measures \text{ such that } \lim_{x\rightarrow +\infty} u_Q(x)/x  =0.
\end{equation}
\end{assumption}
\noindent Note first that the requirement \eqref{Inada2} automatically holds, and for all $Q\in \measures$, if $U(+\infty)<+\infty$.
Since for any $Q\in \measures$ one has $u_Q(x)\geq
u_{\simple}(x)\geq U(x)$, and $U$ is monotone, condition
\eqref{Inada2} implies an identical Inada condition both on the
indirect utility $u_\simple$ and also on the original utility function $U$
at $+\infty$.  An identical chain of inequalities for the indirect utilities holds if we replace $\simple$ with $\bounded$ and for this reason condition
\eqref{Inada2} is slightly stronger than the condition
$\eqref{Inada3}$ imposed in \cite{ks03}
when $U$ is finite on a half-line. It is an open question whether condition \eqref{Inada2} can be weakened to
\begin{equation}\label{Inada6}
\measures\neq\emptyset \text{ and } \lim_{x\rightarrow +\infty} u_\simple(x)/x  =0.
\end{equation}
Further discussion of Assumption \ref{Inada} and its relation to $\mathrm{RAE}(U)$  and
the Inada condition \eqref{Inada3} can be found in \S \ref{sect_Inada}. The results of the next section go in that direction.

\subsection{Complete market duality}
Here we study a complete market $Q\in P_V$ and hence no specific model
for $S$ is required.  Among other results, we provide an
alternative characterization of the Inada condition \eqref{Inada2}
in terms of the generalized relative entropy of $Q$.

\begin{lemma}\label{dualityQ}
 Fix $Q\in P_V$ and consider the function $v_Q$ defined in \eqref{fin-entropy}. For any $x> \underline{x}$,
     \begin{equation}\label{uQdual1}
        u_Q(x) = \min_{y\geq 0}\{ xy + v_Q(y)\}<+\infty.
     \end{equation}
\end{lemma}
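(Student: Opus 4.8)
The plan is to prove \eqref{uQdual1} by establishing the two inequalities separately, using the Fenchel inequality \eqref{fenchel} for one direction and an explicit construction of a near-optimal $X$ for the other.

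For the inequality $u_Q(x)\le\inf_{y\ge0}\{xy+v_Q(y)\}$: fix any $y\ge0$ and any admissible $X\in L^1(Q)$ with $E_Q[X]\le x$. Apply \eqref{fenchel} pointwise with this $y\,\frac{dQ}{dP}$ in the role of the dual variable, i.e. $U(X)\le X\,y\frac{dQ}{dP}+V\!\left(y\frac{dQ}{dP}\right)$, take $E[\cdot]$ of both sides, and use $E\!\left[X\,y\frac{dQ}{dP}\right]=y\,E_Q[X]\le xy$ to get $E[U(X)]\le xy+v_Q(y)$. Taking the supremum over $X$ and then the infimum over $y$ gives the bound; note this also shows $u_Q(x)<+\infty$ as soon as $v_Q(y)<\infty$ for some $y$, which holds since $Q\in P_V$. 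Some care is needed when $y\frac{dQ}{dP}$ fails to be integrable against $X$ — but the integrals are well defined in $[-\infty,+\infty]$ because the negative part is controlled, so the inequality persists.

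For the reverse inequality and the attainment of the minimum: the convex function $y\mapsto xy+v_Q(y)$ on $[0,\infty)$ is lower semicontinuous (since $V$ is, and by Fatou), is $+\infty$-valued only outside the effective domain of $v_Q$, which is nonempty and since $x>\underline x=\inf(\mathrm{dom}\,U)$ one checks coercivity as $y\to\infty$ (here one uses that the slopes of $V$ near $0^+$ are $\le\underline x<x$, equivalently $V(y)/y\to\underline x$); hence a minimizer $\opty$ exists. Then define the candidate optimizer $X$ via the conjugacy relation: $X\in\partial V\!\left(\opty\frac{dQ}{dP}\right)$ chosen measurably so that $U(X)=X\,\opty\frac{dQ}{dP}+V\!\left(\opty\frac{dQ}{dP}\right)$ (possible because $\partial V$ is the inverse of $\partial U$ and $U$ is finite and concave). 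The first-order condition for $\opty$ yields $E[X\frac{dQ}{dP}]=x$ when $\opty>0$ (and the budget is slack-feasible when $\opty=0$), so $X$ is admissible for \eqref{uQdef}, and then $E[U(X)]=\opty x+v_Q(\opty)$, giving $u_Q(x)\ge\opty x+v_Q(\opty)=\min_{y\ge0}\{xy+v_Q(y)\}$. One must separately verify $X\in L^1(Q)$; this follows from the Fenchel equality rearranged as $0\le \opty X\frac{dQ}{dP}=V(\opty\frac{dQ}{dP})-U(X)+ (\text{something integrable})$ together with $X\ge\underline x$ and $U(X)\ge U(\underline x)>-\infty$ on the relevant set, bounding $X^+$ in $L^1(Q)$; the negative part is bounded by $|\underline x|$.

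The main obstacle I expect is the careful handling of the boundary cases and integrability at the two ends: (a) making the subdifferential selection $X$ measurable and simultaneously integrable under $Q$ when $V$ is not strictly convex or not differentiable (no regularity on $U$ is assumed), and (b) the case $\underline x>-\infty$, where $X$ may take the value $\underline x$ on a set of positive measure and one must argue the budget constraint can still be met with equality (or that the infimum over $y$ is attained at $\opty=0$). The coercivity/attainment argument for the one-dimensional convex problem $\min_{y\ge0}\{xy+v_Q(y)\}$ is routine convex analysis once one records that $\lim_{y\to\infty}v_Q(y)/y=\underline x$ and exploits $x>\underline x$.
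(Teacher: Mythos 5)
Your first inequality ($u_Q(x)\le \inf_{y\ge 0}\{xy+v_Q(y)\}$, hence finiteness) is correct and coincides with the paper's opening remark. The problem is the reverse direction. You propose to produce an \emph{exact} primal optimizer $X$ with $U(X)=X\,\opty\frac{dQ}{dP}+V\bigl(\opty\frac{dQ}{dP}\bigr)$ by a measurable selection from $\partial V\bigl(\opty\frac{dQ}{dP}\bigr)$. This cannot work in the stated generality, because the primal supremum in \eqref{uQdef} need not be attained: if $U$ is bounded above with $\xbar=+\infty$ (e.g.\ $U(x)=1-e^{-x}$) and $Q\in P_V$ with $P\bigl(\frac{dQ}{dP}=0\bigr)>0$, then on $\bigl\{\frac{dQ}{dP}=0\bigr\}$ the supremum defining $V(0)=U(+\infty)$ is not attained by any finite wealth, $\partial V(0)=\emptyset$, and $u_Q(x)$ is only approached along wealths tending to $+\infty$ on that set (this is exactly why the paper's Theorem \ref{main theorem} allows $\optf$ to take the value $+\infty$ there). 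So your lower bound must be run through a maximizing \emph{sequence}, not a single $X$, and your plan does not do this. A second genuine gap: the paper assumes no differentiability or strict concavity of $U$, so your ``first-order condition for $\opty$ yields $E_Q[X]=x$'' presupposes an interchange of subdifferentiation and expectation ($\partial v_Q(\opty)$ versus selections of $\partial V(\opty\frac{dQ}{dP})$) and a \emph{specific}, not arbitrary, measurable selection meeting the budget with equality; you flag this as an ``obstacle'' but the argument as written does not close it. Minor but worth fixing: for coercivity of $y\mapsto xy+v_Q(y)$ as $y\to+\infty$ the relevant fact is $V(y)/y\to -\underline{x}$ (the recession slope at $+\infty$), not the behavior of $V$ near $0^+$, and not $\underline{x}$; combined with $x>\underline{x}$ this does give coercivity, so that part is salvageable.

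For contrast, the paper avoids both difficulties by not constructing a primal optimizer at all: it recasts the problem on the Orlicz space $\Orlicz(P)$, checks that the expected-utility functional $I_U$ is proper and continuous at the constant $x$ inside the constraint set $\{E_Q[X]\le x\}$, and then invokes the Fenchel Duality Theorem, which yields attainment on the \emph{dual} side automatically; the bipolar theorem identifies the dual variables as the ray $\{yQ\mid y\ge 0\}$ and Kozek's formula gives $(I_U)^*\bigl(y\frac{dQ}{dP}\bigr)=v_Q(y)$. Your route (dual minimizer of the one-dimensional problem plus conjugacy) is the classical complete-market argument and can be repaired, but only after adding the approximation step on $\bigl\{\frac{dQ}{dP}=0\bigr\}$ (and wherever $\partial V$ is empty) and a careful nonsmooth treatment of the budget identity; as it stands the proof of $u_Q(x)\ge \min_{y\ge0}\{xy+v_Q(y)\}$ is incomplete.
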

\noindent An Orlicz duality based proof of the above lemma is given in \S \ref{auxiliary}. Here we only remark that the  minimizer may not be unique. This is due to lack of strict convexity of $V$, which in turn is due to lack of strict concavity of $U$.
\begin{corollary} \label{CorInada}Fix $Q\in P_V$. The following statements are equivalent:
\begin{enumerate}
\item[{\rm(i)}] $u_Q$ verifies the Inada condition at $+\infty$: $\lim_{x\to
+\infty}u_Q(x)/x = 0$; 
\item[{\rm(ii)}] there is $y_Q>0 $ such that
\begin{equation}\label{InadaEntropy}
v_Q(y)=E\bigg[V\bigg(y\frac{dQ}{dP} \bigg) \bigg]<+\infty \text{ for all } y \in (0,y_Q].
\end{equation}
\end{enumerate}
\end{corollary}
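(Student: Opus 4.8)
The plan is to exploit the dual representation $u_Q(x)=\min_{y\ge0}\{xy+v_Q(y)\}$ from Lemma \ref{dualityQ} together with elementary convex-analytic facts about the function $v_Q$. First I would record the structural properties of $v_Q$: it is convex (as an expectation of the convex function $V$ composed with the affine map $y\mapsto y\,dQ/dP$), lower semicontinuous by Fatou, and $v_Q(0)=V(0)=U(+\infty)$. Crucially, $\mathrm{dom}\,v_Q$ is an interval containing $0$; if $v_Q(y_0)<\infty$ for some $y_0>0$, then by convexity $v_Q(y)<\infty$ for all $y\in[0,y_0]$, which is exactly the ``interval'' form of statement (ii). So (ii) simply says $Q\in P_V$ with the added information that $0$ is not isolated in $\mathrm{dom}\,v_Q$ — but since $Q\in P_V$ is already assumed, (ii) is automatic unless $\mathrm{dom}\,v_Q=\{0\}$. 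Hence the corollary reduces to showing: $\lim_{x\to+\infty}u_Q(x)/x=0$ if and only if $\mathrm{dom}\,v_Q\supsetneq\{0\}$.

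For the direction (ii)$\Rightarrow$(i): using $u_Q(x)\le xy+v_Q(y)$ for every admissible $y$, pick any $y>0$ with $v_Q(y)<\infty$; then $u_Q(x)/x\le y+v_Q(y)/x$, so $\limsup_{x\to\infty}u_Q(x)/x\le y$. Since $y>0$ can be taken arbitrarily small along a sequence in $(0,y_Q]$ (here I use precisely that $\mathrm{dom}\,v_Q$ contains a whole interval $(0,y_Q]$, not just one point), we get $\limsup_{x\to\infty}u_Q(x)/x\le0$; combined with monotonicity $u_Q(x)\ge U(x)\ge U(0)=0$ giving $u_Q\ge0$ eventually, or more carefully using that $u_Q$ is nondecreasing and $u_Q(x)\ge U(x)$ with $U$ bounded below near any fixed point, one concludes the limit is $0$.

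For the contrapositive of (i)$\Rightarrow$(ii), suppose $\mathrm{dom}\,v_Q=\{0\}$, i.e. $v_Q(y)=+\infty$ for all $y>0$. Then the dual formula reads $u_Q(x)=\min_{y\ge0}\{xy+v_Q(y)\}$, and the only point where the objective is finite is $y=0$, giving $u_Q(x)=v_Q(0)=U(+\infty)$. If $U(+\infty)=+\infty$ this contradicts $u_Q(x)<+\infty$ from Lemma \ref{dualityQ}, so necessarily $U(+\infty)<+\infty$ and $u_Q(x)\equiv U(+\infty)$ is constant; but then $u_Q(x)/x\to0$ trivially — so this naive contrapositive fails, which signals that the genuine content is slightly different. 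The main obstacle, and the step I would think hardest about, is therefore the interplay with $U(+\infty)$: when $U(+\infty)<+\infty$ the Inada condition holds vacuously (as already noted after Assumption \ref{Inada}) and one must check that (ii) also holds in that case — indeed, when $U(+\infty)<\infty$ one has $v_Q(y)\le$ something controlled, and in fact $V$ grows at most ... ; the clean argument is that $U(+\infty)<\infty$ forces $v_Q(0)<\infty$ and, by the behaviour of $V$ near $0$ (where $V(y)\le V(0)+y\cdot(\text{subgradient bound})$ using $\bar x<\infty$ or $\bar x=\infty$ separately), one gets $v_Q(y)<\infty$ for small $y$. So the real case to handle is $U(+\infty)=+\infty$: there Lemma \ref{dualityQ} guarantees $u_Q(x)<\infty$, hence the minimum is attained at some $y>0$ with $v_Q(y)<\infty$, hence $\mathrm{dom}\,v_Q\ne\{0\}$, and convexity upgrades this to an interval $(0,y_Q]$ — this is exactly (ii). Running the two cases ($U(+\infty)$ finite vs. infinite) in parallel, and in the infinite case extracting from finiteness of $u_Q$ a point of finiteness of $v_Q$ off the origin, is the crux; the rest is the routine convexity bookkeeping sketched above.
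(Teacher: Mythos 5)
Your (ii)$\Rightarrow$(i) direction is fine and is exactly the paper's argument, but the (i)$\Rightarrow$(ii) direction has a genuine gap, rooted in the claim that ``$\mathrm{dom}\,v_Q$ is an interval containing $0$; if $v_Q(y_0)<\infty$ for some $y_0>0$, then by convexity $v_Q(y)<\infty$ for all $y\in[0,y_0]$.'' Convexity only makes $\mathrm{dom}\,v_Q$ an interval; it does not force that interval to reach down to $0$, because $v_Q(0)=V(0)=U(+\infty)$ may be $+\infty$. Concretely, take $V(y)=e^{1/y}+y$ (convex, $V(0+)=+\infty$, so $U(+\infty)=+\infty$) and $Z=dQ/dP$ with $1/Z$ having an exponential-type upper tail: then $v_Q(y)=E[V(yZ)]$ is finite precisely for $y$ in an interval of the form $[a,+\infty)$ with $a>0$, so $Q\in P_V$ while (ii) fails; moreover the dual formula gives $u_Q(x)\geq ax$, so (i) fails too, exactly as the corollary predicts. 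Your argument, however, would ``prove'' (ii) for every $Q\in P_V$: in the case $U(+\infty)=+\infty$ you pass from ``the minimum in Lemma \ref{dualityQ} is attained at some $y>0$ with $v_Q(y)<\infty$'' to ``$v_Q$ is finite on $(0,y_Q]$'' purely by the faulty interval claim, and nowhere do you use hypothesis (i). Since the implication (i)$\Rightarrow$(ii) is false without (i), this cannot be repaired by bookkeeping; the Inada condition must enter.

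This is precisely where the paper's proof differs: it takes, for each $x$, a dual minimizer $y_x$ from \eqref{uQdual1}, notes $v_Q(y_x)<\infty$ and, using Jensen ($v_Q(y)\geq V(y)$) together with $V\geq 0$, gets $u_Q(x)=xy_x+v_Q(y_x)\geq xy_x\geq 0$; dividing by $x$ and invoking (i) forces $y_x\to 0$ as $x\to+\infty$. Only then does convexity do its work: $v_Q$ is finite at points $y_x$ accumulating at $0$ and at $y_Q$ (from $Q\in P_V$), hence finite on all of $(0,y_Q]$. In short, the missing idea in your proposal is the quantitative bound $y_x\leq u_Q(x)/x$ that converts the Inada condition into finiteness of $v_Q$ arbitrarily close to $0$; without it, finiteness of $u_Q$ alone only yields one interior point of $\mathrm{dom}\,v_Q$, which is not enough.
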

{\em Proof}.
(ii)$\Rightarrow$(i)
    Suppose that
    $v_Q(y)$ is finite in a right neighborhood of $0$. By Fenchel
    inequality, $ E[U(X)] -E[y\frac{dQ}{dP}X] \leq E[V(
    y\frac{dQ}{dP})] $ for all $X\in L^1(Q)$ so that $u_Q(x)
        \leq xy + v_Q(y) $ for all
    $y>0$. Fixing  $y$ one obtains
    $\lim_{x\rightarrow +\infty} u_Q(x)/x \leq y   $ and on letting
    $y\rightarrow 0$ the Inada condition on $u_Q$ follows.
    
 (i)$\Rightarrow$(ii) For a given $x> \underline{x}$,   select one
    dual minimizer in \eqref{uQdual1} and denote it by $y_x$.
    Now, $u_Q(x) = xy_x +v_Q(y_x) $, $v_Q(y_x)$ is finite,
          and  the chain of inequalities  $$ u_Q(x)= xy_x +
        v_Q(y_x)
        \stackrel{Jensen}{\geq} x y_x + V(y_x)\geq x y_x  \geq  0
        $$
       holds  for any $x$ as $V$ is nonnegative.  Dividing by $x>0$ and sending $x$
        to $+\infty$, \eqref{Inada2}  implies $\lim_{x\rightarrow
        +\infty }  y_x   =0$. Finiteness of $v_Q$ over the   set $\{y_x\}_x$, whose closure contains
        $0$,
           and  convexity of $v_Q$ finally imply  $v_Q$ is
        finite in the  interval $(0, y_{Q}]$, with $y_Q$ from \eqref{fin-entropy}.\qquad\endproof

\begin{corollary}\label{EqInada}
If $\emeasures\neq\emptyset$ then the measure $Q$ in \eqref{Inada2} can be chosen equivalent to $P$.
\end{corollary}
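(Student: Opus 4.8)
The plan is to take the required measure as a convex combination of a measure satisfying \eqref{Inada2} with a generic element of $\emeasures$, and to control its generalized relative entropy by a rescaling in the spirit of the proof of Lemma \ref{P_V convexity}. So fix $Q^\circ\in\measures$ satisfying \eqref{Inada2} (which exists by Assumption \ref{Inada}) and an arbitrary $R\in\emeasures$, and set $\widetilde Q:=\tfrac12 R+\tfrac12 Q^\circ$. First I would check $\widetilde Q\in\emeasures$: it is equivalent to $P$ since $\frac{d\widetilde Q}{dP}\ge\tfrac12\frac{dR}{dP}>0$ $P$-a.s.; it is a $\sigma$-martingale measure for $S$ because, by Lemma \ref{simplesuper}, the wealth of each simple strategy is a $Q$-martingale for every $Q\in\measures$, hence also under $\widetilde Q$ (a mixture of two measures under each of which a given process is a martingale keeps that process a martingale), and via the $\I$-localizing integrand $\varphi$ this forces $S$ itself to be a $\widetilde Q$-$\sigma$-martingale; and $\widetilde Q\in P_V$ by Lemma \ref{P_V convexity}. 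Thus $\widetilde Q\in\emeasures\subseteq\measures$.

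It then remains to verify \eqref{Inada2} for $\widetilde Q$; this is automatic when $U(+\infty)<+\infty$, so the substance is the case $U(+\infty)=+\infty$. By Corollary \ref{CorInada}, \eqref{Inada2} for $\widetilde Q$ is equivalent to finiteness of $v_{\widetilde Q}$ in a right-neighbourhood of $0$. Since $Q^\circ$ satisfies \eqref{Inada2}, Corollary \ref{CorInada} gives $y^\circ>0$ with $v_{Q^\circ}(y)<+\infty$ for all $y\in(0,y^\circ]$, while $R\in P_V$ gives $y_R>0$ with $v_R(y_R)<+\infty$. The key step is the decomposition, valid for $y\in(0,\bar y]$ with $\bar y:=\min\{y_R,y^\circ\}$,
\[
y\,\tfrac{d\widetilde Q}{dP}=\mu_1\Bigl(y_R\tfrac{dR}{dP}\Bigr)+\mu_2\Bigl(c\,\tfrac{dQ^\circ}{dP}\Bigr),\qquad \mu_1:=\tfrac{y}{2y_R},\quad \mu_2:=1-\mu_1,\quad c:=\tfrac{y\,y_R}{2y_R-y},
\]
where one checks $\mu_1,\mu_2\ge0$, $\mu_1+\mu_2=1$ and $0<c\le y^\circ$. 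As $V$ is convex and nonnegative, $V\bigl(y\tfrac{d\widetilde Q}{dP}\bigr)\le\mu_1 V\bigl(y_R\tfrac{dR}{dP}\bigr)+\mu_2 V\bigl(c\tfrac{dQ^\circ}{dP}\bigr)$ pointwise, and taking $P$-expectations gives $v_{\widetilde Q}(y)\le\mu_1 v_R(y_R)+\mu_2 v_{Q^\circ}(c)<+\infty$. Hence $v_{\widetilde Q}<+\infty$ on $(0,\bar y]$, and Corollary \ref{CorInada} ((ii)$\Rightarrow$(i)) yields $\lim_{x\to+\infty}u_{\widetilde Q}(x)/x=0$, i.e. $\widetilde Q\in\emeasures$ satisfies \eqref{Inada2}, as claimed.

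The only delicate point is the choice of the rescaling: the naive bound $v_{\widetilde Q}(y)\le\tfrac12 v_R(y)+\tfrac12 v_{Q^\circ}(y)$ is useless when $U(+\infty)=+\infty$, since then $V(0)=+\infty$ and $v_R(y)$ may blow up as $y\downarrow0$. One must instead pin $R$'s density at the single scale $y_R$ where $v_R$ is known to be finite and absorb all of the vanishing into the $Q^\circ$-term, which is legitimate precisely because the Inada hypothesis on $Q^\circ$, via Corollary \ref{CorInada}, makes $v_{Q^\circ}$ finite on a whole interval rather than at a single point.
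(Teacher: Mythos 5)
Your proof is correct and follows essentially the same route as the paper: take the half-and-half mixture $\tfrac12 Q^\circ+\tfrac12 R$ with $R\in\emeasures$, keep $v$ of the mixture finite near $0$ via convexity of $V$, and conclude with Corollary \ref{CorInada}. Your explicit rescaling with $\mu_1,\mu_2,c$ is exactly the content of Lemma \ref{P_V convexity} (your $c$ is the $y_2$ one feeds into that lemma to hit the target scale $y$), which the paper simply cites, and your extra verification that the mixture lies in $\measures$ is the standard convexity fact the paper takes for granted.
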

\begin{proof}
Take $Q^e\in\emeasures$ and assume $Q$ satisfies \eqref{Inada2}. By Corollary \ref{CorInada} $v_Q(y)$ is finite for all $y$ near zero.
Define $$Q^*:=\frac{1}{2}Q+\frac{1}{2}Q^e.$$ Thus, $Q^*\sim P$ and  by Lemma \ref{P_V convexity} $v_{Q^*}(y)$ is finite for all $y$ near zero. Therefore  $u_{Q^*}$ satisfies the Inada conditon \eqref{Inada2}. \qquad
\end{proof}

The next proposition contains a novel characterization of the
condition $$ u_Q(x)< U(+\infty),$$  which is a kind of ``no
utility-based arbitrage'' condition,  when $Q$ has finite generalized relative entropy.
Agents cannot reach satiation utility $U(+\infty)$ if the initial
capital  $x$ is below the satiation point $\xbar$, and vice versa.

\begin{proposition}\label{uQstrict}
For $Q\in P_V$ and $x>\underline{x}$ the following statements are equivalent:
\begin{romannum}
    \item $ x<\xbar$;
    \item
    \begin{equation}\label{uQdual2}
        u_Q(x) = \min_{y>0}\{  xy + v_Q(y)\}<U(+\infty).
        \end{equation}
\end{romannum}
\end{proposition}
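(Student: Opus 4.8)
The strategy is to show that each of (i) and (ii) is equivalent to the single inequality $u_Q(x)<U(+\infty)$. Indeed, Lemma~\ref{dualityQ} gives $u_Q(x)=\min_{y\geq 0}\{xy+v_Q(y)\}$ with the minimum \emph{attained}, and since $v_Q(0)=E[V(0)]=V(0)=U(+\infty)$ the value at $y=0$ is $U(+\infty)$, so $u_Q(x)\leq U(+\infty)$ always. If $u_Q(x)<U(+\infty)$ then $y=0$ is not a minimizer, hence the minimizer is some $y^{\ast}>0$ and $u_Q(x)=\min_{y>0}\{xy+v_Q(y)\}<U(+\infty)$, which is (ii); conversely, if $u_Q(x)=U(+\infty)$ then every term $xy+v_Q(y)$ with $y>0$ is $\geq u_Q(x)=U(+\infty)$, so (ii) fails. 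Thus it suffices to prove that $u_Q(x)<U(+\infty)$ if and only if $x<\xbar$.

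The implication $x\geq\xbar\Rightarrow u_Q(x)=U(+\infty)$ is immediate: then $\xbar<+\infty$, so ($U$ being proper) $U(\xbar)=U(+\infty)<+\infty$, and the constant random variable $X\equiv\xbar$ lies in $L^{1}(Q)$ with $E_Q[X]=\xbar\leq x$ and $E[U(X)]=U(\xbar)=U(+\infty)$; hence $u_Q(x)\geq U(+\infty)$, and equality follows from the a priori bound above.

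Conversely, assume $x<\xbar$. If $U(+\infty)=+\infty$ then $u_Q(x)<+\infty=U(+\infty)$ by the finiteness statement of Lemma~\ref{dualityQ}, so assume $U(+\infty)<+\infty$; then $v_Q(0)=V(0)=U(+\infty)<\infty$, and since $Q\in P_V$ there is $y_Q>0$ with $v_Q(y_Q)<\infty$, so $v_Q$ is finite on $[0,y_Q]$ by convexity. The crucial point is
\[
v_Q'(0^{+}):=\lim_{y\downarrow 0}\frac{v_Q(y)-v_Q(0)}{y}=-\xbar\qquad(\text{read as }-\infty\text{ when }\xbar=+\infty).
\]
First I would record its deterministic counterpart $V_{+}'(0)=-\xbar$: by the subdifferential calculus for convex conjugates applied to the convex function $-U$, $\partial V(0)=\{-w:0\in\partial^{+}U(w)\}=\{-w:U(w)=U(+\infty)\}=\{-w:w\geq\xbar\}$, which is $(-\infty,-\xbar]$ if $\xbar<\infty$ and empty otherwise; in both cases the right derivative of $V$ at $0$ is $-\xbar$. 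To transfer this to $v_Q$, write $Z:=dQ/dP$ and, for $y\in(0,y_Q]$,
\[
\frac{v_Q(y)-v_Q(0)}{y}=E\!\left[Z\,\frac{V(yZ)-V(0)}{yZ}\,I_{\{Z>0\}}\right].
\]
Since $t\mapsto\bigl(V(t)-V(0)\bigr)/t$ is non-decreasing on $(0,\infty)$ with limit $V_{+}'(0)=-\xbar$ as $t\downarrow 0$, the integrand decreases, as $y\downarrow 0$, to $-\xbar\,Z\,I_{\{Z>0\}}$; and for $y\leq y_Q$ its positive part is bounded by $V(y_QZ)/y_Q\in L^{1}(P)$. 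Monotone convergence for a decreasing sequence (with integrable positive part) then yields $v_Q'(0^{+})=-\xbar\,E[Z]=-\xbar$, or $-\infty$ when $\xbar=+\infty$.

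With this identity the conclusion follows: since $x<\xbar$ we have $-x>-\xbar=v_Q'(0^{+})=\inf_{y>0}\bigl(v_Q(y)-v_Q(0)\bigr)/y$, so some $\bar y\in(0,y_Q]$ satisfies $x\bar y+v_Q(\bar y)<v_Q(0)=U(+\infty)$; by Lemma~\ref{dualityQ}, $u_Q(x)\leq x\bar y+v_Q(\bar y)<U(+\infty)$, and (ii) now follows from the reduction in the first paragraph. The main obstacle is this last transfer of the boundary-derivative information from $V$ to $v_Q$: one must control the tail of $Z$, and it is precisely the integrability $v_Q(y_Q)<\infty$ that makes the monotone convergence argument go through.
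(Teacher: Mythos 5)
Your proof is correct, and while it shares the outer skeleton with the paper's argument (reduction of (ii) to the single inequality $u_Q(x)<U(+\infty)$ via Lemma \ref{dualityQ}, the easy implication, and the immediate dispatch of the case $U(+\infty)=+\infty$), the core implication (i)$\Rightarrow$(ii) is handled by a genuinely different device. The paper works pointwise with $f(y)=V(y)+xy$, uses Rockafellar's theorem to locate its minimizer $\opty=U_-'(x)>0$ with $f(\opty)=U(x)$, and then a two-regime convexity estimate combined with a split of $E[f(Z/k)]$ over $\{Z\leq k\opty\}$ and $\{Z>k\opty\}$ to show $x(y_Q/k)+v_Q(y_Q/k)<U(+\infty)$ for $k$ large. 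You instead compute the exact boundary slope $v_Q'(0^+)=-\xbar$, transferring $V_+'(0)=-\xbar$ (obtained from the conjugate subdifferential identity $\partial V(0)=\{-w: U(w)=U(+\infty)\}$) to $v_Q$ by monotone convergence of the difference quotients, with $V(y_QZ)/y_Q$ as the integrable majorant of the positive parts; the conclusion then follows from $-x>-\xbar=\inf_{y>0}(v_Q(y)-v_Q(0))/y$. Your route yields a sharper, reusable fact — the threshold at which the dual objective dips below $V(0)$ is exactly $\xbar$, so the equivalence is seen to be tight — and it bypasses the truncation estimates, at the cost of a careful differentiation-at-the-boundary argument; the paper's route needs only an inequality for some small $y$ and avoids any limit interchange beyond dominated-type bounds on a single split. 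Two cosmetic points: in your easy direction, the equality $U(\xbar)=U(+\infty)$ for $\xbar<\infty$ deserves a one-line justification (continuity of the finite concave $U$ on the interior of its domain, or upper semicontinuity plus monotonicity), and the existence of $\bar y\in(0,y_Q]$ should be tied explicitly to the monotonicity of the difference quotient; both are routine and do not affect correctness.
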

\begin{proof}
 (ii)$\Rightarrow$(i) $ U(x)\leq u_Q(x)< U(+\infty)$ implies $x<\xbar$.  \\
 (i)$\Rightarrow$(ii)  Let
    $Z:=y_QdQ/dP$, with $y_Q$ from \eqref{fin-entropy}. When
    $U(+\infty)=V(0)=+\infty$ there is nothing to prove in view of
    \eqref{uQdual1}.
    Consider therefore the remaining case $0<U(+\infty)=V(0)<+\infty$. Function $f(y):=V(y)+xy$ is convex and by Rockafellar \cite[Theorem 23.5]{r70}
it attains its minimum at $\opty:=U_{-}^\prime(x)>0$ with
$f(\opty)=V(\opty)+x\opty=U(x)$.
    Convexity then gives
    \begin{align*}
    f(y)&\leq f(0)-y\frac{f(0)-f(\opty)}{\opty}=U(+\infty)-y\frac
    {U(+\infty)-U(x)}{\opty}&\text{ for } y\in [0,\opty],\\
    f(y/k)&\leq f(0)+\frac{f(y)-f(0)}{k}\leq
    {U(+\infty)}+\frac{f(y)}{k}&\text{ for } k\geq 1,y\geq 0.
    \end{align*}
    For $k\geq 1$ these estimates imply
    \begin{align*}
    E[f(Z/k)]&=E[f(Z/k)1_{\{ Z\leq k\opty \} }]+E[f(Z/k)1_{\{Z>k\opty\}}]\\
    &\leq U(+\infty) {- \frac{1}{k}
     \bigg(\frac{U(+\infty)-U(x)}{\opty}E[Z1_{\{ Z\leq k\opty\}}]- E[f(Z)1_{\{Z>k\opty\}}] \bigg)},
    \end{align*}
   and, as  $x< \xbar$ implies $U(x)<U(+\infty)$,
   for sufficiently large $k$   $E[f(Z/k)]< U(+\infty)=V(0)$, which    completes the
    proof.\qquad
\end{proof}
\begin{remark}
Corollary \ref{CorInada} and Proposition \ref{uQstrict} should be contrasted with an example by
Schachermayer \cite[Lemma 3.8]{Sch01}, where the author constructs an
arbitrage-free complete market with unique  pricing measure $Q$
for which $u_Q(x)\equiv U(+\infty)$, while $U$ is strictly increasing and bounded from above
(and therefore it satisfies the Inada condition at $+\infty$). This is possible because
the measure $Q$ in question does not belong to $P_V$.
\end{remark}
\begin{corollary}
If $x \in (\underline{x}, \xbar)$ then
\begin{equation}\label{dualineq}
u_{\simple}(x)  \leq u_{\adm}(x)\leq u_{\super}(x) \leq \inf_{Q\in \measures  }u_Q(x) = \inf_{ y> 0, Q\in \measures} \{ xy + v_Q(y) \}.
\end{equation}
\begin{proof}
The inequalities follow from Lemma \ref{ineq} and Proposition \ref{uQstrict}.\qquad
\end{proof}
\end{corollary}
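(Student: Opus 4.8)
The plan is simply to assemble two results already in hand. First I would invoke Lemma~\ref{ineq}, which for every $x>\underline{x}$ and \emph{every} $Q\in\measures$ gives the chain $u_{\simple}(x)=u_{\adm}(x)\leq u_{\super}(x)\leq u_Q(x)$. The first two inequalities in \eqref{dualineq} are then immediate. Since the bound $u_{\super}(x)\leq u_Q(x)$ holds for each individual $Q\in\measures$, passing to the infimum over $Q$ yields $u_{\super}(x)\leq\inf_{Q\in\measures}u_Q(x)$, which is the third inequality. (If $\measures=\emptyset$ the last infimum is $+\infty$ by convention, so only the first two inequalities carry content and there is nothing further to check.)

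Next I would identify $\inf_{Q\in\measures}u_Q(x)$ with the double infimum on the right of \eqref{dualineq}. The point to notice is that, by the very definition $\measures=\mathcal{M}\cap P_V$, one has $\measures\subseteq P_V$, so Proposition~\ref{uQstrict} applies to every $Q\in\measures$: because $x\in(\underline{x},\xbar)$, part~(i) of that proposition is satisfied, hence part~(ii) gives
\[
u_Q(x)=\min_{y>0}\{xy+v_Q(y)\}<U(+\infty)\qquad\text{for every }Q\in\measures .
\]
Substituting this identity and using the elementary fact that an infimum of infima is the joint infimum, namely $\inf_{Q\in\measures}\min_{y>0}\{xy+v_Q(y)\}=\inf_{y>0,\,Q\in\measures}\{xy+v_Q(y)\}$, completes the equality in \eqref{dualineq}.

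There is essentially no obstacle here; the two points that deserve a moment's care are (a) verifying that Proposition~\ref{uQstrict} can indeed be invoked for \emph{all} admissible dual measures, which is immediate once one unfolds $\measures=\mathcal{M}\cap P_V$, and (b) the (trivial) interchange of the two infima. I would also keep in mind that the \emph{strict} inequality $u_Q(x)<U(+\infty)$ furnished by Proposition~\ref{uQstrict}, while not strictly needed for \eqref{dualineq}, is precisely what turns the right-hand side into a genuine optimization over $y>0$ rather than $y\geq 0$, and will be used later in the proof of the main Theorem~\ref{main theorem}.
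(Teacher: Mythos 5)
Your proposal is correct and follows essentially the same route as the paper, which simply cites Lemma~\ref{ineq} for the chain of inequalities and Proposition~\ref{uQstrict} (applicable to each $Q\in\measures\subseteq P_V$ since $x\in(\underline{x},\xbar)$) for the equality with the double infimum. The extra details you supply -- passing to the infimum over $Q$, interchanging the infima, and the empty-set convention -- are exactly the routine steps the paper leaves implicit.
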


\subsection{The main result}
The minimization problem on the right-hand side of \eqref{dualineq} is a natural candidate as 
a dual problem to the utility maximization on the left-hand side.
However,  the general theory of \cite{bf08} shows  that  in order to catch the minimizer the 
dual domain must be extended beyond probability densities.  Rephrased in our terminology,  
whenever $S\in\SsigU\setminus\SsigMU$ 
the dual problem may have a  minimizer which has a non zero singular part, 
{but  for $S\in \SsigMU$ the singular parts in the dual problem 
disappear and there is no duality gap in \eqref{dualineq} under Assumption \ref{Inada}. 
We make these statements precise in Theorem \ref{generaldual} and Corollary \ref{corstrongdual}. 

Our main result hinges on the absence of singularities in the dual problem, which is what we now assume.
Within the confines of Assumption \ref{strongdual}, which  can be imposed also
when $S\in\SsigU\setminus\SsigMU$, we provide a unified treatment for utility functions finite on $\mathbb{R}$ or only on a half-line.
\begin{assumption}\label{strongdual}
For any   {$x \in (\underline{x}, \xbar)$}, the following dual relation holds:
\begin{equation}\label{eqndual}
 u_{\simple}(x)  = \min_{Q\in \measures  }u_Q(x) = \min_{ y\geq 0, Q\in \measures} \{ xy + v_Q(y) \}.
\end{equation}
\end{assumption}

\noindent As indicated above, this assumption represents no loss of generality for $S\in \SsigMU,$ including situations where $U$ is finite on $\mathbb{R}$ and
              \begin{itemize}
               \item[(a)] $S$ is ``sufficiently integrable". Some commonly found examples are locally bounded processes,  such as diffusions or jump diffusions with bounded relative jumps,  regardless of the specification of $U$; jump diffusions  with   relative jumps in $M^{\Uhat}$;  L\'{e}vy processes with
               large jumps in $M^{\Uhat}$;
                \item[(b)] $L^{\Uhat}=M^{\Uhat}$, under the standing Assumption \ref{maximalS}. This happens when e.g.  $U$ has left tail that  behaves asymptotically like a power, $x^p$, with $p>1$.
              \end{itemize}
When $S\in\SsigU\setminus\SsigMU$, which includes all cases where $U$ is finite only on a half-line, unfortunately there is no known sufficient condition for the strong duality \eqref{eqndual} to hold. The appropriate modification of Theorem \ref{main theorem} which would work without Assumption \ref{strongdual}
 remains an interesting area for future research.}

Assumption \ref{strongdual} together with    \eqref{dualineq} immediately yields the following,  apparently stronger, statement for   $x \in (\underline{x}, \xbar)$
\begin{equation}\label{eqndual2}
u_{\simple}(x) = u_{\adm}(x) = u_{\super}(x)= \min_{y>0, Q \in \measures }\bigg\{  xy +  E\bigg[ V\bigg(y\frac{dQ}{dP} \bigg)\bigg]\bigg\} .
\end{equation}
Any  optimal  dual  pair in \eqref{eqndual2} is denoted by  $(\opty,\optQ)$, dependence on $x$ is understood.
The lack of uniqueness of the optimal dual pair is again due to the lack of strict convexity
of $V$, stemming from the lack of strict concavity of $U$.

Most results in the literature are obtained under the assumption $\optQ\sim P$. This condition is satisfied automatically for utility functions unbounded from above
since $V(0)=U(+\infty) = +\infty$ while $E[V(\opty\frac{d\optQ}{dP})] $ must be finite.
When $U$ is strictly monotone  but bounded, a well-known sufficient condition for $\optQ\sim P$ is the existence of an equivalent $\sigma$-martingale measure with finite generalized relative entropy.
This can be gleaned from (a.i) and (a.iii) in Theorem \ref{main theorem},  on observing that $\xbar=+\infty$.

As a general  comment,    Theorem \ref{main theorem}   provides  a
desirable approximation result for the optimal strategy $\optH\in\adm$.
The approximation holds under very mild
conditions:  $U$ may lack strict monotonicity and strict
concavity; $S\in \SsigU$; and $\optQ$ may be only absolutely continuous with respect to $P$. These results are novel not only for utility finite on $\mathbb{R}$ but also for utility functions finite on
a half-line.

For $U$ finite on $\mathbb{R}$ our framework is a further improvement over the current literature:
\cite{Sch01}, \cite{kabstr02}, \cite{str03}, \cite{oz09},
\cite{btz04} all assume $S$ locally bounded.
Approximation by simple
strategies has so far been shown only for exponential utility,
 for locally bounded $S$ and for expected
utility only cf. \cite[Theorem 5]{str03} -- not
in the stronger sense of $L^1(P)$ convergence of the utilities
given by item (ii) in Definition \ref{adm}.

For comparison, Schachermayer \cite{Sch01} proves an approximation
similar to \eqref{convU} for the terminal wealth of the optimal solution $\optf =
\optH \cdot S_T$ via integrals
bounded from below.
This work is extended further by Bouchard et al. \cite{btz04} who allow for non-differentiable and non-monotone utility functions.
Moreover, in \cite{Sch03} $\optH$ is shown to be in the supermartingale class of strategies through a (hard) contradiction argument, which
is later extended by \cite{bf07} to   $S\in \SIMU$  with a proof along the same lines.

In the present paper  the supermartingale property of $\optH $  is shown in a general setup and in
 a very natural  way, as a consequence of $\adm\subseteq\super$.  We also extend results of Bouchard et al.
\cite{btz04} beyond $S\in\scr{S}^\infty_\mathrm{loc}$ under the weaker condition
from Assumption \ref{Inada} instead of the $\mathrm{RAE}(U)$ condition \eqref{RAE2},
while considerably simplifying the required proofs thanks to the
Orlicz duality approach.

When $U$ is not strictly monotone, that is when $U$ attains its global maximum at a satiation point $\xbar<+\infty$,
 the  sufficient conditions for $\optQ\sim P$ known in
the monotone case do not work; here typically $\optQ $ is
\emph{not} equivalent to $P$ even when there are equivalent
probabilities in $\measures$. We nonetheless recover an integral representation \emph{under $P$}, and thus existence of an optimal trading strategy,  provided
the budget constraint is binding, $E_Q[\optf]=x$, for \emph{some} $Q\in\emeasures$. This mild sufficient condition appears to be new in the literature.
  Our contribution in the case where $U$ is strictly monotone but $\optQ$ is not equivalent to $P$
is discussed in detail in \S \ref{literature2}.

\begin{theorem}\label{main theorem}
Under Assumptions \ref{maximalS}, \ref{Inada} and \ref{strongdual}, for any initial wealth 
$x  \in (\underline{x}, \xbar)$ the following statements hold:
\begin{enumerate}
\item[{\em (a)}]    There exists a $(-\infty,+\infty]$-valued claim $\optf$, not unique in
    general, with the following properties
      \begin{enumerate}
      \item[{\rm (i)}] $\optf<+\infty$ whenever $\emeasures\neq\emptyset$;
      \item[{\rm (ii)}] $\optf$ realizes the optimal expected utility, in the sense that $$  E[U({\optf})] = u_{\simple}(x);$$
      \item[{\rm (iii)}]  $E_{\optQ}[ \optf] = x$, and the following equalities hold $P$-a.s. for any  dual optimizers  $\opty,\optQ$:
      \begin{align*}
       V\bigg(\opty \frac{d\optQ}{dP} \bigg)&=U(\optf) - \optf \opty\frac{d\optQ}{dP},\\
       \{ \optf \geq \xbar \} &=  \bigg\{\frac{d\optQ}{dP } = 0\bigg\};
       \end{align*}
       \item[{\rm (iv)}] $ \optf\in L^1(Q) \text{ and } E_Q[\optf]\leq x \text{ for all } Q\in\measures $;
      \item[{\rm (v)}] In case $U$ is strictly concave, $V$ is strictly convex and the solutions of primal and dual problem $\optf, \opty, \optQ $ are unique.
            If in addition $U$ is differentiable,  these unique solutions  satisfy $\opty \frac{  d\optQ}{dP} =U'(\optf) $;
      \end{enumerate}
\item[{\em (b)}] There is an approximating sequence of   strategies $H^n \in \simple$ with terminal values $f^n := x+H^n\cdot S_T$ such that:
      \begin{enumerate}
      \item[{\rm (i)}]
      \begin{equation} \label{Pas convergence}
      f_n\stackrel{P\text{-}a.s.}{\rightarrow}\optf,
      \end{equation}
      provided $\emeasures\neq\emptyset$ or $\xbar=+\infty$;
      \item[{\rm (ii)}]
        \begin{equation}\label{convU}
            U(f_n) \stackrel{L^1(P)}{\rightarrow} U(\optf);
        \end{equation}
      \item[{\rm (iii)}]
      \begin{equation}\label{L1(optQ)conv}
        f_n \stackrel{L^1(\optQ)}{\rightarrow} \optf,
        \end{equation}
      and, provided \eqref{Pas convergence} holds, for any $Q\in\measures$ such that $E_Q[\optf]=x$
        \begin{equation}\label{L1(Q)conv}
        f_n \stackrel{L^1(Q)}{\rightarrow} \optf;
      	\end{equation}
      \item[{\rm (iv)}]  There exists an integral representation $ \optf = x+ \optH\cdot S_T$
        with $\optH\in L(S;\optQ)$, and $\optH\cdot S$ is a $\optQ$-martingale.
      \item[{\rm (v)}]  When there is $\widetilde{Q}\in\emeasures$ such that $E_{\widetilde{Q}}[\optf]=x$ then $\optH$ in {\rm(iv)}
          can be chosen in $\adm$ and consequently $\optH$ is a  utility maximizer over both $\adm$ and $\super$,
        \begin{equation}\label{main result}
         u_{\simple}(x) = u_{\super}(x) = \max_{ H \in \adm }E[U(x +H\cdot S_T)].
         \end{equation}
				In particular, by virtue of {\rm(a.iii)}, \eqref{main result} holds whenever $\optQ\sim P$.
        \end{enumerate}
\end{enumerate}
\end{theorem}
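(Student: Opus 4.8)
The plan is to run the standard convex-duality argument but with the \emph{simple} class $\simple$ in the role usually played by integrands bounded below, and to extract both parts of the theorem from a single utility-maximizing sequence together with the compactness input already exploited in Proposition \ref{admsuper}. Fix $x\in(\underline{x},\xbar)$. First I would invoke Assumption \ref{strongdual} to choose a dual optimizer $(\opty,\optQ)$, so that by Lemma \ref{ineq} and Assumption \ref{strongdual}, $u_{\simple}(x)=u_{\optQ}(x)=x\opty+v_{\optQ}(\opty)$, which is $<U(+\infty)$ by Proposition \ref{uQstrict}; and I would record that, by Corollary \ref{CorInada} applied to the measure in \eqref{Inada2}, together with \eqref{Inada5} and convexity of $v_Q$, there is $Q_0\in\measures$ with $v_{Q_0}(y)<+\infty$ for \emph{every} $y>0$ --- this is the ingredient that delivers upper semicontinuity. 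Then pick $H^n\in\simple$ with $f^n:=x+H^n\cdot S_T$ and $E[U(f^n)]\to u_{\simple}(x)$ (with $E[U(f^n)]$ finite, since $u_{\simple}(x)\in\mathbb{R}$); by Lemma \ref{simplesuper} each $H^n\cdot S$ is a uniformly integrable $Q$-martingale for all $Q\in\measures$, so $E_Q[f^n]=x$.

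The Fenchel inequality written for $\opty\tfrac{d\optQ}{dP}$, together with $E[U(f^n)]\to u_{\simple}(x)$ and $E[\opty\tfrac{d\optQ}{dP}f^n]=\opty x$, shows that $U(f^n)-\opty\tfrac{d\optQ}{dP}f^n\to V(\opty\tfrac{d\optQ}{dP})$ in $L^1(P)$ (a sequence dominated from below by its integrable limit, with converging integrals, converges in $L^1$); the same inequality written for an arbitrary $Q\in\measures$ then makes $((f^n)^-\tfrac{dQ}{dP})_n$ $P$-uniformly integrable, hence $((f^n)^-)_n$ $Q$-uniformly integrable and $(U(f^n)^-)_n$ $P$-uniformly integrable, exactly as in the proof of Proposition \ref{admsuper}; in particular $E_Q[(f^n)^+]=x+E_Q[(f^n)^-]$ stays bounded. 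Now the Delbaen--Schachermayer compactness theorem \cite{ds99} (applied under $\optQ$, as in Proposition \ref{admsuper}) furnishes forward convex combinations $\widetilde H^n\in\mathrm{conv}(H^n,H^{n+1},\dots)$ --- still in the vector space $\simple$, still maximizing by concavity, still $Q$-martingale valued --- whose wealth processes converge $\optQ$-a.s. at every rational time and at $T$ to a c\`adl\`ag supermartingale $\widetilde V$ bounded below by a $\optQ$-martingale; put $\optf:=\widetilde V_T$. On $\{\tfrac{d\optQ}{dP}=0\}$ the measure $\optQ$ exerts no control, and this is precisely where the hypothesis of (b.i) enters: if $\xbar=+\infty$ one puts $\optf:=+\infty$ there, and if $\emeasures\ne\emptyset$ one reruns the compactness under $\tfrac12\optQ+\tfrac12 Q^e$, $Q^e\in\emeasures$, obtaining $P$-a.s. convergence $\widetilde f^n\to\optf$ and $\optf<+\infty$ $P$-a.s. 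From now on $\widetilde H^n$ is relabelled $H^n$.

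For part (a): Fatou's lemma, legitimate because $((f^n)^-)_n$ is $Q$-uniformly integrable, gives $E_Q[\optf]\le\liminf E_Q[f^n]=x$ and $\optf\in L^1(Q)$ for all $Q\in\measures$ --- this is (a.iv), and with $Q\in\emeasures$ it yields $\optf<+\infty$ $P$-a.s., i.e. (a.i). For (a.ii) comes the upper-semicontinuity step: bounding $(U(f^n)-U(M))^+\le U(f^n)I_{\{f^n>M\}}\le\bigl(y(f^n)^+\tfrac{dQ_0}{dP}+V(y\tfrac{dQ_0}{dP})\bigr)I_{\{f^n>M\}}$, one first picks $y$ small so that $yE_{Q_0}[(f^n)^+]$ is uniformly small, then $M$ large so that $\sup_n E[V(y\tfrac{dQ_0}{dP})I_{\{f^n>M\}}]$ is small --- possible because $(f^n)$ is tight (being a.s.-convergent) and $v_{Q_0}(y)<+\infty$; hence $\sup_n E[(U(f^n)-U(M))^+]\to 0$, so $E[U(\optf)]\ge\limsup_n E[U(f^n)]=u_{\simple}(x)$, while $E[U(\optf)]\le u_{\optQ}(x)=u_{\simple}(x)$ by (a.iv) and Assumption \ref{strongdual}; so equality holds. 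Chasing the equalities in $u_{\simple}(x)=E[U(\optf)]\le\opty E_{\optQ}[\optf]+v_{\optQ}(\opty)\le\opty x+v_{\optQ}(\opty)=u_{\simple}(x)$ forces $E_{\optQ}[\optf]=x$ and, pointwise, $U(\optf)-\opty\tfrac{d\optQ}{dP}\optf=V(\opty\tfrac{d\optQ}{dP})$ $P$-a.s.; the identical computation for any other dual optimizer yields the rest of (a.iii), with $\{\optf\ge\xbar\}=\{\tfrac{d\optQ}{dP}=0\}$ read off from the behaviour of $V$ at $0$ (here \eqref{Inada5} rules out a degenerate affine tail of $V$). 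Strict concavity makes the pointwise Fenchel problem uniquely solvable and $v_{\optQ}$ strictly convex, giving uniqueness and, when $U$ is differentiable, $\opty\tfrac{d\optQ}{dP}=U'(\optf)$, i.e. (a.v).

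For part (b): $U(f^n)^-\to U(\optf)^-$ in $L^1(P)$ by uniform integrability and a.s. convergence, and then $E[U(f^n)]\to E[U(\optf)]$ with a.s. convergence and non-negativity (Scheff\'e) gives $U(f^n)^+\to U(\optf)^+$ in $L^1(P)$, i.e. \eqref{convU}; the same Scheff\'e argument under $\optQ$, using $E_{\optQ}[f^n]=x=E_{\optQ}[\optf]$, gives \eqref{L1(optQ)conv}, and likewise \eqref{L1(Q)conv} for any $Q\in\measures$ with binding budget. Closedness of stochastic integrals --- $H^n\cdot S$ are $\optQ$-martingales bounded below by a fixed $\optQ$-martingale and convergent at $T$ in $L^1(\optQ)$, so by Yor \cite[Corollaire 2.5.2]{y78} (or \cite{ds99} plus \cite{as94} as in Proposition \ref{admsuper}) --- produces $\optH\in L(S;\optQ)$ with $\optf=x+\optH\cdot S_T$ and $\optH\cdot S$ a $\optQ$-martingale, whence $H^n\cdot S_t=E_{\optQ}[f^n\mid\mathcal F_t]-x\to E_{\optQ}[\optf\mid\mathcal F_t]-x=\optH\cdot S_t$ in $L^1(\optQ)$ for \emph{every} $t$, which is (b.iv). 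Finally, when some $\widetilde Q\in\emeasures$ has $E_{\widetilde Q}[\optf]=x$, the same conditional-expectation argument under $\widetilde Q$ together with $\widetilde Q\sim P$ yields $H^n\cdot S_t\to\optH\cdot S_t$ in $P$-probability for all $t$, while \eqref{convU} supplies the $L^1(P)$-convergence of utilities; these are exactly items (i)--(ii) of Definition \ref{adm}, so $\optH\in\adm$, and since $E[U(x+\optH\cdot S_T)]=u_{\simple}(x)$, Lemma \ref{ineq} gives \eqref{main result}; by (a.iii) the budget binds under $\optQ$ itself, so one may take $\widetilde Q=\optQ$ whenever $\optQ\sim P$. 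The two places where real work is needed are the upper-semicontinuity step --- which is where Assumption \ref{Inada} (chiefly \eqref{Inada2} via Corollary \ref{CorInada}, with \eqref{Inada5} controlling the tail of $V$) is indispensable --- and the extraction of the $P$-a.s. limit, in particular the control of $\optf$ on $\{\tfrac{d\optQ}{dP}=0\}$, handled via an equivalent pricing measure when $\emeasures\ne\emptyset$ and via strict monotonicity of $U$ when $\xbar=+\infty$, which is exactly what forces the side condition ``$\emeasures\ne\emptyset$ or $\xbar=+\infty$'' in (b.i); the closedness-of-integrals input, underlying the dynamic representation that places $\optH$ in $\adm$, is the feature distinguishing this statement from the terminal-wealth results of \cite{Sch03,btz04}.
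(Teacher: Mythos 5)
Your overall architecture (maximizing sequence in $\simple$, Fenchel-gap identity $U(f^n)-\opty\frac{d\optQ}{dP}f^n\to V(\opty\frac{d\optQ}{dP})$ in $L^1(P)$, Fatou-type arguments for (a), Yor's closedness theorem and the conditional-expectation trick for (b.iv)--(b.v)) matches the paper, but there is a genuine gap at the hinge of the construction: the uniform-integrability claims you make immediately after the Fenchel-gap step. You assert that, ``exactly as in the proof of Proposition \ref{admsuper}'', the families $((f^n)^-\frac{dQ}{dP})_n$, $((f^n)^-)_n$ under $Q$, and $(U(f^n)^-)_n$ are uniformly integrable. But the argument in Proposition \ref{admsuper} uses as input the $L^1(P)$-convergence of the utilities $U(X^n_T)$, which is item (ii) of Definition \ref{adm}; for your maximizing sequence that is precisely statement (b.ii), not yet available. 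What you do have, $L^1(P)$-convergence of the combined quantity $U(f^n)-\opty\frac{d\optQ}{dP}f^n$, does not separate: the Fenchel bound gives $y_Q\frac{dQ}{dP}(f^n)^-\leq V(y_Q\frac{dQ}{dP})+U(f^n)^-$, while $U(f^n)^-\leq (U(f^n)-\opty\frac{d\optQ}{dP}f^n)^-+\opty\frac{d\optQ}{dP}(f^n)^-$, so each uniform integrability claim presupposes the other. The paper breaks this circle differently: at this stage it extracts only $L^1(\overline Q)$-\emph{boundedness} (Proposition \ref{LQ-boundedness}, from the Inada condition), applies the Koml\'os theorem to the terminal values, and only much later (step (b.ii)) proves uniform integrability of $(U^+(f_n))_n$ by the Kramkov--Schachermayer disjointification/contradiction argument against \eqref{Inada2}, from which the $Q$-uniform integrability of $(f_n^-)_n$ then follows in (b.iii). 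Your proposal contains no substitute for that contradiction argument; your truncation estimate in (a.ii) could partially play its role, but it needs $P$-tightness of $(f^n)$, which you only have when $\emeasures\neq\emptyset$, and your (b.ii) is made to rest entirely on the circular UI claim.

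The same gap undermines your compactness step: Delbaen--Schachermayer's Theorem D requires a uniform lower control of the wealth processes by an integrable random variable (in Proposition \ref{admsuper} the control $W^Q$ is built from the $L^1(Q)$-convergent negative parts of an already convergent-in-probability sequence, courtesy of Definition \ref{adm}); for a raw maximizing sequence neither ingredient is available, so the ``bounded below by a $\optQ$-martingale'' you state is a hypothesis you have not verified, not a conclusion. Since you only ever use terminal-value convergence afterwards, the fix is exactly the paper's: replace this step by Koml\'os under $\overline Q=\frac12\optQ+\frac12 Q^*$, which needs only the $L^1(\overline Q)$-boundedness from Proposition \ref{LQ-boundedness}. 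Finally, your construction of $\optf$ on $\{\frac{d\optQ}{dP}=0\}$ covers only the cases $\emeasures\neq\emptyset$ or $\xbar=+\infty$; when $\emeasures=\emptyset$ and $\xbar<+\infty$ (truncated utility) the claim $\optf:=+\infty$ is wrong and the paper's choice $\optf:=\xbar$ there, together with its reverse-Fatou argument for (a.ii) with $y$ large (via \eqref{Inada5}) and then $y\downarrow 0$, is what makes part (a) go through in full generality; also, in (b.i) with $\xbar=+\infty$ you still owe the argument that $f^n\to+\infty$ $P$-a.s.\ on $\{\frac{d\optQ}{dP}=0\}$, which the paper obtains from the a.s.\ convergence of $U(f_n)$ along a subsequence.
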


{\em Proof}.
\begin{enumerate}
\item[(a)] Let us fix a pair $ \opty, \optQ$  of  dual minimizers. For ease of notation and without loss of generality we let $x=0$ throughout.
 \begin{enumerate}
  \item[(i.1)]   Select a maximizing sequence $(k_n)_n, k_n= K^n\cdot S_T, K^n \in
               \simple $ so that $ E[U(k_n)] \uparrow u_\simple(0)  $. Fix $Q^*\in \measures $ as follows:
                \begin{itemize}
                  \item in case $\emeasures\neq\emptyset$,  select $Q^*$ as an equivalent  measure satisfying  \eqref{Inada2}. This is possible by Corollary \ref{EqInada};
                  \item in case $\emeasures=\emptyset$,   take  $Q^*=\optQ$. Here   necessarily  $V(0)=U(+\infty)<+\infty$, so $\optQ$ as well as any other measure in $\measures$ satisfies  \eqref{Inada2}.
                \end{itemize}
                Let  $$\overline{Q}:=\frac{1}{2}\optQ + \frac{1}{2}Q^*.$$
                 Then, $\overline{Q}\in \measures$; $\overline{Q}\sim P$ if $\emeasures \neq \emptyset$; $\overline{Q}$ satisfies \eqref{Inada2}; $ L^1(\overline{Q})=L^1(\optQ)\cap L^1(Q^*) $; and $L^1(\overline{Q})$-convergence
                is equivalent to  convergence in $L^1(\optQ)$ \emph{and }$L^1(Q^*)$ by construction.

    \item[(i.2)] The sequence $(k_n)_n$ is bounded in $L^1(\overline{Q})$. In a general case this follows from the auxiliary Proposition \ref{LQ-boundedness}, which  in turn is a consequence of the Inada condition \eqref{Inada2}. In a special case when $\mathrm{dom}\,U$ is a
            half-line,  $L^1(\overline{Q})$-boundedness also follows trivially from
            $k_n \geq \underline{x}$ and $E_Q[k_n]= 0$, which is a consequence of Lemma \ref{simplesuper}.
            In a second special case where $U$ is bounded from above the claim can be alternatively deduced
            from the boundedness of $U^-(f_n)$ and the Fenchel inequality \eqref{fenchel}.
    \item[(i.3)]   $L^1(\overline{Q})$ boundedness of $(k_n)_n$ enables the application of  the Koml\'{o}s theorem, so that
            there exists a  sequence of convex combinations $(f_n)_n$ with $f_n \in \mathrm{conv}(k_n, k_{n+1}, \ldots)$, that
            converges  $\overline{Q}$-a.s. to a certain random variable {$f\in L^1(\overline{Q})\subseteq L^1(\optQ)$.}
            As $\simple$ is a vector space,  these $f_n$ are  terminal values of simple integrals $f_n = H^n\cdot S_T$, $H^n\in\simple$.
              By concavity,  the  $f_n$ are still maximizers, i.e. $E[U(f_n)] \uparrow u_\simple(0)$.
    \item[(i.4)] Define $\optf$ as follows:
             \begin{itemize}
               \item in case $\emeasures\neq\emptyset$, $\optf:=f$. Here, $\overline{Q} \sim P$ and $f$ is a well-defined element of $L^0(\Omega, \mathcal{F}_T, P)$ with $f_n\stackrel{P\text{-}a.s.}{\rightarrow}f=\optf$;
               \item in case $\emeasures=\emptyset$ , and thus $\overline{Q}=\optQ $,
             $$ \optf:=f I_{\{ \frac{d\optQ}{dP} > 0\}} +  \xbar I_{\{ \frac{d\optQ}{dP} =0 \}}   $$
             \end{itemize}
             By construction,  $\optf\in   L^1(\optQ)$ in both cases.
  \item[(ii)]   It is easily seen that {for $y>0$} and $Q\in \{ \optQ, \overline{Q}\}$
             \begin{equation} \label{aux}
             \limsup_n\bigg( U( f_n) - f_n { y} \frac{dQ}{dP} \bigg)\leq U(\optf)-\optf { y} \frac{d  Q}{dP}\leq V\bigg (y\frac{d Q}{dP} \bigg ) ,
             \end{equation}
             using the convention $+\infty \cdot 0=0$.
            The Fatou lemma applied to \eqref{aux} for any $y$ sufficiently large  yields
        		\begin{align}
        			u_{\simple}(0)&= \limsup_n E\bigg[ U( f_n) - f_n { y} \frac{dQ}{dP} \bigg]
         				\leq E\bigg[\limsup_n\bigg( U( f_n) - f_n { y} \frac{dQ}{dP} \bigg) \bigg]\nonumber\\
            		&\leq  E\bigg[ U(\optf) - \optf{y} \frac{dQ}{dP} \bigg]\leq E\left[ V\bigg (y\frac{d Q}{dP} \bigg )\right].\label{limsup}
         		\end{align}
  In particular, we derive  $U(\optf)\in L^1(P)$.
   			On taking $Q=\overline{Q}$, in virtue of \eqref{Inada2} and Corollary \ref{CorInada} we can  let  $y\to 0$ to obtain
        \begin{equation} \label{auxfhat1}
        		u_{\simple}(0)\leq E[U(\optf)].
        \end{equation}
        		Also, on taking  $Q=\optQ$
        and sending $y\to+\infty$ we get
        \begin{equation}\label{auxfhat2}
        E_{\optQ}[\optf]\leq 0.
        \end{equation}
        Equation \eqref{limsup} with the choice of the optimizers $Q=\optQ,y=\opty$ yields
        \begin{equation}\label{ineq-eq}
          u_{\simple}(0) \leq  E[ U(\optf)] - \opty E_{\optQ}[\optf]\leq E\bigg[V\bigg( \opty \frac{d\optQ}{dP}\bigg)\bigg]=u_{\simple}(0),
         \end{equation}
        which implies $E_{\optQ}[\optf]=0$ and $u_{\simple}(0)= E[U(\optf)]$, in view of \eqref{auxfhat1}, \eqref{auxfhat2} and  $\opty>0$ from \eqref{eqndual2}.
    \item[(iii)] The Fenchel optimal relation $  U(\optf) -\optf  \opty \frac{d\optQ}{dP} \stackrel{P\text{-a.s.}}{=} V( \opty\frac{d\optQ}{dP})
        $ now follows from (\ref{auxfhat1}-\ref{ineq-eq}). From here we conclude
        $$ \frac{d\optQ}{dP}=0 \Leftrightarrow U(\optf)=U(\xbar)=U(+\infty).$$ The forward implication follows from $V(0)=U(+\infty)$ and the converse from $\opty>0$. The equality $E_{\optQ}[\optf] = 0 $ has just been shown in (a.ii.2).
    \item[(iv)] Since $\limsup_n( U( f_n) -
         f_n { y} \frac{d\optQ}{dP})\leq  U( \optf) - \optf { y} \frac{d\optQ}{dP}
         $ and the inequalities in \eqref{limsup} are
         equalities for $y=\opty$ and $Q=\optQ$, one has  $\limsup_n U(f_n)
         =U(\optf)=U(+\infty)$ on $A:=\{\frac{d\optQ}{dP}=0\}$.
         Therefore, by passing to  a subsequence that converges to the
         limsup we can assume $U(f_n)I_{A} \rightarrow U(\optf)I_{A} $, whence globally
        \begin{equation}\label{U Pas convergence}
        U(f_n) \stackrel{P\text{-a.s.}}{\rightarrow} U(\optf).
        \end{equation}
       Consider now an arbitrary $Q\in \measures$.
        Given \eqref{U Pas convergence} necessarily $\liminf_n
        f_n I_{A} \geq \xbar I_{A}$ and therefore 
        $$\liminf_n |f_n|\geq|\optf|,\text{ and } \liminf_n f_n\geq\optf.$$  
        Additionally, $(f_n)_n $
        is $L^1(Q)$ bounded:  $E_Q[f_n]=0$ and $(E[U(f_n)])_n$ is bounded from below, so  Proposition
        \ref{LQ-boundedness}  applies again. Therefore, Fatou Lemma yields  $\optf\in L^1(Q)$  and
   			  $$E_Q[\optf]\leq E_Q[ \liminf_n f_n] \leq \liminf_n E_Q[f_n]=0.$$

     \item[(v)]
        Finally,   the results when $U$ is strictly
        concave and differentiable follow now  from the pointwise  identity $ U( x) - x U'(x) =V(U'(x))$.
\end{enumerate}
\item[(b)]

\begin{enumerate}

    \item[(i)] This follows by construction when $\emeasures\neq\emptyset$, cf. item (a.i.3) above, and otherwise from $U(f_n)\rightarrow U(\optf)$ when $\xbar=+\infty$, cf. equation \eqref{U Pas convergence}.
    \item[(ii)]      Since  $U(f_n) \stackrel{P\text{-a.s.}}{\rightarrow}
        U(\optf) $, the $L^1$ convergence of the utilities is equivalent
        to showing uniform integrability of $ (U(f_n))_n$. Given the
        convergence of the expected utility, $E[U(f_n)] \uparrow  E[U(  \optf
        )]$, an  argument ``\`{a} la Scheff\'{e}'' shows
        that the  uniform integrability of $ (U(f_n))_n$ is equivalent to uniform
        integrability of any of the two families $ (U^-(f_n))_n $, $ (U^+(f_n))_n$.
        $U(0)=0$ and  monotonicity of $U$    imply
        $U^-(f_n)= -U(-f_n^- )$ and   $U^+(f_n) = U(f_n^+)$.

        Suppose by contradiction that the family $(U^+(f_n))_n \equiv (U(f^+_n))_n$ is \emph{not}
        uniformly   integrable, and proceed as in \cite[Lemma 1]{ks03}.  Given the supposed lack of uniform integrability,
        there exist disjoint measurable
        sets $(A_n)_n$  and a   constant  $\alpha >0$ such that
        $$ E[U( f_n^+) I_{A_n}] \geq  \alpha. $$
        Set $g_n = \sum_{i=1}^n f_i^+ I_{A_i}$ and fix a $Q\in \measures$ satisfying the Inada condition \eqref{Inada2}.
         $(f_n)_n$ is $L^1(Q)$ bounded by Proposition \ref{LQ-boundedness}
         and  clearly $ E_{Q}[g_n]\leq  n C $ where $C$ is a positive   bound on the $L^1(Q)$ norms of the
        sequence  $(f_n)_n$.   In addition, $ E[U(g_n)] \geq  n \alpha
        $ because the $(A_n)_n$ are disjoint.  Therefore,
        $$ \frac{u_{{Q}}( n C )}{nC} \geq  \frac{E[ U(g_n) ]}{nC} \geq  \frac{\alpha}{C}>0  $$
        and passing to the limit  when $n \uparrow \infty$ the conclusion  contradicts \eqref{Inada2}.
        So the family $(U^+(f_n))_n$ is uniformly integrable, and   $(U(f_n))_n$ as well, which means   $U(f_n)
        $ tends
        in $L^1(P)$ to $U(\optf)$.
    \item[(iii)]   To see that $ f_n \rightarrow
        \optf$ in $L^1(\optQ)$, from
        $  U(\optf)- \optf \opty  \frac{d\optQ}{dP} =
        V(\opty\frac{d\optQ}{dP})
        \geq    U(f_n) - f_n \opty \frac{d\optQ}{dP}  $
         the difference $  U(\optf) -U(f_n)   - (\optf-f_n) \opty  \frac{d\optQ}{dP}
        $
        is nonnegative and has  $P$-expectation which tends to zero. Henceforth such  difference
        is $L^1(P)$ convergent
        to $0$, which, thanks to $L^1(P)$ convergence of $
        U(f)-U(f_n)$,
        yields $L^1(P)$ convergence  to $0$ of $ (\optf-f_n)
        \frac{d\optQ}{dP}$.\vspace{0.3cm}\\
        From Fenchel inequality,
        $$ f_n^- y_Q \frac{dQ}{dP} \leq V \bigg(y_Q \frac{dQ}{dP}\bigg) - U(-f_n^-)\leq V\bigg(y_Q \frac{dQ}{dP}\bigg)+ | U(f_n)| $$
        and given the $P$-uniform integrability of $(U(f_n))_n$, proved in (b.ii), the $Q$-uniform integrability
        of $ (f_n^-)_n$ follows (see Lemma \ref{ui}).
        Admitting $f_n\stackrel{P\text{-a.s.}}{\rightarrow}\optf$ and $E_Q[\optf]=0$,
        and in view of $0 =\lim_n E_Q[f_n]$, an application of the Scheff\'e lemma again  yields $f_n\stackrel{L^1(Q)}{\rightarrow}\optf$.
    \item[(iv)]
    		Recall that  $X^n:= H^n \cdot S$ are all $\optQ$ \emph{uniformly integrable} martingales  by Lemma
         \ref{simplesuper}. Moreover,   $\optQ$  is a $\sigma$-martingale measure for $S$, so $ X^n  =
         (H^n \frac{1}{\varphi_{\optQ}}) \cdot (\varphi_{\optQ}\cdot S)$, where
        $M = \varphi_{\optQ} \cdot S$ is a $\optQ$ martingale and $\varphi_{\optQ}>0$ holds $\optQ$-a.s.
        The convergence \eqref{L1(Q)conv} permits a straightforward application of a celebrated result
        by Yor \cite{y78} on the closure of stochastic integrals,  which gives an integral
        representation with respect to $M$   of the limit $\optf$  under $\optQ$,
        $ \optf = H^* \cdot M_T =  \optH \cdot S_T $, with $\optH = H^* \varphi_{\optQ}$,
          and the optimal process $\widehat{X}:= \optH \cdot S $
        is also  a $\optQ$-uniformly integrable martingale.
    \item[(v)]
        When there is $\widetilde{Q}\in\emeasures$ with $E_{\widetilde{Q}}[\optf]=0$  convergence \eqref{Pas convergence} applies and by virtue of (b.iii)
        the construction of $\optH$ can be performed under
        $\widetilde{Q}$ instead of $\optQ$ and therefore $\optH\in L(S,P)$.
        To show $\optH \in \adm$, note we have already proved  \eqref{convU} so we only need
        convergence in $P$-probability of the wealth process at intermediate times.
        The convergence in \eqref{L1(Q)conv} and the martingale
        property of the $X^n$ and of $ \optH\cdot S$ under $\widetilde{Q}$ imply
         $$ E_{\widetilde{Q}}[ |X^n_t- \optH\cdot S_t|  ] = E_{\widetilde{Q}}
          [ |E_{\widetilde{Q}}[ X^n_T- \optH\cdot S_T  \mid \mathcal{F}_t]| ]
         \stackrel{\text{Jensen}}{\leq} E_{\widetilde{Q}}[| X^n_T- \optH\cdot S_T|]. $$
        Therefore,  for any $t$,  $X^n_t \rightarrow  \optH\cdot
        S_t$ in $L^1(\widetilde{Q})$ and therefore in $\widetilde{Q}$-probability,
        which is equivalent to convergence in $P$-probability.
         Thus,  $\optH\in  \adm$ follows.\qquad\endproof

\end{enumerate}
\end{enumerate}


\section{On the main  assumptions and connections to literature}\label{literature}
\subsection{More details on Assumption \ref{Inada}}\label{sect_Inada}
Condition \eqref{Inada5} is automatically satisfied for utilities finite on a half-line.
For utilities finite on $\mathbb{R}$ it makes sure that  the claim $\optf$ constructed via the Koml\'{o}s theorem
satisfies the budget constraint $E_Q[\optf]\leq x$ for every $Q\in\measures$.

To the best of our knowledge  Assumption
\ref{Inada} is strictly weaker than any other assumption used in
the current literature for $U$ finite on $\mathbb{R}$. In current
references, the typical assumption is
$\mathrm{RAE}(U)$, which implies
$v_Q(y)<+\infty$ for all $ y>0$ and for all $Q\in P_V$ by
\cite[Corollary 4.2]{Sch01}, whence Assumption \ref{Inada}
necessarily holds. In the non-smooth utility case studied by
Bouchard et al. \cite{btz04}, equivalent asymptotic elasticity
conditions are imposed on the Fenchel conjugate $V$,
\begin{equation} \label{RAE2}
\lim_{y\to 0_+}\frac{|V^\prime_{-}(y)|y}{V(y)}<+\infty,\quad\quad \lim_{y\to +\infty}\frac{|V^\prime_{+}(y)|y}{V(y)}<+\infty.
\end{equation}
These again imply $v_Q(y)<+\infty$ for all $ y>0$ and for all $Q\in P_V$, see \cite[Lemma 2.3]{btz04}.

 On the other hand,  Biagini and Frittelli \cite{bf05,bf08}  do not require $\mathrm{RAE}(U)$, but
  instead assume that $v_Q(y)$ is finite for all $Q \in \measures $ and all $y>0$,  which is  weaker than $\mathrm{RAE}(U)$
  but clearly stronger than Assumption \ref{Inada} by virtue of Corollary \ref{CorInada}.
 Since condition \eqref{Inada2} is only slightly stronger than the truly necessary condition \eqref{Inada3}
 for utility functions finite on a half-line, Assumption \ref{Inada} seems to be a very good choice for a unified treatment of utility
maximization problems, regardless of the domain of $U$.

\subsection{A general duality formula and more details on Assumption \ref{strongdual}}\label{sect_generalduality}
 Duality theory applied in the   Orlicz spaces context  shows   that the dual
problem associated with the utility maximization over a general Orlicz
space may contain singular parts, see \cite{bf08}. We have tried to make this section as self-contained as possible, but the reader can find more details on the structure of the dual of a general Orlicz space in  \cite{RR}.
    The  dual variables $z \in  (L^{\Uhat})^* $ have, in general, a two-way
decomposition $z = z_r +z_s$ in regular and singular part, where $z_r$ only can be identified
with a measure absolutely continuous with respect to $P$. Let   $\langle \cdot, \cdot\rangle$  denote the bilinear form
for the dual system $( L^{\Uhat},(L^{\Uhat})^* )$. The convex conjugate $(I_U)^*:(L^{\Uhat})^* \rightarrow (-\infty, +\infty]$ of the expected utility functional  $ L^{\Uhat} \ni k \mapsto E[U(k)]:= I_U(k) $ is then defined as:
$$  (I_U)^*(z):= \sup_{k \in L^{\Uhat} } \{ I_U(k)- \langle z,k \rangle\}. $$
Recall that the polar set of a cone   $C\subset L^{\Uhat}$ is the subset of $(L^{\Uhat})^*  $  defined as $C^0:= \{ z \in (L^{\Uhat})^* \mid
 \langle z, k \rangle \leq 0 \text{ for all } k \in C\}$. The set of normalized elements in $C^0$, i.e. those $z$ which verify $\langle z, I_{\Omega}\rangle=1$, is denoted by $C^0_1$. Thus,  when $z\in C^0_1 $ is regular it is an absolutely continuous normalized measure (with sign).   The following Theorem  is the key to  understanding the precise implications of Assumption \ref{strongdual}. Its proof is basically identical to \cite[Theorem 21]{bf08}, but with  our strategies  $\simple$.
\begin{theorem}\label{generaldual}
Under Assumption  \ref{maximalS} and \ref{Inada}, for any $x\in (\underline{x},\xbar)$
the following dual relation holds:
 \begin{equation}
 u_{\simple}(x)  =  \min_{z \in C^0}  (I_U)^*(z)= \min_{ y>0 , z\in C^0_1}
 \left\{ y(x+\|z_s\|) + E\left[V\left(y \frac{dz_r}{dP}\right)\right]  \right\},\label{generaldual2}
\end{equation}
where $C:=\{ k \in L^{\Uhat} \mid k \leq H\cdot S_T \text{ for  some } H \in \simple \}$.
  When there is a regular dual minimizer, the above formula simplifies to
   \begin{equation}\label{strongdual2}
     u_{\simple}(x)  =   \min_{ y>0 , Q\in \measures} \left\{ yx + E\left[V\left(y \frac{dQ}{dP}\right)\right]   \right\}.
   \end{equation}
\end{theorem}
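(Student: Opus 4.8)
The plan is to run the Orlicz-space convex duality argument from the proof of \cite[Theorem 21]{bf08}, checking that every step survives the replacement of the attainable cone generated by $\boundedW$ with the one generated by $\simple$. Write the primal problem as the concave maximization $u_{\simple}(x)=\sup_{k\in C}E[U(x+k)]$ on $L^{\Uhat}$, where $C:=\{k\in L^{\Uhat}\mid k\leq H\cdot S_T\text{ for some }H\in\simple\}$; the equality with $\sup_{H\in\simple}E[U(x+H\cdot S_T)]$ holds because $U$ is non-decreasing. The first point to record is that $C$ has exactly the structural features that \cite{bf08} exploit: since $\simple$ is a vector space and $H\cdot S_T\in L^{\Uhat}$ for every $H\in\simple$ (by the construction in Definition \ref{simple strategies}, which in turn rests on Assumption \ref{maximalS}), the set $C$ is a convex cone contained in $L^{\Uhat}$ with $-L^{\Uhat}_+\subseteq C$; and by the argument in Lemma \ref{simplesuper} every $Q\in\measures$, regarded as a continuous linear functional on $L^{\Uhat}$, belongs to the normalized polar $C^0_1$, so $C^0\neq\{0\}$ under Assumption \ref{Inada}.

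Next I would apply the Fenchel--Rockafellar theorem in the dual system $(L^{\Uhat},(L^{\Uhat})^*)$ to $f(k):=E[U(x+k)]$ and $g:=\delta_C$. Because $C$ is a cone, the conjugate of $g$ only imposes $z\in C^0$, while the conjugate of the integral functional $E[U(\cdot)]$ is computed by Rockafellar's theorem on integral functionals (see \cite{bf08,RR}): decomposing $z=z_r+z_s$ into regular and singular (Yosida--Hewitt) parts, the regular part contributes $E[V(dz_r/dP)]$ and the singular part, evaluated by driving $k$ down towards $\underline{x}$ on the sets carrying $z_s$, contributes a term proportional to $\|z_s\|$; together with the $x\langle z,I_\Omega\rangle$ produced by the shift, and after parametrizing $z\in C^0\setminus\{0\}$ as $z=y\hat z$ with $y=\langle z,I_\Omega\rangle>0$ and $\hat z\in C^0_1$, these pieces assemble into the right-hand side of \eqref{generaldual2}, the easy inequality there being just the Fenchel inequality. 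For the reverse inequality and dual attainment one needs the constraint qualification — $f$ is proper, concave and, being bounded below on a Luxemburg-norm ball around $0\in C$, norm-continuous at $0$ — and one needs the common value finite with the minimizing $y$ bounded away from $0$. This is where Assumption \ref{Inada} is used: by \eqref{Inada5} and Corollary \ref{CorInada} there is $Q\in\measures$ with $v_Q(y)<\infty$ for \emph{all} $y>0$, so the dual objective is finite along a whole ray, while $x<\xbar$ gives $u_{\simple}(x)\leq u_Q(x)<U(+\infty)$ by Lemma \ref{ineq} and Proposition \ref{uQstrict}, keeping the minimizing $y$ strictly positive; as in \cite{bf08}, the a priori $L^1$-bound on primal maximizing sequences that follows from these facts is what excludes a duality gap and turns the infimum into a minimum.

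Finally I would treat the simplification under a regular minimizer $z^*$, i.e.\ $z^*_s=0$. Then $z^*$ is a non-negative, $P$-absolutely continuous measure; normalizing, $z^*=y^*Q^*$ with $y^*=\langle z^*,I_\Omega\rangle>0$ and $Q^*$ a probability in $C^0_1$. The polar condition $\langle z^*,k\rangle\leq 0$ for all $k\in C$ becomes $E_{Q^*}[H\cdot S_T]\leq 0$ for every $H\in\simple$, hence $E_{Q^*}[H\cdot S_T]=0$ because $\simple$ is a vector space; spelling this out on the buy-and-hold integrands of Definition \ref{simple strategies} shows that $\varphi\cdot S$, stopped at each $\tau_n$, is a genuine $Q^*$-martingale, and by \'Emery's identity \eqref{emery} this is equivalent to $S$ being a $Q^*$-$\sigma$-martingale, i.e.\ $Q^*\in\mathcal M$; finiteness of $(I_U)^*(z^*)=E[V(y^*\,dQ^*/dP)]$ additionally yields $Q^*\in P_V$, so $Q^*\in\measures$. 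Since $\|z^*_s\|=0$, the singular term drops out of \eqref{generaldual2}, which collapses to \eqref{strongdual2}.

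The step I expect to be the main obstacle is the identification of the conjugate of $E[U(\cdot)]$ on $L^{\Uhat}$ — in particular pinning down the singular contribution as the $\|z_s\|$-term: this requires the full Rockafellar machinery for integral functionals on a non-reflexive Orlicz space and a careful use of the Yosida--Hewitt decomposition, together with the verification — identical to the one in \cite{bf08} — that the boundedness supplied by Assumption \ref{Inada} simultaneously closes the duality gap and delivers attainment of the outer minimum.
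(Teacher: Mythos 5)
Your proposal is correct and follows essentially the same route as the paper: Fenchel duality on $L^{\Uhat}$ with the Kozek/Rockafellar formula $(I_U)^*(z)=E[V(dz_r/dP)]+\|z_s\|$ for the regular--singular decomposition, normalization over $C^0_1$ with the minimizer non-null because $u_{\simple}(x)<U(+\infty)$ (Proposition \ref{uQstrict}), and continuity of $I_U$ at a point of $C$ as the constraint qualification. Your treatment of the regular-minimizer case is exactly the paper's identification $\measures=C^0_1\cap P_V$, obtained there via Lemma \ref{simplesuper} and Lemma \ref{polar}, which your polarity-plus-\'Emery argument reproves.
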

\begin{proof}
 The first part of the  proof goes along the same lines of the proof of Lemma  \ref{dualityQ} and thus we give only a sketch.  Suppose for simplicity $x=0$.    As in Lemma  \ref{dualityQ}, $u_{\simple}(0) = \sup_{k \in C} E[U(k)]$ and the  concave expected utility functional $I_U$ is proper and has a continuity point which belongs to $C$. Then, Fenchel Duality Theorem applies and
\begin{equation}\label{desp}
u_{\simple}(0) =\sup_{k \in C} E[U(k)] = \min_{z\in C^0} (I_U)^*(z) = \min_{z\in C^0} \left\{ E\left[V\left( \frac{dz_r}{dP}\right)\right] + \|z_s\|\right\},
\end{equation}
where the second equality follows from the explicit expression of the convex conjugate $(I_U)^*(z)=E[V( \frac{dz_r}{dP})] + \|z_s\| $  found by Kozek \cite{koz}. Note that $C\supseteq -L^{\Uhat}_+ $, so $C^0_1$ consists of positive normalized functionals. Assumption \ref{Inada}
 implies in particular $ \measures \neq \emptyset$ and since  $0\in (\underline{x}, \xbar)$  Proposition
 \ref{uQstrict} implies $ u_{\simple}(0) \leq u_Q(0) < U(+\infty)
 $ for any $Q\in \measures$.
Thus $u_{\simple}(0)<U(+\infty) $, so the dual minimizers are non null and  the dual problem can be re-written as
$$\min_{y>0, z\in C_1^0} \left\{ E\left[V\left( y \frac{dz_r}{dP}\right)\right] + y \|z_s\|\right\}, $$
via the normalized dual
variables in $C^0_1$, which proves \eqref{generaldual2}.   Any dual minimizer  $\hat{z}\in C^0_1$ clearly satisfies the integrability condition $ E[V(y \frac{d\hat{z}_r}{dP})] <+\infty $ for some $y$. Since $\langle \hat{z} , I_ \Omega \rangle = E[\frac{d\hat{z}_r}{dP}I_{\Omega}] +\langle \hat{z}_s, I_{\Omega}\rangle =1 $, when $\hat{z}_s=0$  this exactly means $\hat{z}=\hat{z}_r \in P_V$.  Suppose there exists  a regular  dual minimizer. Then, the optimal dual value is reached upon  $C_1^0\cap P_V$. Therefore,
$$u_{\simple}(0) =   \min_{y>0, Q\in C_1^0\cap P_V}  E\left[V\left( y \frac{dQ}{dP}\right)\right].  $$
The Lemmata \ref{simplesuper} and \ref{polar} rely on Assumption \ref{maximalS} to  give $ \measures = C_1^0\cap P_V$,  whence
the conclusion \eqref{strongdual2} follows.\qquad
\end{proof}

The above Theorem shows that the additional Assumption \ref{strongdual} amounts to requiring $\hat{z}_s=0$ for some dual optimizer $\hat{z}$ in \eqref{generaldual2}.
The next Corollary provides a simple sufficient condition which ensures that \emph{any} dual optimizer is regular.
\begin{corollary}\label{corstrongdual}
Let  $U$ be finite on the whole $\mathbb{R}$ and let $ S \in \SsigMU$. Under Assumption \ref{Inada}, for any $x\in (\underline{x},\xbar)$ the  simpler dual relation \eqref{strongdual2} holds. In other words,   Assumption \ref{strongdual} is automatically satisfied if Assumption \ref{Inada} holds and $S\in \SsigMU$.
\end{corollary}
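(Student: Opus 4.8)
The plan is to lean on Theorem \ref{generaldual}: it already delivers the duality \eqref{generaldual2}, and its final sentence upgrades this to \eqref{strongdual2} as soon as some dual minimizer is regular. So the whole task is to show that, when $U$ is finite on $\mathbb{R}$ and $S\in\SsigMU$ (which in particular gives Assumption \ref{maximalS}, so Theorem \ref{generaldual} applies under Assumption \ref{Inada}), a minimizing pair in \eqref{generaldual2} can be taken with vanishing singular part. First I would fix a minimizing pair $(\hat y,\hat z)$ in the last expression of \eqref{generaldual2}, with $\hat y>0$ and $\hat z=\hat z_r+\hat z_s\in C^0_1$, and argue that replacing $\hat z$ by its regular part $\hat z_r$ keeps the pair minimizing.

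Two structural facts are the engine of the argument. First, because $U$ is finite on $\mathbb{R}$ the Young function $\Uhat$ is finite-valued, so by \eqref{inclOrlicz} one has $L^\infty\hookrightarrow M^{\Uhat}\hookrightarrow L^{\Uhat}$; hence every singular functional on $L^{\Uhat}$ annihilates the Orlicz heart $M^{\Uhat}$, and in particular $\langle\hat z_s,I_\Omega\rangle=0$. Second, $S\in\SsigMU$ together with Proposition \ref{sigmaI} yields $S\in\SIMU$, so one can pick a strictly positive $\varphi\in\cap_{i=1}^d L(S^i)$ with $\varphi\cdot S\in\scr{S}^{M^{\Uhat}}$; building $\simple$ with this $\varphi$ as in Definition \ref{simple strategies}, every $H\in\simple$ satisfies $|H\cdot S_T|\le c_H\,(\varphi\cdot S)^*_T$ for some constant $c_H$, so $H\cdot S_T\in M^{\Uhat}$. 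Since $u_\simple(x)$ and the dual problem are insensitive to the particular localization data used in Definition \ref{simple strategies} (cf.\ the role of \'Emery's equality in \S\ref{sect: simple}), this choice of $\varphi$ entails no loss of generality.

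Granting these, the rest is short. Elements of $C^0_1$ are positive functionals, so by the Yosida--Hewitt decomposition (see \cite{RR}) both $\hat z_r\ge0$ and $\hat z_s\ge0$. Because $\simple$ is a vector space and $\hat z\in C^0$, one has $\langle\hat z,H\cdot S_T\rangle=0$ for every $H\in\simple$, and since $\hat z_s$ kills $M^{\Uhat}\ni H\cdot S_T$ this forces $\langle\hat z_r,H\cdot S_T\rangle=0$ for all $H\in\simple$. From $\langle\hat z,I_\Omega\rangle=1$ and $\langle\hat z_s,I_\Omega\rangle=0$ one reads off $E[d\hat z_r/dP]=1$, so $\hat z_r$ is a probability measure absolutely continuous with respect to $P$; and from the finite optimal value $\hat y(x+\|\hat z_s\|)+E[V(\hat y\,d\hat z_r/dP)]=u_\simple(x)<+\infty$ one gets $E[V(\hat y\,d\hat z_r/dP)]<+\infty$, i.e.\ $\hat z_r\in P_V$. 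Positivity of $\hat z_r$ gives $\langle\hat z_r,k\rangle\le\langle\hat z_r,H\cdot S_T\rangle=0$ for every $k\in C$ (where $k\le H\cdot S_T$, $H\in\simple$), so $\hat z_r\in C^0_1$, whence $\hat z_r\in C^0_1\cap P_V=\measures$ by Lemmata \ref{simplesuper} and \ref{polar}. Finally the pair $(\hat y,\hat z_r)$ is feasible in \eqref{generaldual2} with zero singular part and objective value $\hat y x+E[V(\hat y\,d\hat z_r/dP)]\le u_\simple(x)$, hence also minimizing; Theorem \ref{generaldual} then yields \eqref{strongdual2}. This is exactly \eqref{eqndual}, i.e.\ Assumption \ref{strongdual}, once one observes, via Lemma \ref{dualityQ} and Proposition \ref{uQstrict}, that the boundary value $y=0$ contributes $V(0)=U(+\infty)>u_\simple(x)$ and so never attains the minimum for $x\in(\underline{x},\xbar)$.

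The hard part, I expect, is the second structural fact: $\sigma$-localizability in the Morse subspace $M^{\Uhat}$ is a strong integrability property that need not be inherited by stochastic integrals against merely bounded integrands. The $\I$-localizing $\varphi$ of Proposition \ref{sigmaI} is precisely what makes it work, but using it legitimately inside Definition \ref{simple strategies} rests on $u_\simple(x)$ being independent of the chosen localization data. If one wishes to avoid that, one must instead check case by case in Definition \ref{simple strategies} that $S\in\SsigMU$ together with the relevant case hypothesis already forces the (suitably stopped) maximal process into $M^{\Uhat}$.
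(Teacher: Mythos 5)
Your argument is correct and follows essentially the same route as the paper's own proof: since $U$ is finite on $\mathbb{R}$ the Young function $\Uhat$ is finite, so singular functionals annihilate $M^{\Uhat}\supseteq L^{\infty}$, terminal wealths $H\cdot S_T$ of simple strategies lie in $M^{\Uhat}$, and hence discarding the singular part of a dual minimizer keeps it in $C^0_1$, does not increase the dual objective, and Lemma \ref{polar} identifies the regular part with an element of $\measures$. The only difference is that you justify $H\cdot S_T\in M^{\Uhat}$ explicitly, by building $\simple$ with an $\I$-localizing $\varphi$ furnished by the $M^{\Uhat}$-part of Proposition \ref{sigmaI} (and you rightly flag the dependence on the localization data), whereas the paper simply asserts this membership.
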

\begin{proof}
Note first that the condition  $S \in \SsigMU$  may coincide with the generally weaker Assumption \ref{maximalS}.  This happens when $L^{\Uhat} = M^{\Uhat}$, that is when $U$ has left tail which goes to $-\infty$ at a ``moderate speed''. In such case, the dual space $(L^{\Uhat})^* $ is free of singular parts---exactly as in the dual system $(L^p,L^q)$ when $1\leq p<+\infty$---and Theorem \ref{generaldual} immediately yields the strong dual relation \eqref{strongdual2}.\\
  \indent So, suppose $S \in \SsigMU $ but $ M^{\Uhat} \subsetneq  L^{\Uhat} $.  The  most  intuitive  way to show \eqref{strongdual2}  is to note that terminal values $H\cdot S_T, H \in \simple$, are in $M^{\Uhat}$,   to set $\check{C}:=\{ k \in M^{\Uhat} \mid k \leq H\cdot S_T \text{ for  some } H \in \simple \}$ and to  work with  the dual system $( M^{\Uhat}, (M^{\Uhat})^* )$ instead of  the full $ ( L^{\Uhat}, (L^{\Uhat})^* )$. The advantage is that the elements of  $ (M^{\Uhat})^* $ are regular.  Then, an application of the  duality arguments   of Theorem \ref{generaldual} with $C$ replaced by $\check{C}$ gives
  $$ u_{\simple}(x) = \min_{y>0, Q\in (\check{C})_1^0\cap P_V} \left\{ xy + E\left[V\left( y \frac{dQ}{dP}\right)\right]\right\}. $$
  Now,  $(\check{C})^0_1 $ consists of probabilities and as in the final part of the Theorem   $(\check{C})^0_1\cap P_V =\measures $, whence \eqref{strongdual2}. \\
    \indent For the interested reader we provide an alternative proof which is less intuitive as it requires an analysis of the behavior of singular elements of $ (L^{\Uhat})^* $, but this proof makes direct use of the general dual formula \eqref{generaldual2}.   When
    $S\in \SsigMU$ the set  $C_1^0$ has a  special structure:
  $$   C_1^0 \ni  z= z_r+z_s \Leftrightarrow z_r \in C_1^0. $$
   This can be seen through the following steps: (i) $C^0$ coincides in fact with $ \{ z \in (L^{\Uhat})_+^*  \mid \langle z, H\cdot S_T \rangle = 0,\ \forall H \in \simple \}$, where the equality holds as $\simple$ is a vector space, and here $ H\cdot S_T \in M^{\Uhat}$;  (ii)  when $ U$ is finite on $ \mathbb{R}$, $\Uhat$ is also finite everywhere and with such Young functions   singular elements in the dual space   are null over  the Orlicz heart: if $z=z_s$ then $z$ is null over $M^{\Uhat}$; (iii) the Orlicz heart contains $L^{\infty}$; 4) consequently  $z \in C_1^0 $ iff  $z$ is a positive functional and
   \begin{eqnarray*}
      \langle z, I_{\Omega} \rangle &=& E\left[ \frac{dz_r}{dP} \right] + \langle z, I_{\Omega} \rangle =  E\left[ \frac{dz_r}{dP} \right] =1,  \\
      \langle z, H\cdot S_T \rangle &= &E\left[ \frac{dz_r}{dP}\, H\cdot S_T\right] + \langle z_s, H\cdot S_T\rangle =  E\left[ \frac{dz_r}{dP}\, H\cdot S_T\right]
      =  0  \text{ for all  } H\in \simple,
   \end{eqnarray*}
that is, iff $z_r \in C_1^0$. Now, a simple inspection of the dual problem in  \eqref{generaldual2} shows that,  for any $z\in C^0_1$,  setting $z_s$ to zero makes the dual function to be minimized smaller.   Hence, any minimizer is regular, i.e. we have shown \eqref{strongdual2}.\qquad
\end{proof}

\subsection{Characterization of the optimal solution: $\xbar=+\infty$, $\optQ$ not equivalent to $P$}\label{literature2}
When  $U$ is strictly monotone (a typical example is the exponential utility) but $\optQ$ is not equivalent to $P$ one can express the optimal terminal wealth $\optf$ using integrands in $L(S,\optQ)$ but no longer using the more natural strategies in $L(S,P)$. An \emph{approximation result} for $\optf$ via integrands in $L(S,P)$ was first shown by Acciaio
\cite{A05}, under the following technical conditions:
\begin{enumerate}
    \item[(i)]  $U$ is  differentiable, monotone, strictly concave
    and it satisfies $\mathrm{RAE}(U)$ (\ref{rae+}, \ref{rae-});
    \item[(ii)]  $S$ is locally bounded;
    \item[(iii)] the stopping times of the filtration are predictable.
\end{enumerate}
Acciaio builds a sequence of integrals $
\widetilde{H}_n\cdot S_T $, whose expected utility tends to the optimum, and
which satisfies $ (x + \widetilde{H}_n\cdot S_T) \rightarrow \optf $ $P$-a.s.

Our setup allows us to remove the
technical conditions above while proving $P$-a.s. convergence of terminal wealths in item (b.i) of Theorem \ref{main theorem}
and a stronger $L^1(P)$ convergence of utilities in item (b.ii),  which implies convergence of expected utility.


\section{Auxiliary results}\label{auxiliary}

\begin{lemma}\label{CC}
Let $\Psi:\mathbb{R}\rightarrow (-\infty, +\infty]$ be a convex, lower semicontinuous  function. For a given sequence $(x_n)_n$,
if  $d_n\in \mathbb{R}_+$, $\sum_{n\geq 1} d_n =1$ and $\sum_{n\geq 1} d_n x_n $ converges,
then
$$\Psi(\sum_{n\geq 1}  d_n x_n) \leq \liminf_N \sum_{n= 1}^N d_n\Psi(x_n).$$
When $\Psi$ is bounded from below, the above inequality simplifies to 
$$ \Psi(\sum_{n\geq 1}  d_n x_n) \leq  \sum_{n\geq  1} d_n\Psi(x_n).$$  
\end{lemma}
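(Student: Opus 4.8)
The plan is to reduce the countable convex combination to finite ones by renormalization, invoke ordinary (finite) convexity, and then let the number of terms tend to infinity using lower semicontinuity of $\Psi$. If $d_n\Psi(x_n)=+\infty$ for some $n$ (i.e. $d_n>0$ and $\Psi(x_n)=+\infty$) both displayed inequalities are trivial, so one may assume $d_n\Psi(x_n)\in(-\infty,+\infty]$, with the partial sums $a_N:=\sum_{n=1}^N d_n\Psi(x_n)$ well defined in $(-\infty,+\infty]$.

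First I would set $s_N:=\sum_{n=1}^N d_n$, so that $s_N\uparrow 1$ and in particular $s_N>0$ for $N$ large; it suffices to treat such $N$. Writing $x:=\sum_{n\ge 1}d_nx_n$ and $y_N:=\tfrac{1}{s_N}\sum_{n=1}^N d_nx_n$, the weights $d_n/s_N$ for $1\le n\le N$ are nonnegative and sum to $1$, so finite convexity of $\Psi$ gives $\Psi(y_N)\le \tfrac{1}{s_N}a_N$. Note one cannot simply pad the finite combination with weight $1-s_N$ on the point $0$, since $0$ need not lie in $\mathrm{dom}\,\Psi$; this is exactly why the renormalization by $s_N$ is required. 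Since $\sum_{n=1}^N d_nx_n\to x$ and $s_N\to 1$ we get $y_N\to x$, and lower semicontinuity of $\Psi$ yields $\Psi(x)\le\liminf_N\Psi(y_N)\le\liminf_N\tfrac{1}{s_N}a_N$.

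It then remains to observe that multiplying a sequence by the factor $1/s_N\to 1>0$ leaves its $\liminf$ unchanged — a short verification in the extended reals, handling the cases $\liminf_N a_N$ finite, $-\infty$, or $+\infty$ separately. This gives the first assertion $\Psi(\sum_{n\ge 1}d_nx_n)\le\liminf_N\sum_{n=1}^N d_n\Psi(x_n)$. Finally, when $\Psi\ge -m$ for some $m\in\mathbb R$, the terms $d_n\Psi(x_n)+md_n$ are nonnegative, so $\sum_{n\ge 1}(d_n\Psi(x_n)+md_n)$ converges in $[0,+\infty]$ and hence $a_N$ converges in $(-\infty,+\infty]$ to $\sum_{n\ge 1}d_n\Psi(x_n)$; in particular $\liminf_N a_N$ equals this sum, and the first inequality specializes to $\Psi(\sum_{n\ge 1}d_nx_n)\le\sum_{n\ge 1}d_n\Psi(x_n)$.

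The whole argument is routine; the only mildly delicate points are the renormalization needed to invoke finite convexity (because $0$ may fail to be in $\mathrm{dom}\,\Psi$) and the bookkeeping with extended-real-valued series and $\liminf$.
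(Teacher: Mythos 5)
Your proof is correct, and its skeleton (finite convexity, then lower semicontinuity along the partial sums, then a limit/liminf bookkeeping step, with the bounded-below case reducing the liminf to the full series) is the same as the paper's; the difference lies in how the finite convex combination is formed. The paper pads the first $N$ terms with the leftover weight $1-\sum_{n=1}^N d_n$ placed at the point $0$, writing $\Psi(\sum_{n=1}^N d_n x_n)\leq (1-\sum_{n=1}^N d_n)\Psi(0)+\sum_{n=1}^N d_n\Psi(x_n)$, and then lets $N\to\infty$; this is slightly shorter because no rescaling of the liminf is needed, but it implicitly uses $\Psi(0)<+\infty$ (in the paper's application $\Psi$ is a Young function with $\Psi(0)=0$, so the extra term vanishes and the argument is clean). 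You instead renormalize by $s_N=\sum_{n=1}^N d_n$ and compare $\liminf_N a_N/s_N$ with $\liminf_N a_N$, which — as you point out — makes the lemma valid for any convex l.s.c.\ $\Psi:\mathbb{R}\to(-\infty,+\infty]$ without assuming $0\in\mathrm{dom}\,\Psi$, i.e.\ your version is marginally more general than the paper's proof as written, at the cost of the extended-real liminf verification. One cosmetic remark: after discarding the trivial case where some $d_n\Psi(x_n)=+\infty$ you may assume these terms are finite (you wrote $(-\infty,+\infty]$), and in fact that reduction is not even needed since every step tolerates $a_N=+\infty$; this does not affect correctness.
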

\begin{proof}From convexity of  $\Psi$,
 $$ \Psi(\sum_{n= 1}^N d_n x_n) \leq  (1- \sum_{n= 1}^N  d_n)\Psi(0) +  \sum_{n= 1}^N d_n \Psi(x_n). $$
When $N \uparrow +\infty$, $\sum_{n= 1}^N d_n x_n \rightarrow \sum_{n\geq 1} d_n
x_n$ so that lower semicontinuity of $\Psi$ implies $\Psi(
\sum_{n\geq 1}  d_n x_n)\leq \liminf_{N\rightarrow +\infty} \Psi(\sum_{n= 1}^N
d_n x_n)  $. The above displayed chain  shows that such $\liminf $ is dominated by
 $\liminf_N \sum_{n= 1}^N
d_n\Psi(x_n)$. Finally, when $\Psi$ is bounded from below, the latter series admits a limit (finite or $+\infty$).\qquad
\end{proof}

\begin{lemma}\label{ui}
Let $Q\ll P$. If $(Z^n \frac{dQ}{dP})_n$ is $P$-uniformly integrable, then  $(Z^n)_n$ is
$Q$-uniformly integrable, and vice versa.
\end{lemma}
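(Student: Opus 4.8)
The plan is to exploit the elementary change-of-measure identity $E_Q[|Z^n| I_A] = E_P[|Z^n|\,\tfrac{dQ}{dP}\,I_A]$, valid for every $A\in\mathcal{F}_T$ (both sides being well defined in $[0,+\infty]$), together with the classical characterization of uniform integrability on a probability space: a family is uniformly integrable iff it is bounded in $L^1$ and \emph{uniformly absolutely continuous}, i.e.\ for every $\varepsilon>0$ there is $\delta>0$ such that $\mu(A)<\delta$ implies $\sup_n E_\mu[|Y^n|I_A]<\varepsilon$. Write $g:=\tfrac{dQ}{dP}\in L^1_+(P)$; then $E_Q[|Z^n|I_A]=E_P[|Z^n g|I_A]$, and in particular $E_Q[|Z^n|]=E_P[|Z^n g|]$, so $L^1$-boundedness transfers in both directions for free.

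For the direction ``$(Z^n)$ $Q$-uniformly integrable $\Rightarrow$ $(Z^n g)$ $P$-uniformly integrable'' I would argue as follows. Given $\varepsilon>0$, first pick $\delta_1>0$ from the uniform $Q$-absolute continuity of $(Z^n)$ (so $Q(A)<\delta_1\Rightarrow\sup_n E_Q[|Z^n|I_A]<\varepsilon$); then, since $g\in L^1(P)$ the singleton $\{g\}$ is $P$-uniformly integrable, so there is $\delta>0$ with $P(B)<\delta\Rightarrow E_P[gI_B]<\delta_1$. Now $P(B)<\delta$ forces $Q(B)=E_P[gI_B]<\delta_1$, whence $\sup_n E_P[|Z^n g|I_B]=\sup_n E_Q[|Z^n|I_B]<\varepsilon$. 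Together with the $L^1(P)$-bound this gives $P$-uniform integrability of $(Z^n g)$.

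The reverse implication is the one that needs care, and is the main obstacle, because $Q(A)$ small does \emph{not} force $P(A)$ small when $Q$ is not equivalent to $P$; instead I would work with the tail definition directly and show $\sup_n E_P[|Z^n g|I_{\{|Z^n|>c\}}]\to0$ as $c\to+\infty$, which by the identity above is precisely $Q$-uniform integrability of $(Z^n)$. Fix $\varepsilon>0$ and, using $P$-uniform integrability of $(Z^n g)$, pick $M$ with $\sup_n E_P[|Z^n g|I_{\{|Z^n g|>M\}}]<\varepsilon/2$; split $\{|Z^n|>c\}$ according to whether $|Z^n g|>M$ or $|Z^n g|\le M$. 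The first part contributes at most $\sup_n E_P[|Z^n g|I_{\{|Z^n g|>M\}}]<\varepsilon/2$. On the second part one has $g=|Z^n g|/|Z^n|\le M/c$, so $\{|Z^n|>c,\ |Z^n g|\le M\}\subseteq\{0<g\le M/c\}$ (the locus $\{g=0\}$ may be dropped since there $|Z^n g|=0$); as $c\to+\infty$ the sets $\{0<g\le M/c\}$ decrease to $\emptyset$, hence $P(\{0<g\le M/c\})\to0$, and uniform absolute continuity of the $P$-uniformly integrable family $(Z^n g)$ makes $\sup_n E_P[|Z^n g|I_{\{0<g\le M/c\}}]<\varepsilon/2$ once $c$ is large. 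Adding the two estimates yields $\sup_n E_P[|Z^n g|I_{\{|Z^n|>c\}}]<\varepsilon$ for large $c$, completing the proof. The single decisive trick is the pointwise bound $g\le M/c$ on the ``bad'' set, which converts an otherwise uncontrolled $P$-integral into one over a set of vanishing $P$-measure.
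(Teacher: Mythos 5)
Your proof is correct and takes essentially the same route as the paper's: in the hard direction you split the event $\{|Z^n|>c\}$ according to whether the product $|Z^n|\frac{dQ}{dP}$ is large or the density is small (the paper uses the single threshold $\sqrt{r}$ where you use the two parameters $M$ and $c$), controlling the first piece by the tail of the $P$-uniformly integrable family and the second by uniform absolute continuity over $\{0<\frac{dQ}{dP}\le M/c\}$, whose $P$-measure vanishes. Your easy direction via $Q\ll P$ is likewise the paper's argument, merely made slightly more quantitative through the uniform integrability of the single density $\frac{dQ}{dP}$.
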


{\em Proof}. This intuitive Lemma is a  consequence of the Dunford-Pettis criterion: 
A subset $K \subset L^1$ is uniformly integrable if and only if it is relatively compact for the weak topology. 
However,  here is  an elementary proof.  Recall $(X_\alpha)_\alpha$ is uniformly integrable when
$$ \lim_{r\to +\infty}\sup_\alpha E[|X_\alpha|\,I_{\{X_\alpha\geq r\}}]=0.$$
There is a well-known equivalent characterization of uniform integrability for   random variables: 
$(X_\alpha)_\alpha$ is uniformly integrable if and only if \emph{i)} the family is uniformly bounded 
in $L^1(P)$ and \emph{ii) }for every $\varepsilon >0$ there exists $\delta>0$ such that whenever $P(A)<\delta$,  
$ \sup_{\alpha} E [ I_A |X_{\alpha}|] <\varepsilon   $ (see e.g. the book \cite[Chapter 2.6]{sh}). 
So,   suppose $(Z^n \frac{dQ}{dP})_n$ is $P$-uniformly integrable. Then, for every $r>0$
\begin{align*}
 E_Q \left [I_{\{|Z^n|>r\}}
|Z^n| \right ]  &\leq
 E_Q\left [I_{\{|Z^n|>r , \frac{dQ}{dP}>\frac{1}{\sqrt{r}}\}} |Z^n|\right ]  +  E_Q\left[I_{\{ \frac{dQ}{dP}\leq \frac{1}{\sqrt{r}}
 \}} |Z^n| \right]  \\
 &\leq  E\left [I_{\{|Z^n| \frac{dQ}{dP}> \sqrt{r}\}} |Z^n|\frac{dQ}{dP} \right ]  + E\left [ I_{\{ 0< \frac{dQ}{dP}\leq \frac{1}{\sqrt{r}}
 \}} |Z^n|\frac{dQ }{dP} \right ],
\end{align*}
whence
\begin{align*}
\lim_{r\rightarrow+\infty }&  \sup_{n} E_Q \left [ I_{\{|Z^n|>r\}}|Z^n|\right]\\  
&\leq \lim_{r\rightarrow +\infty }  \sup_{n} \left (
E\left [I_{\{|Z^n| \frac{dQ}{dP}> \sqrt{r}\}} |Z^n|\frac{dQ}{dP} \right]  +
E \left[I_{\{ 0 < \frac{dQ}{dP}\leq \frac{1}{\sqrt{r}}
 \}} |Z^n|\frac{dQ }{dP}\right ] \right )=0,
\end{align*} 
where the last equality follows from $P$-uniform
integrability of $ (Z^n \frac{dQ}{dP})_n$ and from the fact  that   $\{
0<  \frac{dQ}{dP}\leq \frac{1}{\sqrt{r}}\} $ has $P$-probability
which tends to $0$ when $r$ goes to $+\infty$.

The converse implication follows directly  from $Q\ll P$:
\begin{align*}
\lim_{P(A)\to 0} \sup_n E\left[I_A|Z^n|\frac{dQ}{dP}\right] &= \lim_{P(A)\to 0} \sup_n E_Q[I_A|Z^n|]\\
&\leq \lim_{Q(A)\to 0} \sup_n E_Q[I_A|Z^n|]=0.\qquad\endproof
\end{align*}
{\em Proof of Lemma \ref{dualityQ}}. $u_Q(x)<+\infty$ follows from Fenchel inequality and from finite generalized relative entropy of $Q$: if $X$ satisfies $E_Q[X]\leq x$,
$E[U(X)] \leq xy_Q + v_Q(y_Q)$, with $y_Q$ from Definition \ref{P-V}. The dual formula to be proved  is actually a straightforward consequence of the Fenchel duality formula and of the  results obtained by Rockafellar in the 1970-ies on conjugates of functionals in integral form (here, expected utility). However, we give a different proof based on Orlicz duality, since it is  useful for Theorem \ref{generaldual} where the Orlicz setup is necessary.

         The utility maximization problem $\sup_{E_Q[X]\leq x} E[U(X)]$ can be rewritten
         over the utility-induced Orlicz space $\Orlicz(P)$ defined in
        \eqref{Orlicz}. This can be done because: i) the supremum will be
        reached over those $X$ such that $E[U(X)]$ is finite, so that
        $-X^- \in \Orlicz(P)$; ii) if  $E[U(-X^-)]>-\infty $
        then the truncated sequence $X_n = X\wedge n$ is also in the
        Orlicz space and by Fatou Lemma in the limit it delivers the same
        expected utility from $X$;   iii) $\Orlicz(P) \subseteq L^1(Q)$, which follows from   $Q\in
        P_V$, from  \eqref{uhat-u-orlicz} and
        Fenchel inequality (this also implies $Q$ is in the topological dual of
        $L^{\Uhat}$).
          Therefore,
        $ u_Q(x) = \sup_{X \in \Orlicz, E_Q[X]\leq x}  E[U(X)]
        $. On $L^{\Uhat}$, the concave functional $I_U(X):=
        E[U(X)]$ is proper:
          $$ X\in L^{\Uhat} \Rightarrow X\in L^1(P) \text{ so that } E[U(X)]\stackrel{\mathrm{Jensen}}{\leq} U(E[X])< +\infty.  $$
          Moreover, $I_U$    has a  continuity
        point which belongs to the maximization domain $D=\{ X\in L^{\Uhat} \mid E_Q[X]  \leq x
        \}$. This is more subtle to check, but it can be proved
        that  the set
        $$\mathcal{B}:=\{ X \in L^{\Uhat} \mid E[ U(-{(1+\epsilon)}X^- )]>- \infty  \text{ for some } \epsilon >0\},$$
        coincides with the interior of the proper domain of
        $I_U$ (see \cite[Lemma 4.1]{bfg08} modulo a sign change),
         where $I_U$ is automatically continuous by the Extended Namioka Theorem (see e.g. \cite{bf09}). Then, as
           $x>  \underline{x}  $, the constant $x$  is  in $\mathcal{B}\cap D$. \\
        The   dual formula \eqref{uQdual1}    is thus  a consequence of Fenchel
        Duality Theorem \cite[Chapter 1]{BRE},  of
        the fact that the polar set  of the  constraint $C:=\{X\mid
        E_Q[X]\leq x\}\supseteq -L^{\infty}_+$, i.e. the set $\{ \mu \in
        (\Orlicz)^* \mid \mu(X) \leq x \, \forall X \in C \}$, by the
        Bipolar Theorem is the positive ray $\{yQ \mid y\geq 0\}$,
        and of the expression of the convex conjugate $(I_U)^*$ of $I_U$
        over  the  variables $y\frac{dQ}{dP}$: $(I_U)^*( y\frac{dQ}{dP}) = E[V
        (y\frac{dQ}{dP})] =v_Q(y)$.\qquad\endproof
\begin{proposition}\label{LQ-boundedness}
Suppose $(k_n)_n$
is a sequence of random variables such that $(E[U(k_n)])_n$ is
bounded from below and assume $(E_{\widetilde Q}[k_n])_n$ is bounded from above
for some $\widetilde Q\in P_V$ satisfying  the Inada condition \eqref{Inada2}.
Then the following statements hold:
\begin{enumerate}
\item[{\em(i)}] $U(k_n)$ is $L^1(P)$-bounded; 
\item[{\em(ii)}] $k_n$ is $L^1(Q)$-bounded for \emph{any} $Q\in P_V$ for which $E_{Q}[k_n]$
is bounded from above. The indirect utility $u_Q$ need not satisfy the Inada condition
\eqref{Inada2}.
\end{enumerate}
\end{proposition}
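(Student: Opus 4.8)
The plan is to push everything through the Fenchel inequality \eqref{fenchel}, $U(x)\le xy+V(y)$, noting that $-U(-k^-)=\Uhat(k^-)\ge 0$ (by $\Uhat(x)=-U(-|x|)$ and $U(0)=0$) and $V\ge 0$ (since $V(y)\ge U(0)=0$). Applied pointwise with $x=-k_n^-$ and a density $y\tfrac{dQ}{dP}$ and then integrated, Fenchel converts a bound on $E[\Uhat(k_n^-)]=E[U^-(k_n)]$ into a bound on $E_Q[k_n^-]$; applied with $x=k_n^+$ it converts a bound on $E_Q[k_n^+]$ into a bound on $E[U(k_n^+)]\ge E[U^+(k_n)]$. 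The main obstacle is that for part~(i) these two estimates are \emph{circular}: controlling $E_{\widetilde Q}[k_n^-]$ requires controlling $E[U^-(k_n)]$, whereas controlling $E[U^-(k_n)]$ --- obtained from the hypothesis $E[U(k_n)]\ge -C_0$ together with Fenchel on $k_n^+$ --- requires controlling $E_{\widetilde Q}[k_n^+]$, hence $E_{\widetilde Q}[k_n^-]$. I would break the loop by running the two Fenchel estimates with \emph{two different scalings} $0<y'<y\le y_{\widetilde Q}$, which is the sole place where the Inada hypothesis on $\widetilde Q$ is used: by Corollary~\ref{CorInada} it is exactly equivalent to $v_{\widetilde Q}$ being finite on the whole interval $(0,y_{\widetilde Q}]$, so both $v_{\widetilde Q}(y)$ and $v_{\widetilde Q}(y')$ are finite (this finiteness being automatic by convexity of $v_{\widetilde Q}$ whenever $U(+\infty)<+\infty$).

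For part~(i) I would proceed as follows. A short preliminary remark shows that the relevant quantities are finite for each fixed $n$: $E[U(k_n)]$ being well defined and bounded below forces $E[U^-(k_n)]<+\infty$; Fenchel on $k_n^-$ with $y=y_{\widetilde Q}$ then forces $E_{\widetilde Q}[k_n^-]<+\infty$; and $E_{\widetilde Q}[k_n]\le C_1$ forces $E_{\widetilde Q}[k_n^+]=E_{\widetilde Q}[k_n]+E_{\widetilde Q}[k_n^-]<+\infty$. Integrating, for $y,y'\in(0,y_{\widetilde Q}]$, the pointwise inequalities $y\tfrac{d\widetilde Q}{dP}k_n^-\le V\!\big(y\tfrac{d\widetilde Q}{dP}\big)+\Uhat(k_n^-)$ and $U(k_n^+)\le y'\tfrac{d\widetilde Q}{dP}k_n^+ + V\!\big(y'\tfrac{d\widetilde Q}{dP}\big)$ (all terms nonnegative) gives $y\,E_{\widetilde Q}[k_n^-]\le v_{\widetilde Q}(y)+E[U^-(k_n)]$ and $E[U^+(k_n)]\le E[U(k_n^+)]\le y'\big(C_1+E_{\widetilde Q}[k_n^-]\big)+v_{\widetilde Q}(y')$, where the first inequality in the latter chain uses $U^+(k_n)\le U(k_n^+)$ (monotonicity of $U$ and $U(0)=0$). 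Combining these with the elementary $E[U^-(k_n)]\le E[U^+(k_n)]+C_0$ yields $E[U^-(k_n)]\le \tfrac{y'}{y}\,E[U^-(k_n)]+K$, where $K$ collects $C_0,C_1,v_{\widetilde Q}(y),v_{\widetilde Q}(y')$ and is independent of $n$. Choosing $y=y_{\widetilde Q}$ and $y'=y_{\widetilde Q}/2$ makes $\tfrac{y'}{y}=\tfrac12$, so $\sup_n E[U^-(k_n)]\le 2K<+\infty$; feeding this back through the first estimate bounds $\sup_n E_{\widetilde Q}[k_n^-]$, then the second bounds $\sup_n E[U^+(k_n)]$, and finally $\sup_n E[|U(k_n)|]=\sup_n\big(E[U^+(k_n)]+E[U^-(k_n)]\big)<+\infty$, which proves~(i).

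For part~(ii) no further Inada-type input is needed. Given $Q\in P_V$ with $\sup_n E_Q[k_n]=:C_2<+\infty$, and writing $C_3:=\sup_n E[U^-(k_n)]<+\infty$ from part~(i), I would integrate the pointwise inequality $y_Q\tfrac{dQ}{dP}k_n^-\le V\!\big(y_Q\tfrac{dQ}{dP}\big)+\Uhat(k_n^-)$ to obtain $y_Q\,E_Q[k_n^-]\le v_Q(y_Q)+E[U^-(k_n)]\le v_Q(y_Q)+C_3$, so $\sup_n E_Q[k_n^-]<+\infty$. Since $E_Q[|k_n|]=E_Q[k_n]+2E_Q[k_n^-]$, this gives $\sup_n E_Q[|k_n|]<+\infty$, as claimed; note that $u_Q$ itself never enters the argument, in agreement with the statement.
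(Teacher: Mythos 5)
Your proof is correct and follows essentially the same route as the paper: both apply the Fenchel inequality to $k_n^-$ and $k_n^+$ with two distinct scalings $0<y'<y$ at which $v_{\widetilde Q}$ is finite (the sole use of the Inada condition, via Corollary \ref{CorInada}), and your part (ii) is identical to the paper's. The only cosmetic difference is that the paper adds the two estimates into the single inequality \eqref{aux5} and bounds $E_{\widetilde Q}[k_n^-]$ directly, whereas you close the same loop by an absorption argument on $E[U^-(k_n)]$ with ratio $y'/y=\tfrac12$, which requires (as you duly note) the per-$n$ finiteness of $E[U^-(k_n)]$.
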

{\em Proof}. In this proof $c$ refers to a constant, not necessarily the same on each line.
\begin{enumerate}
\item[(i)] By hypothesis there is $0<y_1<y_2$ such that $v_{\widetilde
    Q}(y_i)<+\infty $ for $i=1,2$. The Fenchel inequality
    implies
    \begin{align}
    E[U(-k_n^-)]&\leq v_{\widetilde Q}(y_2) - y_2 E_{\widetilde Q}[k_n^-],\label{aux3}\\
    E[U(k_n^+)]&\leq v_{\widetilde Q}(y_1) + y_1 E_{\widetilde Q}[k_n^+],\label{aux4}
    \end{align}
    which yields
    \begin{equation}\label{aux5}
    E[U(k_n)] \leq c + y_1 E_{\widetilde Q}[k_n] - (y_2-y_1)E_{\widetilde Q}[k_n^-].
    \end{equation}
    By assumption, $(E_{\widetilde Q}[k_n])_n $ is bounded from above and $
    (E[U(k_n)])_n$ is bounded from below, whereby one concludes from
    \eqref{aux5} and from $y_2-y_1>0$ that $(E_{\widetilde
    Q}[k_n^-])_n$ is bounded and consequently  $(E_{\widetilde
    Q}[k_n^+])_n$ is also bounded. Finally, by \eqref{aux4} the
    sequence $(E[U(k_n^+)])_n$ is bounded. Since $U(k_n^+)\geq 0,
    U(-k_n^-)\leq 0$ and $(E[U(k_n)])_n$ is bounded from below the
    $L^1(P)$-boundedness of $U(k_n)$ follows.
\item[(ii)]  The inequality \eqref{aux3} applies for any $Q\in P_V$, i.e. there
    is $y_Q>0$ such that
    \begin{equation}\label{aux6}
    E[U(-k_n^-)]\leq c - y_{Q} E_{Q}[k_n^-].
    \end{equation}
    By (i) the sequence $(E[U(-k_n^-)])_n$ is bounded from below whereby
    $(E_{Q}[k_n^-])_n$ must be bounded. As in (i), this and boundedness
    from above of the expectations $(E_ Q[k_n])_n$ ensure
    $(E_Q[k_n^+])_n$ is also bounded.\qquad\endproof
\end{enumerate}

\begin{lemma} \label{polar}
Let $Q\in P_V$ verify $E_Q[X_T]= 0$ for all $X = H\cdot
S, H\in \simple$. Then $Q\in \measures$.
\end{lemma}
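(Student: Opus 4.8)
The plan is to verify that $S$ is a $\sigma$-martingale under $Q$; since $Q\in P_V$ already forces $Q\ll P$, this is exactly what is needed to conclude $Q\in\measures$. By \'Emery's identities \eqref{emery} it is enough to produce a strictly positive predictable $\psi\in\cap_{i=1}^d L(S^i;Q)$ such that $\psi\cdot S$ is a uniformly integrable $Q$-martingale (with $\scr M^1(Q)$-integrable maximal functional), where $\varphi$ and $(\tau_n)_n$ are the objects attached to $S$ in Definition \ref{simple strategies}.

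The first step is an integrability observation: for every $H\in\simple$ one has $(H\cdot S)^*_T\in L^1(Q)$. This is precisely the computation carried out at the beginning of the proof of Lemma \ref{simplesuper}: from $H\in\simple$ one gets $(H\cdot S)^*_T\le c\,(\varphi\cdot S)^*_{\tau_n}$ for some $c>0$ and some $n$, so $E[U(-\tfrac{\alpha}{c}(H\cdot S)^*_T)]>-\infty$ for some $\alpha>0$ by \eqref{uhat-u-orlicz}, and the Fenchel inequality \eqref{fenchel} applied with $x=-\tfrac{\alpha}{c}(H\cdot S)^*_T$ and $y=y_Q\tfrac{dQ}{dP}$ ($y_Q$ from Definition \ref{P-V}) yields $(H\cdot S)^*_T\in L^1(Q)$. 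I would stress that this part of the argument uses \emph{only} $Q\in P_V$, not the $\sigma$-martingale property. In particular, taking $H=e_i\,I_{]0,\tau_n]}\varphi\in\simple$ (here $e_i$ is the $i$-th unit vector), each process $(\varphi\cdot S^i)^{\tau_n}$ has $Q$-integrable maximal functional.

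Next I would show that $(\varphi\cdot S^i)^{\tau_n}$ is a $Q$-martingale for every $i$ and $n$. Fix $0\le s\le t\le T$ and $A\in\mathcal F_s$, and introduce the stopping time $\rho:=s\,\mathbf 1_A+t\,\mathbf 1_{\Omega\setminus A}$ (so $\rho\le T$). Then $H:=e_i\,I_{]\rho\wedge\tau_n,\;t\wedge\tau_n]}\varphi$ is a simple integrand (take $N=1$, $T_0=\rho\wedge\tau_n\le t\wedge\tau_n=T_1\le\tau_n$, $H_1=e_i$), and by associativity of the stochastic integral
\begin{equation*}
H\cdot S_T=(\varphi\cdot S^i)_{t\wedge\tau_n}-(\varphi\cdot S^i)_{\rho\wedge\tau_n}=\Big((\varphi\cdot S^i)^{\tau_n}_t-(\varphi\cdot S^i)^{\tau_n}_s\Big)\mathbf 1_A .
\end{equation*}
The hypothesis $E_Q[H\cdot S_T]=0$ then gives $E_Q[(\varphi\cdot S^i)^{\tau_n}_t\,\mathbf 1_A]=E_Q[(\varphi\cdot S^i)^{\tau_n}_s\,\mathbf 1_A]$ for all $A\in\mathcal F_s$, which together with the integrability from the previous step says $E_Q[(\varphi\cdot S^i)^{\tau_n}_t\mid\mathcal F_s]=(\varphi\cdot S^i)^{\tau_n}_s$, i.e. $(\varphi\cdot S^i)^{\tau_n}$ is a $Q$-martingale.

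It then remains to assemble a suitable $\psi$. In cases (i) and (iii) of Definition \ref{simple strategies} one has $\tau_n\equiv T$, so $\varphi\cdot S$ itself is a $Q$-martingale (with $Q$-integrable maximal functional) and $\psi:=\varphi$ already works. In case (ii) I would pick constants $a_n>0$ with $\sum_n a_n\,E_Q[((\varphi\cdot S)^{\tau_n})^*_T]<\infty$ and set $\psi:=\varphi\sum_{n\ge 1}a_n I_{]\tau_{n-1},\tau_n]}$ ($\tau_0:=0$); this $\psi$ is strictly positive and predictable, and $\psi\cdot S=\sum_n a_n\big((\varphi\cdot S)^{\tau_n}-(\varphi\cdot S)^{\tau_{n-1}}\big)$ is, componentwise, an $L^1(Q)$-convergent sum of $Q$-martingales dominated by $\sum_n 2a_n\,((\varphi\cdot S)^{\tau_n})^*_T\in L^1(Q)$, hence a uniformly integrable $Q$-martingale in $\scr M^1(Q)$. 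In all cases $\psi>0$ and $\psi\cdot S\in\scr M^1(Q)$, so by \eqref{emery} $S$ is a $Q$-$\sigma$-martingale and, being also in $P_V$, $Q\in\measures$. The only genuinely delicate point is the observation that the integrability estimate behind Lemma \ref{simplesuper} does not rely on $Q$ being a $\sigma$-martingale measure; the rest is bookkeeping, the crucial device being the auxiliary stopping time $\rho$, which turns the ``one-step increment on $A$'' $((\varphi\cdot S^i)^{\tau_n}_t-(\varphi\cdot S^i)^{\tau_n}_s)\mathbf 1_A$ into the terminal value of a legitimate element of $\simple$.
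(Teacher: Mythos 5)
Your proof is correct and follows essentially the paper's own route: the paper likewise tests the hypothesis on one-step elements of $\simple$ (taking $H=I_A\,I_{]s,t]}\varphi$ with the bounded $\mathcal{F}_s$-measurable position $I_A$ in place of your auxiliary stopping time $\rho$) to conclude that $\varphi\cdot S$, resp.\ the stopped processes $S^{\tau_n}$, are $Q$-martingales, and hence that $Q\in\mathcal{M}$. The only cosmetic differences are that the paper leaves the $L^1(Q)$-integrability (your appeal to the Lemma \ref{simplesuper} computation, which indeed uses only $Q\in P_V$) implicit, and that in case (ii) it simply observes that each $S^{\tau_n}$ being a $Q$-martingale makes $S$ a local, hence $\sigma$-, $Q$-martingale, rather than gluing the stopped pieces into a single $\I$-localizing integrand $\psi$ as you do.
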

\begin{proof}
We just need to show $Q\in \mathcal{M}$.
Consider $S\in\SsigU\setminus
\scr{S}^{\Uhat}_{\mathrm{loc}}$, fix any $\I$-localizing
$\varphi$  from  Proposition \ref{sigmaI} and let $S' =
\varphi\cdot S $. For any  $A\in \mathcal{F}_s, s \in [0,T[, t>s $
let $H = I_A  I_{] s , t]}\varphi$, which is in $ \simple$. Since
$H\cdot S = (I_A  I_{] s , t]} )\cdot S'$ and $E_Q[ H\cdot S_T] =
E_Q[ I_A (S'_t-S'_s) ]=0$, for all   $A\in \mathcal{F}_s, s<t$,
$S'$ is a  then $Q$-martingale, and hence
$Q\in \mathcal{M}$. For $S\in \scr{S}^{\Uhat}_{\mathrm{loc}}$ we proceed as above, replacing $\varphi$ with $I_{[0,\tau_n]}$.\qquad
\end{proof}
\section*{Acknowledgements} We would like to thank Jan
Kallsen, Fabio Maccheroni, Mark Owen, Antonis Papapantoleon,  Maurizio Pratelli, Mihai S\^{\i}rbu and two anonymous referees for
helpful comments.

\end{document}